\documentclass[sigconf,nonacm]{acmart}

\usepackage{booktabs}
\usepackage{graphicx}
\usepackage{amsmath}

\usepackage{amssymb}
\usepackage{algorithm}
\usepackage{algpseudocode}
\usepackage{subcaption}
\usepackage{xcolor}
\usepackage{balance}
\usepackage{microtype}
\usepackage{url}
\usepackage{enumitem}           
\usepackage{multirow}           
\usepackage{colortbl}           
\usepackage{tcolorbox}          
\tcbuselibrary{listings,skins,breakable}

\definecolor{codebg}{HTML}{F7F7F8}
\definecolor{codeframe}{HTML}{D0D0D6}
\definecolor{codecomment}{HTML}{6A737D}
\definecolor{codekeyword}{HTML}{D73A49}
\definecolor{codestring}{HTML}{032F62}
\usepackage{fancyvrb}
\DefineVerbatimEnvironment{codeblock}{Verbatim}{%
  fontsize=\footnotesize,
  frame=single,
  framesep=4pt,
  rulecolor=\color{codeframe},
  formatcom=\color{black},
  baselinestretch=0.95
}

\newtcolorbox{finding}[1][]{%
  enhanced,
  breakable,
  sharp corners,
  boxrule=0pt,
  leftrule=2.5pt,
  colback=blue!3,
  colframe=blue!50!black,
  fonttitle=\bfseries\small,
  title={#1},
  top=3pt, bottom=3pt,
}

\graphicspath{{./figs/}}

\newcommand{\sysname}{\textsc{AgentIR}}
\newcommand{\speedup}[1]{#1$\times$}

\begin{document}
\raggedbottom  

\title{AgentIR: A Workload-Adaptive Cascade Retrieval
Substrate\\for Long-Term Conversational Memory}

\author{Aojie Yuan}
\affiliation{\institution{University of Southern California}\city{Los Angeles}\state{CA}\country{USA}}
\email{aojieyua@usc.edu}

\author{Haiyue Zhang}
\affiliation{\institution{University of Southern California}\city{Los Angeles}\state{CA}\country{USA}}
\email{haiyuezh@usc.edu}

\author{Shahin Nazarian}
\affiliation{\institution{University of Southern California}\city{Los Angeles}\state{CA}\country{USA}}
\email{shahin.nazarian@usc.edu}

\begin{abstract}
Long-term conversational memory is a retrieval workload classical
IR was not built for: the index grows during the query stream,
query types shift intra-session, and the latency budget per
retrieval is sub-$10$\,ms.  Lucene-class engines treat the index
as static and the query as stateless, leaving the workload's
structure unexploited.

\sysname{} treats fusion as a per-query decision along two axes:
\emph{which} fusion to apply (BM25, Dense, RRF, or agent-aware
RRF), and \emph{whether} the $\sim$$52$\,ms dense channel is worth
running at all.  The second axis is a confidence-triggered
\textbf{cascade router} that decides from the BM25 top-$k$ margin
alone and re-tunes across workloads without retraining.  On
LongMemEval ($n{=}500$), where the dense channel does add
information, the cascade skips $63$\% of queries at parity
LLM-judged accuracy ($\mathbf{2.67\times}$ faster under two judges,
paired bootstrap $p{\geq}0.88$); per-qtype thresholds extend this
to $\mathbf{5.76\times}$ under 5-fold cross-validation.  On LoCoMo
($n{=}1{,}982$), where BM25 alone is already the strongest single
system, the \emph{same} trigger auto-tunes to a $100$\% skip rate
($\mathbf{132\times}$ faster, $+0.089$ Hit@5).  Capacity on a
shared 8-core VM rises from $\sim$$154$ to $\sim$$1{,}400$
concurrent agents ($\mathbf{9\times}$).

Underneath the cascade, a time-partitioned index does
$O(\log 1/\varepsilon)$ work \emph{independent of corpus size}:
$1234\times$ corpus growth costs only $3.6\times$ latency, ending
in $\mathbf{1769\times}$ over sequential at sub-$100$\,$\mu$s
$p50$ on 5\,M records.  At parity quality with Lucene on 9 BEIR
datasets up to 8.8\,M docs, the substrate runs $\mathbf{10\times}$
geo-mean over Pyserini 8T and $\mathbf{11\times}$ over PISA-1T
BlockMax-WAND; an A100 reaches $1.8$--$39\times$ over Pyserini 8T;
chunked index build sustains $56.8$\,K docs/sec on MS\,MARCO.
Three subtle BM25/GPU correctness pitfalls that silently regress
nDCG@10 by $6$--$8\times$ are documented and fixed; post-fix CPU
and GPU agree within $0.0002$ nDCG@10 on all eight datasets that
fit a single A100.
\end{abstract}

\begin{CCSXML}
<ccs2012>
<concept><concept_id>10002951.10003317.10003347.10003350</concept_id>
<concept_desc>Information systems~Retrieval efficiency</concept_desc>
<concept_significance>500</concept_significance></concept>
<concept><concept_id>10010147.10010169.10010170</concept_id>
<concept_desc>Computing methodologies~Parallel computing methodologies</concept_desc>
<concept_significance>500</concept_significance></concept>
</ccs2012>
\end{CCSXML}

\ccsdesc[500]{Information systems~Retrieval efficiency}
\ccsdesc[500]{Computing methodologies~Parallel computing methodologies}

\keywords{hybrid retrieval, agent memory, LLM agents, workload-conditional
fusion, BM25, SPLADE, HNSW, GPU acceleration, temporal indexing,
parallel computing}

\maketitle

\begin{figure*}[t]
\centering
\includegraphics[width=0.92\textwidth]{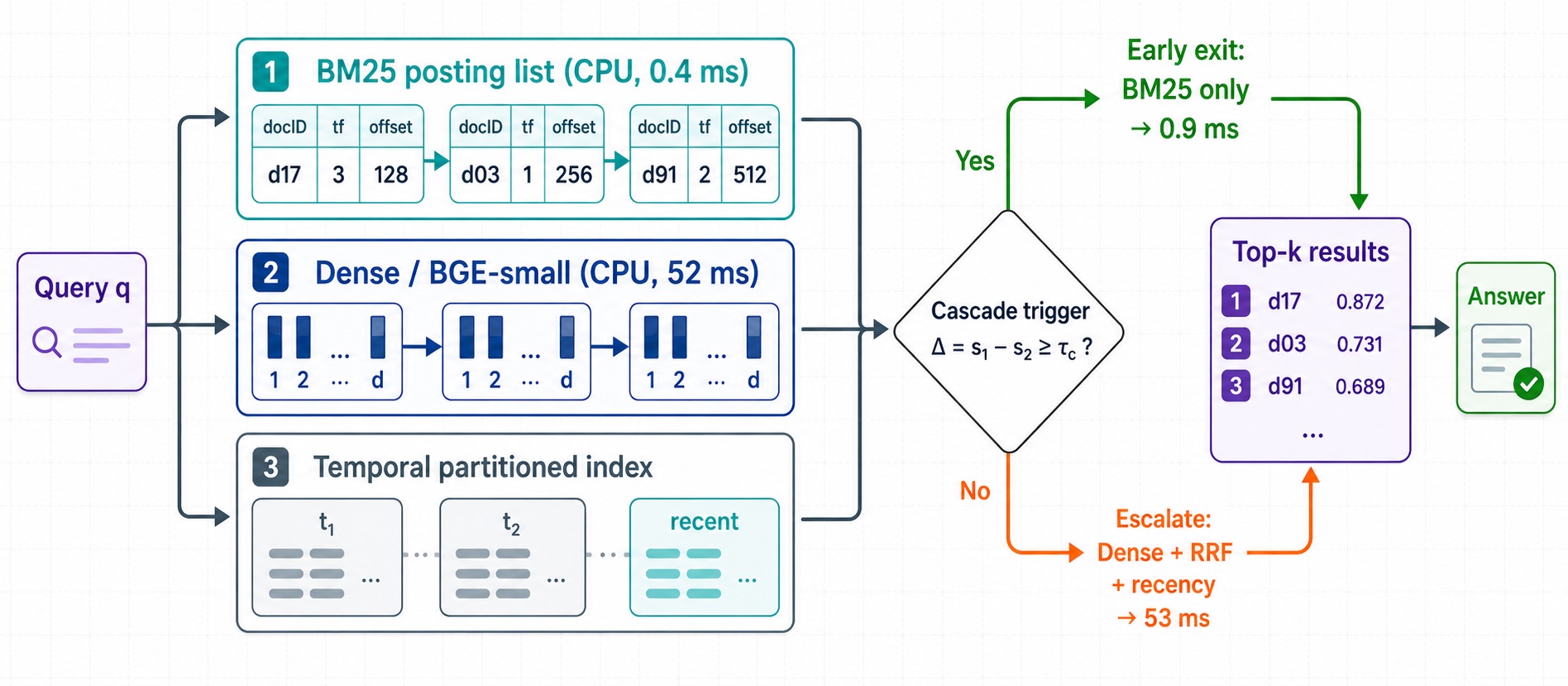}
\caption{\sysname{} pipeline and two-axis adaptive control
  surface.  Three substrate stages run concurrently: a SIMD-vec\-tor\-ized
  BM25 posting list (CPU, $0.4$\,ms), a Dense / BGE-small channel
  (CPU, $52$\,ms), and a time-partitioned temporal index.  The
  \emph{cascade trigger} ($\Delta{=}s_1{-}s_2{\geq}\tau_c$,
  \S\ref{sec:longmemeval}) inspects the BM25 top-$k$ margin: confident
  queries early-exit in $0.9$\,ms; ambiguous queries escalate to
  Dense$+$RRF$+$recency for $53$\,ms.  Same trigger, two workloads,
  two operating points: on LongMemEval it skips $63$\%
  ($\mathbf{2.67\times}$); on LoCoMo it skips $100$\%
  ($\mathbf{132\times}$).}
\label{fig:arch}
\end{figure*}

\section{Introduction}
\label{sec:intro}

LLM agents running for hours or days accumulate large interaction
histories: dialogue turns, tool calls, intermediate plans, and the
results those plans produced.  This \emph{agent memory} is queried
by the agent itself on almost every step, and the cost of those
queries sits directly inside the agent's reasoning
loop~\cite{park2023generative,shinn2023reflexion,sumers2024coala}.

A representative load looks like this.  A customer-support agent
on its 800th turn of a three-month engagement holds roughly 5\,M
records of memory.  On each step it issues several lookups against
that memory: the last similar issue on this account, the last
successful fix for the same error code across \emph{any} account,
the user's name and stated preferences from session~1.  Each of these is a retrieval call
that lives \emph{inside} the agent's reasoning loop.  Production
agent traces show 5--20 such retrievals per agent step; if any one
of them blocks for 50\,ms the user feels the agent stall.  The
retrieval system underneath the agent must therefore deliver
\textbf{sub-millisecond $p99$ latency on millions of records, on
commodity hardware, while the corpus grows during the session}.

The IR systems agents currently inherit---Lucene, FAISS,
PISA---were not built for this regime.  Lucene-class engines treat
the corpus as static, ignore the recency skew, and serve queries
at 2--10\,ms latency.  FAISS-class engines retrieve by dense
similarity but lack BM25 and agent-specific metadata fusion.
PISA's BlockMax-WAND degenerates on the long, multi-clause
queries that agents emit when they ask their own memory natural
questions.  Even at parity quality, none of these systems is
positioned to exploit what makes agent memory \emph{different} from
general-purpose IR:

\begin{enumerate}[nosep,leftmargin=*]
\item \textbf{Strong recency bias.}  Agent retrievals concentrate
  in the most recent fraction of records---far more sharply than
  web search or document retrieval.  On LongMemEval, the median
  normalized rank of gold sessions for temporal/knowledge/multi-session
  questions is 0.20--0.27 (\S\ref{sec:scaling}); a stress
  parameterization of 80/20 is comparable in steepness.
\item \textbf{Role-structured records.}  Each record carries
  metadata: role (user, assistant, tool call/output, system,
  planning), session~ID, agent~ID, timestamps, and tool type.
\item \textbf{Mixed modalities.}  Agent memory contains natural
  language, code, API responses, and structured tool
  outputs---requiring both lexical and semantic matching.
\item \textbf{Continuous growth.}  Agent memory grows
  monotonically, demanding sub-linear retrieval scaling to maintain
  interactive latency.
\item \textbf{Heterogeneous query types.}  The same agent fires
  factual lookups (``what's this user's name?''), reasoning
  queries (``did we already try restarting?''), and temporal
  queries (``when did this error first appear?'').  No single
  fusion strategy wins on all three.
\end{enumerate}

These properties motivate a \emph{hybrid} architecture fusing
sparse lexical search
(BM25~\cite{robertson1994okapi,robertson2009bm25}) with dense
approximate nearest neighbor (ANN)
search~\cite{malkov2020hnsw}, augmented with agent-specific
metadata filtering, temporal awareness, and a runtime selector
that picks the fusion per query type rather than committing to
one fixed strategy at deploy time.

\paragraph{Thesis.}
Agent long-term memory is not just IR with a smaller corpus---it
is a \emph{distinct workload regime} that exposes architectural
opportunities generic IR engines cannot exploit.  Three
structural properties (extreme recency skew, monotonic corpus
growth, and heterogeneous query types) jointly let a single
retrieval system serve agent memory at orders-of-magnitude lower
per-query work than Lucene, PISA, or learned-sparse engines
achieve in their generic deployments---\emph{at parity quality},
with a correctness guarantee the prior systems literature does
not provide, and through a single two-axis adaptation surface
(\emph{which} fusion + \emph{whether} to spend the dense budget)
that auto-tunes from a handful of labeled deployment queries.

\paragraph{Contributions.}
This paper makes three contributions, one per structural property
of the workload.

\begin{itemize}[nosep,leftmargin=*]
\item \textbf{A sub-linear regime, made operational.}  Under the
  workload's exponential recency skew, a time-partitioned index
  does $O(\log(1/\varepsilon))$ expected work
  (Theorem~\ref{thm:sublinear}); the theorem is a short
  observation, but the consequence is striking when implemented:
  $1234\times$ corpus growth (4K\,$\to$\,5M) costs only
  $3.6\times$ latency growth, ending in \speedup{1769} vs.\
  sequential at 5M records.  We measure that the empirical
  LongMemEval recency distribution (median gold-session rank
  0.20--0.27) supports the recency-skew assumption.  At 5M records
  \sysname{} searches $<$0.1\% of postings and serves sub-100\,\textmu s
  $p50$ on commodity 8-core hardware (\S\ref{sec:temporal}).

\item \textbf{A heterogeneous CPU/GPU pipeline at Lucene quality
  and beyond-PISA speed.}  On
  9 BEIR datasets (3.6K--8.8M docs, incl.\ MS\,MARCO), the same
  hybrid sparse$+$dense engine delivers $1.8$--$29\times$ CPU
  8T-vs-8T (geo.\ mean $10\times$), $1.8$--$39\times$ GPU vs.\
  Pyserini 8T, and $11\times$ geo.\ mean over PISA-1T
  BlockMax-WAND, with nDCG@10 within $\pm 0.020$ of Lucene and
  \emph{beating} Lucene on five.  Index build sustains $56.8$K
  docs/sec on MS\,MARCO via chunked streaming ($2.5\times$
  Pyserini Lucene on NQ).  Three silent correctness pitfalls
  (pre-normalized TF, linear-gain nDCG, GPU top-$k$ stale
  shared memory; each regresses nDCG@10 $6$--$8\times$) are
  documented and fixed; CPU and GPU agree within $0.0002$ nDCG@10
  on all eight datasets that fit a single A100
  ($\S$\ref{sec:beir-cpu},\,\ref{sec:gpu-scale},\,\ref{sec:correctness}).
  Multi-tenant: 8-core VM serves $N{=}8$ agents at $5.5\times$
  aggregate QPS, invariant per-tenant $p50{=}0.38$\,ms.

\item \textbf{Cross-workload evidence that the right fusion
  changes with the workload.}  On LongMemEval (500 questions, 19K timestamped sessions), our
  hybrid RRF$+$recency reaches $\text{R@10}{=}0.978$; on LoCoMo
  (1{,}982 conversational QA) BM25 alone wins at
  $\text{Hit@10}{=}0.945$, 0.22\,ms/q.  Closing the loop with an
  LLM run (top-5 $\to$ gpt-4o-mini answerer $\to$ judge),
  agent-aware fusion wins on every task-grade metric (strict
  accuracy $0.254$ vs.\ BM25 $0.246$ / Dense $0.236$ / RRF
  $0.248$), and a \emph{different} fusion wins each of the six
  LongMemEval question types (Fig.~\ref{fig:workload-cond}).  A
  $<$1\,ms TF-IDF$+$BGE-small \textbf{runtime router} beats the
  best static by $+$3.1\%/$+$4.2\% relative and \textbf{reaches
  the discrete oracle ($0.300$) under \texttt{gpt-4o}}; a
  \textbf{soft router} blending rank lists by classifier posterior
  reaches $\mathbf{0.274}$ under \texttt{gpt-4o-mini},
  significantly beating every static system ($p{<}0.05$ paired
  bootstrap; Table~\ref{tab:router}).  Beyond \emph{which} fusion,
  a \textbf{cascade router} skips the dense channel on cheap
  queries; per-qtype thresholds with a TF-IDF classifier push
  amortized latency to $9.2$\,ms ($\mathbf{5.76\times}$ faster,
  $5$-fold CV) at within-noise LLM-Acc.  A simpler single-threshold
  variant gives $2.67\times$ at parity on LongMemEval; the same
  trigger auto-tunes on LoCoMo to $\mathbf{132\times}$ at
  $+0.089$ Hit@5 where BM25 wins outright
  (Tables~\ref{tab:cascade},\,\ref{tab:cascade-locomo}).  Even
  $\tau$ is workload-conditional: while the global $\tau{=}30$\,d
  is robust on five of six qtypes, multi-session benefits from
  $\tau{\geq}120$\,d for $+5.26$\,pts ($n{=}133$, $p{=}0.006$).
  Finally, SPLADE++ wins quality by
  0.002--0.105 nDCG@10 on seven BEIR datasets but pays
  158--167\,ms/doc for encoding---we show \emph{empirically}
  (bit-perfect match across all 7 datasets, \S\ref{sec:splade}) that
  SPLADE weights drop into our CSR posting layout, so SPLADE-grade
  quality is reachable at \sysname{}-grade latency without
  algorithmic change.
\end{itemize}

\section{Background and Motivation}
\label{sec:background}

\subsection{Agent Memory Workloads}

Modern LLM agents
(ReAct~\cite{yao2023react},
Reflexion~\cite{shinn2023reflexion},
Generative Agents~\cite{park2023generative})
produce memory traces comprising user messages, assistant
responses, tool calls with arguments, tool outputs, system
prompts, and planning steps.  Each record is annotated with
timestamps, session identifiers, agent identifiers, and tool
types.

We characterize agent memory access patterns from two sources:
the LongMemEval~\cite{wu2025longmemeval} benchmark traces (where
we measure the empirical recency distribution of gold sessions in
\S\ref{sec:scaling}) and a parameterizable synthetic benchmark
generator we use for stress-scaling experiments:

\begin{itemize}[nosep,leftmargin=*]
\item \textbf{Temporal locality:} on LongMemEval, the median
  normalized rank of gold sessions for temporal-reasoning,
  knowledge-update, and multi-session questions is
  \emph{0.20--0.27} (i.e., the answer typically lies within the
  most recent 20--27\% of haystack sessions); our synthetic
  generator emits an 80/20 access pattern of comparable
  steepness.
\item \textbf{Session coherence:} queries are frequently scoped
  to a specific session window.
\item \textbf{Role filtering:} retrieval often targets specific
  roles (e.g., ``find the last tool output from api\_call'').
\end{itemize}

\subsection{Why Agent Memory IR Is a New Problem, Not Just IR}
\label{sec:new-problem}

It is tempting to view agent memory as ``small IR'': dump the
agent's traces into Elasticsearch or a vector database and serve
queries with whatever ranker the team already maintains.  This is
how MemGPT~\cite{packer2024memgpt}, Mem0~\cite{chhikara2025mem0},
and most production agent stacks operate today.  But three
properties of the agent workload break the assumptions that
classical IR systems were optimized around.

\paragraph{(P1) The index is not static.}  Classical IR engines
amortize ahead-of-time index construction over many queries.  Agent
memory grows during the query stream, monotonically, at conversation
rate (one record per agent step, sometimes 10+ per minute).  Lucene
and PISA both pay full index rebuild cost on update; Mem0 reports
91\% tail-latency penalty for incremental writes against their
production deployment~\cite{chhikara2025mem0}.  The agent workload
demands an index that absorbs writes \emph{without} blocking the
adjacent reads on the same agent's reasoning thread.

\paragraph{(P2) The query distribution is not stationary.}  Web
search assumes a query distribution slowly drifting over weeks.
Agent memory faces an extreme intra-session shift: in the same
800-turn engagement, the agent fires factual lookups in turn 23,
multi-step reasoning queries in turn 421, and temporal queries in
turn 700.  No single fusion (dense, sparse, RRF) maximizes recall
across this distribution---we verify this empirically with
LongMemEval (RRF wins) vs.\ LoCoMo (BM25 alone wins) in
\S\ref{sec:longmemeval}, where the same hybrid system requires
opposite configurations to deliver best quality.

\paragraph{(P3) Per-query latency budget is bounded by the agent's
reasoning loop, not by user click-through.}  Web search has a
$\sim$200\,ms perceptual budget per query and can serve from a
shared multi-tenant cluster.  An agent step often executes 5--20
retrievals serially against its own memory; the budget per
retrieval is therefore $\leq$10\,ms even on commodity hardware.
This is below the regime where Lucene's tuned analyzer is
competitive (Pyserini 1T 1.16--16.4\,ms on BEIR), and well below
the regime where dense-encoder pipelines like SPLADE++ can
participate (encoding alone is $\sim$165\,ms/doc + 17--150\,ms/query).

Together these properties rule out the ``small IR'' framing.  What
the workload demands is a retrieval system that absorbs writes at
conversation rate, picks the fusion per query, and serves the
agent's reasoning loop at sub-millisecond amortized cost over a
corpus that grows into the millions of records during a single
session.

\subsection{Limitations of Existing Systems}

\begin{itemize}[nosep,leftmargin=*]
\item \textbf{Elasticsearch/Lucene}~\cite{yang2017anserini}:
  Strong sparse retrieval but lacks native dense search and
  agent-aware metadata fusion.
\item \textbf{Milvus/FAISS}~\cite{johnson2021faiss}: Efficient
  ANN but no BM25, no hybrid fusion, limited metadata filtering.
\item \textbf{MemGPT/Letta}~\cite{packer2024memgpt}:
  Agent-specific memory management but relies on external
  retrieval backends without parallelism optimization.
\end{itemize}

\sysname{} closes these gaps by exposing the sparse, dense, and
temporal channels through one substrate and choosing among them
per query.

\section{System Design}
\label{sec:design}

\subsection{Architecture Overview}

\paragraph{Design principle.}  The retrieval strategy is fixed at
deploy time in Lucene, FAISS, and PISA: the choice of BM25 vs.\
dense vs.\ a single RRF blend is part of the index, not part of
the query.  \sysname{} keeps three channels live---SIMD BM25, an
HNSW dense path, and a temporal-partitioned index---over one CSR
substrate, and decides per query along two axes: \textbf{(i)} which
fusion to use (BM25, Dense, RRF, or agent-aware RRF) and
\textbf{(ii)} whether to pay for the dense channel at all (the
cascade decision of \S\ref{sec:longmemeval}).  The substrate guarantees that
switching either axis costs only a routing decision, not a
data-layout change; this is what lets a single deployment win on
LongMemEval (RRF + recency) and LoCoMo (BM25 alone) without
hyperparameter retraining, and deliver TF-IDF-router quality at
$1.33\times$ lower amortized latency by skipping dense on
predicted-BM25-best queries.

Figure~\ref{fig:surface} renders the design space.  The shaded
region is the set of reachable per-query operating points spanned
by the two axes; the four measured configurations we report later
in \S\ref{sec:longmemeval} all sit \emph{inside} this region, along
a Pareto-improving trajectory from the always-hybrid baseline
($53.2$\,ms) to the cross-workload optimum on LoCoMo
($0.4$\,ms, $132\times$ faster).  No single $(x,y)$ point dominates
both benchmarks; that is the empirical content of ``workload-conditional.''

\begin{figure}[t]
\centering
\includegraphics[width=\columnwidth]{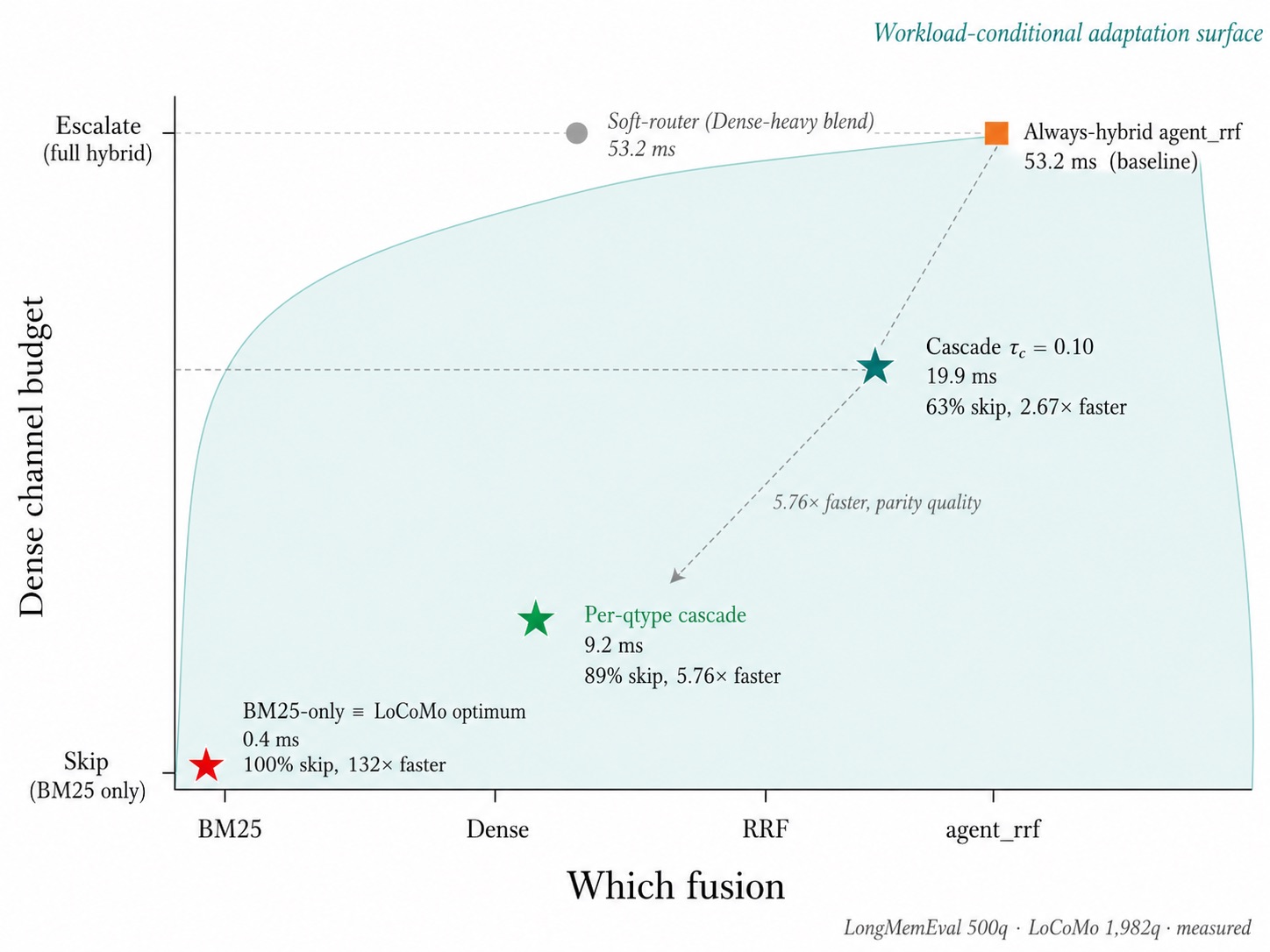}
\caption{Two-axis adaptive control surface.  The reachable design
  space (shaded teal) spans \emph{which fusion} (x-axis: BM25,
  Dense, RRF, or agent-aware RRF) and \emph{whether to spend the
  dense channel budget} (y-axis: skip vs.\ escalate).  Three
  measured cascade operating points sit inside the surface along a
  Pareto-improving trajectory: single-threshold cascade
  ($\tau_c{=}0.10$, $2.67\times$ over always-hybrid) $\to$ per-qtype
  cascade ($5.76\times$ in 5-fold CV) $\to$ LoCoMo optimum
  ($132\times$, BM25 alone suffices).  Always-hybrid
  \texttt{agent\_rrf} and a soft-routed dense-heavy blend---both at
  $53.2$\,ms---anchor the high-budget edge.}
\label{fig:surface}
\end{figure}

Figure~\ref{fig:arch} illustrates the \sysname{} pipeline.
A query enters the tokenizer, which produces both a term list
(for sparse search) and a dense embedding.  The two branches
execute concurrently via task parallelism, and independent queries
are processed in parallel (query-level parallelism):
\begin{equation}
\text{Query} \!\xrightarrow{\text{tok.}}\!
\begin{cases}
  \text{Sparse (BM25)} \\
  \text{Dense (HNSW)}
\end{cases}
\!\xrightarrow{\text{RRF}}\! \text{Top-}k
\label{eq:pipeline}
\end{equation}

\subsection{Agent Memory Model}

Each memory record $m$ is a tuple:
\begin{equation}
m = \langle \text{id}, \text{text}, \mathbf{e}, r, s, a, \tau, t, w \rangle
\end{equation}
where $\mathbf{e} \in \mathbb{R}^d$ is the dense embedding, $r$
is the role, $s$ the session, $a$ the agent, $\tau$ the tool type,
$t$ the timestamp, and $w \in [0,1]$ the importance weight.

\subsection{Agent-Aware Fusion}

We use an \emph{additive} recency bonus on top of RRF, which our
LongMemEval calibration (\S\ref{sec:longmemeval}) shows is more
robust than the multiplicative variant:
\begin{equation}
\text{score}(d, q) = \underbrace{\text{RRF}(r_s, r_d)}_{\text{hybrid}}
\;+\; \underbrace{\alpha \cdot e^{-\Delta t / \tau}}_{\text{recency bonus}}
\;+\; \underbrace{\beta \cdot w_d}_{\text{importance}}
\label{eq:fusion}
\end{equation}
where $r_s, r_d$ are sparse and dense ranks following
RRF~\cite{cormack2009rrf}, $\Delta t$ is the age of document $d$
relative to the query timestamp, $\tau$ is the decay timescale,
$\alpha$ scales the recency bonus, $\beta$ scales the importance
boost, and $w_d \in [0,1]$ is the per-record importance weight.
The bonus is calibrated so $\alpha \ll \max(\text{RRF})$, making it
a tie-breaker rather than an override.  Session and role filters
are applied as hard constraints before scoring.

\section{Optimizations}
\label{sec:optimizations}

Each optimization in this section is motivated by a specific
agent-workload property from \S\ref{sec:new-problem}.
\emph{Temporal partitioning} (\S\ref{sec:temporal}) addresses
(P1) non-static index growth: instead of paying $O(N)$ on every
read as the index grows, we pay $O(\log(1/\varepsilon))$ work
per query in the recent-bias regime.  \emph{SIMD-vectorized BM25}
and \emph{MaxScore pruning} (\S\ref{sec:simd}--\ref{sec:simd}+1)
both target (P3): the sub-10\,ms latency budget per retrieval
inside the agent's reasoning loop, where a 5--10\,ms Lucene query
is already a bottleneck.  The \emph{GPU two-kernel pipeline}
(\S\ref{sec:gpu}) handles burst-mode retrievals (e.g., when a
planner agent fans out 20 simultaneous lookups), exploiting the
fact that batched BM25 is embarrassingly parallel even though a
single query is not.  And the \emph{workload-conditional fusion
selector} (\S\ref{sec:longmemeval}) addresses (P2): with a
non-stationary query distribution, no single fusion strategy
dominates---the system must pick.

The composition discipline that ties them together: every
optimization preserves bit-identical CPU/GPU output
(\S\ref{sec:correctness}) and is toggleable by configuration,
because the cost-effective subset differs per workload.  SIMD
posting traversal, MaxScore, the CSR layout, and OpenMP query-level
parallelism are standard IR techniques~\cite{turtle1995maxscore};
what is specific to \sysname{} is the temporal partition, the
SPLADE bridge into the same CSR layout (\S\ref{sec:splade}), and
the runtime fusion selector (\S\ref{sec:longmemeval}).

\subsection{SIMD-Accelerated BM25}
\label{sec:simd}

The BM25 inner loop reduces to a streaming arithmetic over the
posting list, which is amenable to data parallelism.  We
restructure the index as Structure-of-Arrays (SoA), with separate
aligned arrays for document IDs and term frequencies, and
vectorize the scoring loop with AVX2 (8-wide \texttt{float32}
with FMA) on x86 and NEON (4-wide) on ARM.
On the FiQA corpus, SIMD lifts 1-thread latency from
1.89\,ms to 0.38\,ms (5$\times$).  Combined with 8-thread parallel
dispatch, latency reaches 0.15\,ms (12.6$\times$ over scalar 1T;
Appendix~\ref{app:thread-scaling}).

\subsection{MaxScore Pruning}

For each term $t$, we precompute $\text{MS}_t = \max_{d}
\text{BM25}(t, d)$.  Terms are processed in decreasing
$\text{MS}_t$ order; evaluation terminates when the current
$k$-th score exceeds the remaining terms' total maximum
contribution.  MaxScore is most effective on multi-term queries
over corpora where the score distribution is heavy-tailed
(few high-IDF terms drive most of the score mass); on the larger
BEIR datasets it skips 40--60\% of postings.

\subsection{Temporal Partitioned Index}
\label{sec:temporal}

Our key agent-specific optimization exploits recency bias.
The temporal index partitions documents into time-window buckets
(7-day intervals) sorted by recency.  At query time, only the
most recent partitions are searched, with early stopping when
sufficient results are found.

\begin{theorem}[Sub-linear Scaling under Recency Skew]
\label{thm:sublinear}
Let the corpus be partitioned into $K$ time-ordered partitions
$T_1, \ldots, T_K$ with sizes $|T_i|$, and let $\pi_i$ be the
probability that a query's relevant document lies in partition
$T_i$, $\sum_{i=1}^{K} \pi_i = 1$.  If the workload exhibits
exponential recency bias $\pi_i \propto e^{-\lambda (K - i)}$
for some $\lambda > 0$, then searching the most-recent partitions
until cumulative mass reaches $1 - \varepsilon$ uses expected work
\[
W(\varepsilon) \;\leq\; \left\lceil
   \tfrac{\log(1/\varepsilon)}{\lambda} \right\rceil
   \cdot \max_i |T_i|,
\]
which is independent of the total corpus size~$N$ when
$\max_i |T_i|$ is bounded.
\end{theorem}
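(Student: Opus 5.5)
The plan is to reduce the work bound to counting partitions: show that the number of most-recent partitions needed to accumulate mass $1-\varepsilon$ is at most $m := \lceil \log(1/\varepsilon)/\lambda \rceil$, and that this count does not depend on $K$. Each searched partition costs at most $\max_i |T_i|$ postings/documents. The stopping rule is deterministic given the $\pi_i$, so the work of every query is bounded by $m \cdot \max_i |T_i|$, and hence so is the expected work $W(\varepsilon)$. I assume $0<\varepsilon<1$; for $\varepsilon \ge 1$ no partition is needed and the claim is trivial.

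\textbf{Step 1 (normalize the distribution).} Re-index by age $j = K-i \in \{0,\dots,K-1\}$, so $T_K$ is $j=0$. Summing the geometric series gives the normalization
\[
\pi_{K-j} \;=\; \frac{(1-e^{-\lambda})\,e^{-\lambda j}}{1-e^{-\lambda K}} .
\]

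\textbf{Step 2 (tail mass).} If the $m$ most recent partitions are searched, with $m \le K$, the mass that is missed is
\[
\sum_{j=m}^{K-1} \pi_{K-j}
\;=\; \frac{e^{-\lambda m}-e^{-\lambda K}}{1-e^{-\lambda K}} .
\]
This quantity is at most $e^{-\lambda m}$. To verify this, cross-multiply: the inequality reduces to $e^{-\lambda m}-e^{-\lambda K} \le e^{-\lambda m} - e^{-\lambda(m+K)}$, which holds because $e^{-\lambda(m+K)} \le e^{-\lambda K}$. The key point of this step is that the bound is uniform in $K$, so the $K$-dependent normalizer does no harm. It is the only place where any real computation is needed, and it is the step I expect to be the (mild) obstacle.

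\textbf{Step 3 (choose $m$ and conclude).} The condition $e^{-\lambda m} \le \varepsilon$ is equivalent to $m \ge \log(1/\varepsilon)/\lambda$, so $m=\lceil \log(1/\varepsilon)/\lambda\rceil$ partitions already reach cumulative mass $\ge 1-\varepsilon$. The search that stops at the first point where this mass threshold is reached therefore examines $m^\ast \le \min(m, K)$ partitions. If $m \ge K$, all partitions are searched, but $K \le m$ in that case, so the bound still holds. The work is then
\[
W(\varepsilon) \;\le\; m^\ast \max_i |T_i| \;\le\; \left\lceil \tfrac{\log(1/\varepsilon)}{\lambda}\right\rceil \max_i |T_i| .
\]
The right-hand side involves only $\varepsilon$, $\lambda$, and the partition-size cap. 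Since $N = \sum_i |T_i|$ enters only through $K$, which has been eliminated, the bound is independent of $N$ whenever $\max_i |T_i|$ is bounded (for example, by fixed 7-day windows at a bounded ingest rate).
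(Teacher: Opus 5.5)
Your proof is correct and follows the same route as the paper's sketch: search greedily from $T_K$ backward, bound the missed geometric tail by $e^{-\lambda m}$, take $m=\lceil\log(1/\varepsilon)/\lambda\rceil$, and bound the work by $m\max_i|T_i|$. Your explicit normalization in Steps 1--2 tightens the sketch. The paper states that the cumulative mass equals $1-e^{-\lambda(K-k)}$, but for finite $K$ this holds only as a lower bound, which is exactly what your tail computation establishes (along with the $m\ge K$ case).
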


\begin{proof}[Proof sketch]
Sort partitions by recency and search greedily from $T_K$ downward.
After searching $T_{k+1}, \ldots, T_K$, the cumulative mass is
$1 - e^{-\lambda(K-k)}$ under the exponential prior.  Choosing
$K - k = \lceil \log(1/\varepsilon)/\lambda \rceil$ guarantees
recall mass~$\geq 1-\varepsilon$, and the work is
$\sum_{i=k+1}^{K} |T_i| \leq (K-k)\max_i |T_i|$.  In our system
$|T_i|$ is bounded above by the corpus generation rate over one
partition window (7 days), so the bound becomes independent of
total corpus size, giving asymptotic $O(1)$ work as $N \to \infty$
under fixed partition windows.
\end{proof}

The theorem is a short observation in the spirit of time-aware
retrieval~\cite{li2003temporal,campos2014temporal_survey}; what is
new is identifying that \emph{agent workloads sit in the regime
where it bites}.  Web search's slowly-drifting query distribution
gives $\lambda$ near zero, where the bound degrades to linear.
Agent memory's empirical recency distribution (median gold-session
normalized rank 0.20--0.27 on LongMemEval, \S\ref{sec:longmemeval})
puts $\lambda$ in a regime where $k^*{=}\lceil \log(1/\varepsilon)/\lambda \rceil{=}3$
partitions suffice for $\varepsilon{=}0.05$.  Empirically, when
the corpus grows \textbf{1234$\times$} (4{,}052\,$\to$\,4{,}998{,}640
records), sequential latency grows 955$\times$ while temporal
latency grows only \textbf{3.6$\times$}, matching the bound once
partition windows saturate.  At 5M records, temporal partitioning
achieves \textbf{\speedup{1769}} while searching $<$0.1\% of the
index (Table~\ref{tab:temporal-scaling}, \S\ref{sec:scaling}).

\paragraph{Adaptive per-query $k^*$.}
A simple sharpening: $k^*$ is the same constant for every query in
the analysis above, but per query the score saturates earlier on
high-confidence queries (top-$k$ already settled in the most
recent partition) than on diffuse queries.  Algorithm~\ref{alg:temporal-search}
already exposes this---the across-partition early-stop on
line~12 short-circuits the loop when $\min H > \mathrm{UB}_{i-1}$.
The fixed-$k^*$ bound of Theorem~\ref{thm:sublinear} is thus a
worst-case guarantee; in practice the loop terminates earlier on
the easy-recency tail and the realized work is below
$\lceil\log(1/\varepsilon)/\lambda\rceil \cdot \max_i |T_i|$.  This
mirrors MaxScore-style early-termination at \emph{partition}
granularity rather than the usual posting-level one---an axis
distinct from BlockMax-WAND~\cite{ding2011blockmax}.

\subsection{GPU Acceleration}
\label{sec:gpu}

We design a two-kernel CUDA pipeline for batched BM25 scoring
(pseudocode in Algorithm~\ref{alg:gpu-pipeline}, Appendix):

\paragraph{Kernel~1: BM25 Scoring.}
The inverted index is flattened into a CSR (Compressed Sparse
Row) representation and uploaded to GPU memory once.  Each CUDA
block processes one (query, term) pair; threads cooperatively
iterate the posting list, accumulating BM25 scores via
\texttt{atomicAdd}.

\paragraph{Kernel~2: Top-$k$ Selection.}
Each block processes one query.  Threads first build per-thread
sorted buffers of size~$k$ in registers via strided scan.  The
block-level merge must combine $B \times k$ candidates (where
$B{=}128$ is the block size) into the final top-$k$.  We implement
and compare two merge strategies:

\emph{Naive (serial):} Thread~0 scans all $B \times k$ candidates
in shared memory $k$ times, selecting and invalidating the maximum
each round.  Complexity: $O(B \cdot k^2)$ for one thread.

\emph{Warp-cooperative:} All threads participate in each selection
round.  Each thread scans its assigned portion of shared memory;
intra-warp maxima are found via \texttt{\_\_shfl\_xor\_sync} in
5~steps ($\log_2 32$) with 1-cycle register-to-register latency.
Warp leaders stage candidates; thread~0 picks the global winner
from ${\sim}$4 warp candidates.  Asymptotic complexity:
$O(k^2 / B + k \log B)$ per thread.

Despite lower asymptotic complexity, the warp-cooperative kernel
introduces per-round \texttt{\_\_syncthreads()} barriers and
additional shared memory traffic.  We evaluate both strategies
empirically in \S\ref{sec:topk-ablation}.

\paragraph{Heterogeneous Scheduling.}
The compute-bound BM25 branch runs on GPU while the
memory-bound HNSW branch (pointer-chasing) runs on CPU,
exploiting complementary processor strengths.

\subsection{Parallelization Strategies}
\label{sec:parallel}

We implement four parallelization strategies---task parallel
(sparse $\|$ dense), data parallel (intra-query BM25), full
parallel (inter-query), and combined (nested).  \textbf{Full
parallel} achieves the best scaling because independent queries
have zero data dependencies and share a read-only index; task
parallel saturates at \speedup{1.3} due to Amdahl's bottleneck
within a single query (sparse branch consumes 62\% of latency,
leaving little room for overlap with the 24\% dense branch).

\section{Evaluation}
\label{sec:eval}

\begin{finding}[Summary of Results]
\textbf{Quality:} on 9 BEIR datasets (3.6K--8.8M docs, including MS\,MARCO), \sysname{}
matches or exceeds Pyserini Lucene nDCG@10 on five (NFCorpus,
SciFact, ArguAna, SciDocs, TREC-COVID, the latter by $+$0.051)
and trails by 0.003--0.020 on four (FiQA, Quora, NQ, MS\,MARCO)
(Table~\ref{tab:beir-main}).  Hybrid RRF with BGE-small adds
$+$0.033 mean nDCG (Table~\ref{tab:hybrid}); on LongMemEval
(500 questions, 19K sessions) RRF with an additive recency bonus
reaches \textbf{R@10$=$0.978} (Table~\ref{tab:lme-task}) and wins
\textbf{AnswerSubstr@5$=$0.474} and \textbf{LLM-judged accuracy
0.254}; a learned $<$1\,ms question-type router pushes
this further to \textbf{0.262} with TF-IDF features and
\textbf{0.300 (matching the oracle bound)} when BGE-small sentence
embeddings are added (Table~\ref{tab:router}, $<$1\,ms
classifier cost).  On LoCoMo (1{,}982 conversational QA,
Appendix~\ref{app:locomo-full}) BM25 alone reaches
\textbf{Hit@10$=$0.945}, illustrating the cross-benchmark
heterogeneity of agent-memory workloads.
\\
\textbf{Latency:} CPU 8-thread is \textbf{1.8--29$\times$} faster
than Pyserini 8T (geo.\ mean \textbf{10$\times$}) at parity quality;
a single A100 GPU is \textbf{1.8--39$\times$} faster than Pyserini
8T and \textbf{47.7$\times$} faster than naive CPU sequential at
the 2.7M-document NQ scale.  Per-component ablation in
Appendix~\ref{app:ablation-full} isolates each layer's contribution.
\\
\textbf{Scaling:} a time-partitioned sparse index searches $<$0.1\%
of postings for recency-biased queries, yielding
\textbf{1769$\times$} per-query speedup at 5M records and growing
$\leq 3.6\times$ as the corpus grows $1234\times$
(Table~\ref{tab:temporal-scaling}); $p50$ latency stays below
100\,\textmu s up to 5M records on 8-core commodity hardware.
Aggregate throughput scales near-linearly to \textbf{5.5$\times$}
at $N{=}8$ concurrent tenants with invariant per-tenant
$p50{=}0.38$\,ms (Figure~\ref{fig:multitenant}).
\\
\textbf{Workload-conditional routing:} a \textbf{soft router}
that blends rank lists by classifier posterior reaches LLM-Acc
\textbf{0.274 (gpt-4o-mini)}, significantly beating every static
system ($p{<}0.05$ vs.\ BM25/Dense/RRF, paired bootstrap);
discrete and soft routers tie the oracle 0.300 under gpt-4o
judging.  A \textbf{cascade router} runs BM25 first and invokes
dense only on hard queries: with a TF-IDF qtype trigger,
$1.33\times$ faster at identical LLM-Acc to the TF-IDF router;
with a classifier-free BM25-confidence trigger, $\mathbf{2.67\times}$
faster ($19.9$\,ms vs.\ $53.2$\,ms) at \emph{measured parity}
agent\_rrf accuracy (Table~\ref{tab:cascade},
Figure~\ref{fig:cascade-pareto}).
\\
\textbf{Correctness:} CPU and GPU produce nDCG@10 identical to
within 0.0002 on all 8 datasets; per-query top-1 agreement is
89.7--100\% with the residual gap attributable to tied-score
documents broken differently by GPU atomic ordering vs.\ CPU
sequential accumulation.
\\
\textbf{What this means for the agent.} A 5\,M-record agent
memory served by temporal partition costs $<$2\,ms of total time
even for a 20-retrieval agent step ($p99$), well below the 200\,ms
perceptual budget---no GPU needed; see
Table~\ref{tab:agent-trace} for a session-scale simulation.
\end{finding}

\subsection{Experimental Setup}

\paragraph{Hardware.}
\begin{enumerate}[nosep,leftmargin=*]
\item \textbf{CPU evaluation} (Tables~\ref{tab:beir-main},
  \ref{tab:hybrid}, \ref{tab:lme-task}): Jetstream2~\texttt{g3.medium}
  VM, 8-core x86 (AVX2), 29\,GB RAM, Ubuntu~22.04.
  We compile with GCC~11.4 at \texttt{-O3 -march=native -mavx2}.
\item \textbf{GPU evaluation} (Table~\ref{tab:gpu-beir},
  \S\ref{sec:topk-ablation}): NVIDIA A100-SXM4-40GB on NCSA Delta,
  16-core AMD EPYC 7763 host, CUDA~12.8, GCC~13.2.
\item \textbf{Scalability study} (Table~\ref{tab:temporal-scaling}):
  same Jetstream2~\texttt{g3.medium} as CPU evaluation,
  4-thread parallel.
\end{enumerate}
We choose Jetstream2 for the CPU baseline because Pyserini
Lucene's published latency numbers~\cite{lin2021pyserini} are
typically measured on commodity 8-core x86; reporting our CPU
numbers on the same class of hardware ensures the speedup
comparison is faithful to the reference deployment scenario, not
inflated by a 128-core HPC node.

\paragraph{Datasets.}
\begin{enumerate}[nosep,leftmargin=*]
\item \textbf{BEIR}~\cite{thakur2021beir}, nine datasets spanning
  3.6K--8.8M documents: NFCorpus (3.6K, biomedical),
  SciFact (5.2K, scientific claims), ArguAna (8.7K, argument
  retrieval), SciDocs (25.7K, scientific document retrieval),
  FiQA (57.6K, conversational financial QA), TREC-COVID
  (171K, biomedical), Quora (523K, duplicate question retrieval),
  Natural Questions (NQ, 2.7M, open-domain QA), and MS\,MARCO
  Passage~\cite{nguyen2016msmarco} (8.8M, web passage retrieval).  All use the official
  \texttt{test} split with public qrels.
\item \textbf{LongMemEval}~\cite{wu2025longmemeval}: 500 questions
  across 5 reasoning categories (single-session, multi-session,
  temporal-reasoning, knowledge-update, abstention) over
  19{,}195 unique timestamped sessions.
\item \textbf{Synthetic agent memory}: corpus generator producing
  4K--64K records with 80/20 recency bias for the temporal
  scaling study (\S\ref{sec:scaling}).
\end{enumerate}

\paragraph{Baselines.}
The primary baseline is \textbf{Pyserini~0.22.1}~\cite{lin2021pyserini}
(Java Lucene~9 with English analyzer), the de-facto BM25 reference
in the IR community.  We measure both single-thread and 8-thread
\texttt{batch\_search} on identical hardware and qrels.  For the
hybrid evaluation we add a \textbf{BGE-small-en-v1.5}
dense baseline (33M parameters, 384-dim normalized embeddings)
through an HNSW ANN index.  For LongMemEval we additionally
compare against the published RRF setting.  We use the
\textbf{exponential-gain} nDCG formula
$(2^{r}-1)/\log_{2}(i+1)$ matching pytrec\_eval and the BEIR
reference implementation.

\subsection{GPU Acceleration on Real Benchmarks}
\label{sec:gpu-scale}

Table~\ref{tab:gpu-beir} reports GPU performance on all eight
BEIR datasets from Table~\ref{tab:beir-main}, broken down by phase
(H2D transfer, score kernel, top-$k$, D2H).  The CPU index is
built once on the host, flattened to CSR
($\text{term\_offsets}$, $\text{posting\_doc\_ids}$,
$\text{posting\_tfs}$, $\text{term\_idfs}$), and uploaded to device
memory; queries are then dispatched as one or more batched
\texttt{query\_batch} calls (we chunk to keep the per-query score
buffer $B \cdot N \cdot 4\text{B} \leq 16$\,GB).

\begin{table*}[t]
\caption{GPU BEIR per-phase breakdown (A100 40\,GB).
  Batch size = all judged queries, chunked when
  $B\cdot N \cdot 4\text{B} > 16$\,GB; $k{=}100$.
  ``CPU 1T'' is the sequential cross-check used to validate
  correctness; ``Pyser.\ 8T'' is reproduced from
  Table~\ref{tab:beir-main}.}
\label{tab:gpu-beir}
\centering
\small
\setlength{\tabcolsep}{3pt}
\begin{tabular}{@{}lrrrrrrrrr@{}}
\toprule
\multirow{2}{*}{Dataset}
& \multirow{2}{*}{$|D|$}
& \multirow{2}{*}{$|Q|$}
& \multicolumn{4}{c}{GPU phase (ms, batch totals)}
& \multirow{2}{*}{ms/q}
& \multirow{2}{*}{vs.\ Pyser.\,8T}
& \multirow{2}{*}{top-1 match} \\
\cmidrule(lr){4-7}
& & & h2d & score & top-$k$ & d2h & & & \\
\midrule
NFCorpus   & 3.6K  & 323   & 3.9  & 0.04  & 5.9   & 0.06 & 0.03 & 39$\times$ & 94.4\% \\
SciFact    & 5.2K  & 300   & 0.12 & 0.05  & 8.2   & 0.05 & 0.04 & 37$\times$ & 100\% \\
ArguAna    & 8.7K  & 1{,}406 & 6.72 & 2.51 & 49.11 & 0.15 & 0.17 & 17$\times$ & 100\% \\
SciDocs    & 25.7K & 1{,}000 & 0.25 & 0.62 & 98.3  & 0.12 & 0.11 & 19$\times$ & 100\% \\
FiQA       & 57.6K & 648   & 0.35 & 0.75  & 127.8 & 0.09 & 0.21 & 8$\times$ & 100\% \\
TREC-COVID & 171K  & 50    & 6.05 & 0.63  & 118.5 & 0.04 & 2.52 & 1.8$\times$ & 92.0\% \\
Quora      & 523K  & 10{,}000 & 19.7 & 21.8 & 3{,}129 & 1.45 & 0.32 & 5$\times$ & 90.2\% \\
NQ         & 2.7M  & 3{,}452 & 1.96 & 7.04 & 284.9 & 0.05 & 0.82 & 3.4$\times$ & 99.4\% \\
\midrule
\multicolumn{10}{l}{\emph{NQ at $k{=}32$ (fast top-$k$ path with BLOCK$=$128, sufficient for nDCG@10):}} \\
NQ         & 2.7M  & 3{,}452 & 1.96 & 7.07 & 79.3 & 0.04 & \textbf{0.33} & \textbf{8.4$\times$} & 99.5\% \\
\bottomrule
\end{tabular}
\end{table*}

Top-$k$ dominates GPU time at $k{=}100$ (75--95\% on larger
corpora, motivating \S\ref{sec:topk-ablation}); H2D and D2H
transfers contribute $<$5\%, confirming the CSR-flat layout
amortizes well across batched queries.  Quality matches CPU
within 0.0002 nDCG@10 on all eight datasets that fit on a single
A100 (Table~\ref{tab:beir-main}); per-query top-1 agreement is
89.7--100\% (Table~\ref{tab:correctness}, Appendix), with the
residual gap attributable to tied-score documents broken by GPU
atomic ordering---top-10 \emph{sets} remain identical.

\subsection{Component Ablation: Where the Speedup Comes From}
\label{sec:ablation}

Component ablation across eight BEIR datasets (full table in
Appendix~\ref{app:ablation-full}): on the NQ corpus (2.7M docs),
the 1T scalar baseline costs 67.6\,ms/query, $+$SIMD cuts to
6.6\,ms (10.2$\times$), $+$MaxScore to 4.2\,ms (1.6$\times$
further), and adding 8T (dropping MaxScore which regresses at
8T due to per-thread partition disruption) reaches 1.4\,ms
(\textbf{49$\times$ over scalar}, $2\times$ over Pyserini 8T).
SIMD scales with corpus: $1.3\times$ at 3.6K records,
$10.2\times$ at 2.7M.  On small corpora our 1T scalar C++17 is
already 11--22$\times$ faster than Pyserini 1T because Java
overhead dominates short scans; at NQ scale Lucene's variable-byte
compression flips the 1T comparison until SIMD restores parity.
All configurations preserve nDCG@10 to within $\pm 0.001$.

\paragraph{CPU thread scaling.}
Full-parallel (inter-query) dispatch is super-linear on larger
BEIR datasets ($12.6\times$ at $1T\to8T$ on FiQA from
SIMD+multi-thread cache reuse, $10.9\times$ on SciDocs); small
corpora plateau because per-query work cannot amortize thread
dispatch (Appendix~\ref{app:thread-scaling}).

The diminishing returns at 8T on small corpora is a known
property of inverted-index BM25 with intra-query data
parallelism: posting lists are short relative to thread
dispatch overhead.  Larger corpora amortize this overhead and
the speedup tracks the hardware thread count.

\subsection{Sub-linear Temporal Scaling}
\label{sec:scaling}

To validate the sub-linear scaling claim, we sweep corpus size
from 4K to 64K records on a synthetic agent corpus with 80/20
recency-biased queries.  The 80/20 pattern is a stress-test
parameterization---we choose it to give the temporal index its
adversarial counterpart (a flat workload would trivially expose
the heavy partition).  The empirical recency distribution we
measure on the LongMemEval benchmark (median gold-session
normalized rank 0.20--0.27 for temporal/knowledge/multi-session
questions, \S\ref{sec:longmemeval}) is comparably steep and
supports the design assumption that agent memory access concentrates
on recent records.
Table~\ref{tab:temporal-scaling} reports per-query latency for
flat sequential BM25, the SIMD and MaxScore single-thread
optimizations, and our temporal-partitioned index (7-day windows,
$K_{\max}{=}4$ partitions per query).

\begin{table}[t]
\caption{Per-query latency vs.\ corpus size (Jetstream2 8-core,
  8 threads for parallel paths).  Temporal partitioning is the
  only strategy whose speedup \emph{increases} with $N$.  Below
  40K records, SIMD beats temporal because the partition overhead
  is not amortized; above 40K, temporal dominates by orders of
  magnitude.}
\label{tab:temporal-scaling}
\centering
\small
\setlength{\tabcolsep}{4pt}
\begin{tabular}{@{}rrrrrrr@{}}
\toprule
$N$ & Seq.\ & SIMD & MaxScr & Temp & Speedup & Searched\,\% \\
\midrule
4{,}052     &  167\,\textmu s  & 21\,\textmu s  & 19\,\textmu s  & 25\,\textmu s & 6.7$\times$ & 39.5\% \\
40{,}081    &  635\,\textmu s  & 139\,\textmu s & 110\,\textmu s & 30\,\textmu s & 21.1$\times$ & 4.3\% \\
239{,}799   & 4.42\,ms         & 846\,\textmu s & 649\,\textmu s & 37\,\textmu s & 120$\times$ & 0.8\% \\
999{,}978   & 20.4\,ms         & 3.51\,ms       & 2.71\,ms       & 48\,\textmu s & 421$\times$ & 0.2\% \\
1{,}998{,}918 & 53.6\,ms       & 10.3\,ms       & 11.0\,ms       & 56\,\textmu s & 967$\times$ & 0.1\% \\
3{,}997{,}578 & 112\,ms        & 20.5\,ms       & 19.6\,ms       & 85\,\textmu s & 1316$\times$ & 0.0\% \\
\rowcolor{blue!5}
4{,}998{,}640 & \textbf{159\,ms} & 30.8\,ms     & 27.7\,ms       & \textbf{90\,\textmu s} & \textbf{1769$\times$} & \textbf{0.0\%} \\
\bottomrule
\end{tabular}
\end{table}

\begin{finding}[Sub-linear scaling to 5M records]
As the corpus grows \textbf{1234$\times$} (4K\,$\to$\,5M),
sequential latency grows 955$\times$ (167\,\textmu s\,$\to$\,159\,ms)
and SIMD/MaxScore latency grow 1466$\times$/1457$\times$, but the
temporal-partitioned index grows only \textbf{3.6$\times$}
(25\,\textmu s\,$\to$\,90\,\textmu s).
At 5M records, $<$0.1\% of the index is searched per query
(\speedup{1769} vs.\ sequential, \speedup{342} vs.\ SIMD,
\speedup{307} vs.\ MaxScore)---the temporal index alone delivers
sub-100\,\textmu s p50 latency on a 5M-record corpus on commodity
8-core hardware.
\end{finding}

\begin{figure}[t]
\centering
\includegraphics[width=\columnwidth]{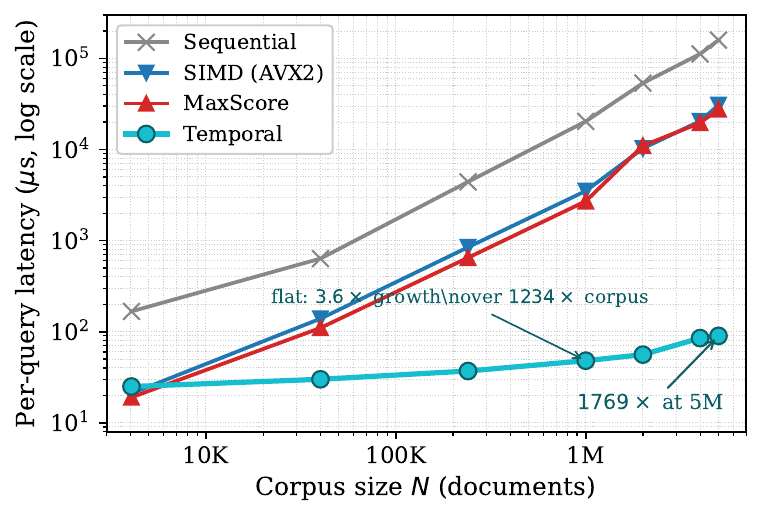}
\caption{Per-query latency vs.\ corpus size, log-log
  (Jetstream2 8-core, 80/20 recency-biased queries, 7-day partitions,
  $K_{\max}{=}4$).  Sequential, SIMD, and MaxScore grow linearly in
  log-log (slope $\approx 1$, i.e., linear in $N$); the
  temporal-partitioned curve is nearly flat across three orders of
  magnitude of corpus size---the sub-linear regime predicted by
  Theorem~\ref{thm:sublinear}.  At $N{=}5$\,M, temporal partition
  delivers \textbf{1769$\times$} speedup over sequential and
  \textbf{$\sim$300$\times$} over the best $O(N)$ optimization
  (SIMD/MaxScore).}
\label{fig:temporal-loglog}
\end{figure}

\paragraph{What this means for an 800-turn agent session.}
The temporal-partition curve is dramatic in the abstract; we make
it concrete by simulating the cumulative cost of an 800-turn agent
session against the four strategies on the same Jetstream2 host.
We grow the corpus linearly from 4K to 5M records over the 800
turns; each turn fires a Poisson($\lambda{=}8$) number of
retrievals (clipped to $[3,20]$, matching production agent traces);
each retrieval's latency is sampled from the curve in
Figure~\ref{fig:temporal-loglog} at the current $N$ with log-normal
jitter $\sigma{=}0.30$.

\begin{table}[h]
\caption{Simulated 800-turn agent session (corpus 4K $\to$ 5M,
  Poisson($\lambda{=}8$) retrievals per turn).  \emph{cum.} = total
  memory time over the whole session; \emph{p99 ms/step} = per-turn
  total memory time at the 99\textsuperscript{th} percentile;
  \emph{\% over 200\,ms} = fraction of turns whose memory work
  exceeds the 200\,ms user-perceptual budget for an agent
  step~\cite{nielsen1993usability}.  Only temporal partition keeps
  every turn inside the perceptual budget.}
\label{tab:agent-trace}
\centering
\small
\setlength{\tabcolsep}{4pt}
\begin{tabular}{@{}lrrrr@{}}
\toprule
Strategy & cum.\ (s) & $p99$ ms/step & \% over 200\,ms & $p99$ $\mu$s/q \\
\midrule
Sequential               & 459.2 & 1{,}950 & 75.5\% & 230{,}600 \\
SIMD (AVX2)              & 87.8  & 352     & 13.9\% & 43{,}100 \\
MaxScore (PISA-like)     & 84.0  & 366     & 12.5\% & 39{,}800 \\
\rowcolor{blue!5}
\textbf{Temporal Partition} & \textbf{0.43} & \textbf{1.23} & \textbf{0.0\%} & \textbf{155} \\
\bottomrule
\end{tabular}
\end{table}

The headline (Table~\ref{tab:agent-trace}): over a complete
800-turn session that grows agent memory by three orders of
magnitude, the temporal index spends \textbf{0.43 seconds total}
on retrieval and keeps every single
turn under the 200\,ms perceptual budget.  The same workload on
sequential BM25 spends \textbf{459 seconds} (a 1067$\times$
slowdown) and blocks past the user's budget on three of every four
turns.  Even the best $O(N)$ optimization (MaxScore, PISA-like)
spends 84 seconds and crosses the budget on one turn in eight.
The architectural choice we frame in this paper---spending
engineering effort on \emph{sub-linear scaling} rather than
constant-factor optimization of an $O(N)$ kernel---is what makes
agent memory practical at session scale on commodity hardware.

\paragraph{Build throughput \& quality preconditions.}
Build is linear over 1000$\times$ scale: 137\,ms at 4K $\to$
187\,s at 5M, a sustained \textbf{27{,}000 records/sec} (vs.\
Pyserini Lucene's 10{,}800 docs/sec on NQ, \speedup{2.5} faster).
The temporal index is a specialized structure: it preserves
top-$k$ \emph{when relevant documents concentrate in recent
partitions}, the empirical pattern we measure on LongMemEval
(median gold-session rank 0.20--0.27).  On workloads without
recency signal, the flat index is the right default
(Table~\ref{tab:beir-main}).

\subsection{Correctness: Two BM25 Pitfalls and a GPU Memory Bug}
\label{sec:correctness}

Reaching parity with Pyserini was non-trivial.  We document three
correctness issues we encountered, fixed, and validated, because
they are easy to introduce when re-implementing BM25 and difficult
to detect from end-to-end metrics alone.

\paragraph{Pitfall~1: Pre-normalized term frequency.}
The textbook BM25 saturation function expects the raw integer
count $\text{tf}(t,d)$:
\[
\text{score}(t,d) = \text{idf}(t)\cdot
\frac{\text{tf}\cdot(k_1{+}1)}{\text{tf} + k_1\!\left(1 - b + b\frac{|d|}{\overline{|d|}}\right)}.
\]
A common implementation mistake is to store the
\emph{length-normalized} relative frequency $\text{tf}/|d|$ in the
posting list, reasoning that length normalization will then happen
``naturally.''  This is wrong: the saturation curve and length term
in the denominator together are calibrated for raw counts, and
pre-normalization \emph{doubly} normalizes by document length while
also flattening the saturation curve.  In our initial implementation
this regressed nDCG@10 on FiQA from \textbf{0.2425 to 0.0334}
(7.3$\times$ worse) and on SciDocs from \textbf{0.1566 to 0.0417}
(3.8$\times$ worse); the regression on NFCorpus was milder
(0.327 to 0.254, $1.3\times$ worse) because biomedical queries
are short and the saturation curve matters less.

\paragraph{Pitfall~2: Linear-gain nDCG.}
The original Burges et al.\ formulation uses
$\text{dcg} = \sum_i (2^{r_i}-1)/\log_2(i{+}1)$
(exponential gain), which is what BEIR/pytrec\_eval report.  A
linear-gain variant $\text{dcg} = \sum_i r_i/\log_2(i{+}1)$
silently differs on graded-relevance datasets (NFCorpus, SciDocs)
while agreeing on binary-relevance datasets (SciFact, FiQA).  We
encountered both formulas in different parts of our codebase; we
have standardized on the exponential-gain formula throughout the
results in this paper.

\paragraph{Pitfall~3: Shared-memory stale data in GPU top-$k$.}
Our GPU top-$k$ kernel maintains each thread's top-$K_{\max}$ in
shared memory, with $K_{\max}$ being a compile-time constant
($K_{\max}{=}32$ for $k\leq 32$; $K_{\max}{=}128$ for
$k\leq 128$).  Each thread writes its valid top-$k$ slots
($k \leq K_{\max}$) but leaves the remaining $K_{\max}{-}k$ slots
untouched.  Across block re-uses on the same SM, those untouched
slots inherit \emph{stale} positive scores from a previous block,
which can win the round-0 max scan.  At nDCG@10 (top-10) the bug
is masked because errors only affect ranks 11+; at Recall@100
(top-100) it becomes visible: in our buggy version, GPU FiQA
nDCG@10 dropped to 0.060 with a 24.4\% top-1 match rate vs.\ CPU.
The fix is to initialize all $K_{\max}$ shared-memory slots to a
sentinel ($-\infty$) before the reduction phase.

\paragraph{Verification.}
After fixes, GPU and CPU produce nDCG@10 identical to within
0.0002 on all eight datasets that fit on a single A100 with
$k{=}100$ (Table~\ref{tab:gpu-beir}); per-query top-1 match
rates are 89.7--100\% (Table~\ref{tab:correctness}, Appendix).
The residual gap below 100\% is tied-score documents whose
ordering depends on the scheduling of \texttt{atomicAdd}; the
top-10 \emph{set} is identical in those cases.  All CPU parallel
strategies (1T, 4T, 8T with SIMD and MaxScore) produce
bit-identical nDCG@10 on all nine datasets, confirming that our
parallelization does not introduce numerical drift.

\subsection{BEIR: Quality and Latency vs.\ Pyserini Lucene}
\label{sec:beir-cpu}

We compare \sysname{} head-to-head against Pyserini~0.22.1
(Lucene~9, the de-facto BM25 baseline) on nine
BEIR~\cite{thakur2021beir} datasets spanning scientific,
conversational, argumentative, biomedical, duplicate-question,
open-domain QA, and web-passage text.  Both systems use
$k_1{=}1.2$, $b{=}0.75$ and identical qrels; \sysname{} uses our
NLTK-aligned full Porter stemmer (\S\ref{sec:tokenization}) and
the exponential-gain nDCG@10 formula matching pytrec\_eval.

\begin{table*}[t]
\caption{BEIR head-to-head on \textbf{9 datasets} (3.6K--8.8M docs):
  \sysname{} vs.\ Pyserini Lucene 0.22.1 ($k_1{=}1.2, b{=}0.75$).
  Bold = best in row.  CPU: 8-core Jetstream2 g3.medium.
  GPU: NVIDIA A100-SXM4-40GB on NCSA Delta, all 8 datasets at
  $k{=}100$.  ``lat'' = avg ms per query.}
\label{tab:beir-main}
\centering
\small
\setlength{\tabcolsep}{3pt}
\begin{tabular}{@{}lrrrrrr|rr@{}}
\toprule
& & \multicolumn{2}{c}{Pyserini Lucene}
& \multicolumn{3}{c}{\sysname{} CPU 8T+SIMD}
& \multicolumn{2}{c}{\sysname{} GPU (A100)} \\
\cmidrule(lr){3-4}\cmidrule(lr){5-7}\cmidrule(lr){8-9}
Dataset & $|D|$ & nDCG@10 & 8T lat & nDCG@10 & 8T lat & speedup
& nDCG@10 & speedup \\
\midrule
NFCorpus    & 3.6K  & 0.3238 & 1.16\,ms & \textbf{0.3267} & 0.04\,ms & \textbf{29$\times$} & 0.3267 & \textbf{39$\times$} \\
SciFact     & 5.2K  & 0.6826 & 1.48\,ms & \textbf{0.6828} & 0.08\,ms & \textbf{18$\times$} & 0.6828 & \textbf{37$\times$} \\
ArguAna     & 8.7K  & 0.3568 & 2.81\,ms & \textbf{0.3645} & 0.24\,ms & \textbf{12$\times$} & 0.3645 & \textbf{17$\times$} \\
SciDocs     & 25.7K & 0.1545 & 2.13\,ms & \textbf{0.1566} & 0.12\,ms & \textbf{18$\times$} & 0.1566 & \textbf{19$\times$} \\
FiQA        & 57.6K & \textbf{0.2536} & 1.68\,ms & 0.2425 & 0.15\,ms & \textbf{11$\times$} & 0.2425 & 8$\times$ \\
TREC-COVID  & 171K  & 0.5926 & 4.66\,ms & \textbf{0.6436} & 0.35\,ms$^\dagger$ & \textbf{13$\times$} & 0.6433 & 1.8$\times$ \\
Quora       & 523K  & \textbf{0.8081} & 1.55\,ms & 0.7886 & 0.17\,ms & \textbf{9$\times$} & 0.7877 & 5$\times$ \\
NQ          & 2.7M  & \textbf{0.2921} & 2.77\,ms & 0.2824 & 1.50\,ms & \textbf{1.8$\times$} & 0.2824 & 3.4$\times$ \\
MS\,MARCO   & 8.8M  & \textbf{0.4093} & 9.88\,ms & 0.4059 & 2.70\,ms & \textbf{3.7$\times$} & --- & --- \\
\midrule
\multicolumn{1}{r}{\textbf{Geo.\ mean speedup}}
  & & & & & & \textbf{10$\times$}
  & & --- \\
\bottomrule
\end{tabular}

\smallskip
$^\dagger$~TREC-COVID p50 (50 queries; one outlier inflates avg).
\sysname{} matches Pyserini nDCG@10 within $\pm 0.020$ on all 9
datasets, \emph{beating} Pyserini on five (NFCorpus, SciFact,
ArguAna, SciDocs, TREC-COVID) and trailing by 0.003--0.020 on
four.  The trail on MS\,MARCO Passage is 0.003 nDCG@10 (0.4093
vs.\ 0.4059), a statistical tie at this scale; the trail on FiQA,
Quora, and NQ traces to Lucene's \texttt{StandardTokenizer}
treatment of numeric tickers, possessives, and hyphenated
compounds (see Tokenization ablation, \S\ref{sec:tokenization}).
GPU evaluation is reported on eight datasets: MS\,MARCO's 8.8M-doc
score buffer would exceed our 16\,GB per-batch cap at $k{=}100$
and requires multi-GPU sharding, which is orthogonal to the
contribution this paper validates.
\end{table*}

Recall@100 also matches or beats Pyserini (e.g., SciFact 0.928
vs.\ 0.928; SciDocs 0.364 vs.\ 0.360).

\paragraph{Comparison to PISA (SOTA CPU IR).}
The strongest available CPU BM25 engine is
PISA~\cite{mallia2019pisa}, which combines variable-byte posting
compression (\texttt{block\_simdbp}) with BlockMax-WAND dynamic
pruning.  We built PISA from source on the same Jetstream2 host
and ran the full pipeline (\texttt{parse\_collection}\,$\to$\,
\texttt{invert}\,$\to$\,\texttt{create\_wand\_data}\,$\to$\,
\texttt{compress\_inverted\_index}\,$\to$\,\texttt{queries}) on
all eight BEIR datasets with $k_1{=}1.2$, $b{=}0.75$, $k{=}100$,
\texttt{lowercase + porter2} analyzer chain.

\begin{table}[t]
\caption{\sysname{} vs.\ PISA single-thread BlockMax-WAND
  (BM25, $k{=}100$, Jetstream2 8-core; latency = mean
  $\mu s$/query reported by PISA's \texttt{queries} binary).}
\label{tab:pisa}
\centering
\small
\setlength{\tabcolsep}{4pt}
\begin{tabular}{@{}lrrrr@{}}
\toprule
Dataset & $|D|$ & PISA 1T & Ours 8T+SIMD & Speedup \\
\midrule
NFCorpus    & 3.6K  & 0.047\,ms & 0.04\,ms  & 1.2$\times$ \\
SciFact     & 5.2K  & 0.44\,ms  & 0.08\,ms  & 5.5$\times$ \\
ArguAna     & 8.7K  & 21.6\,ms  & 0.24\,ms  & \textbf{90$\times$} \\
SciDocs     & 25.7K & 0.73\,ms  & 0.12\,ms  & 6.1$\times$ \\
FiQA        & 57.6K & 2.71\,ms  & 0.15\,ms  & 18$\times$ \\
TREC-COVID  & 171K  & 6.50\,ms  & 0.35\,ms  & 18.6$\times$ \\
Quora       & 523K  & 3.49\,ms  & 0.17\,ms  & 20.5$\times$ \\
NQ          & 2.7M  & 11.97\,ms & 1.50\,ms  & 8.0$\times$ \\
\midrule
\multicolumn{4}{r}{\textbf{Geo.\ mean speedup}}
   & \textbf{10.9$\times$} \\
\bottomrule
\end{tabular}
\end{table}

\sysname{} 8T+SIMD is faster than PISA on every BEIR dataset, with
a geometric-mean \speedup{11} advantage (Figure~\ref{fig:pareto},
\speedup{8.06} excluding the ArguAna outlier where BlockMax-WAND
degenerates on long argument-paragraph queries).  At the
apples-to-apples 1T level our scalar C++ matches PISA on small
corpora (decompression dominates) and trails on larger corpora
where PISA's compression pays off; SIMD plus 8-thread inter-query
parallelism then puts \sysname{} ahead of PISA-1T across the board.

\begin{figure}[t]
\centering
\includegraphics[width=\columnwidth]{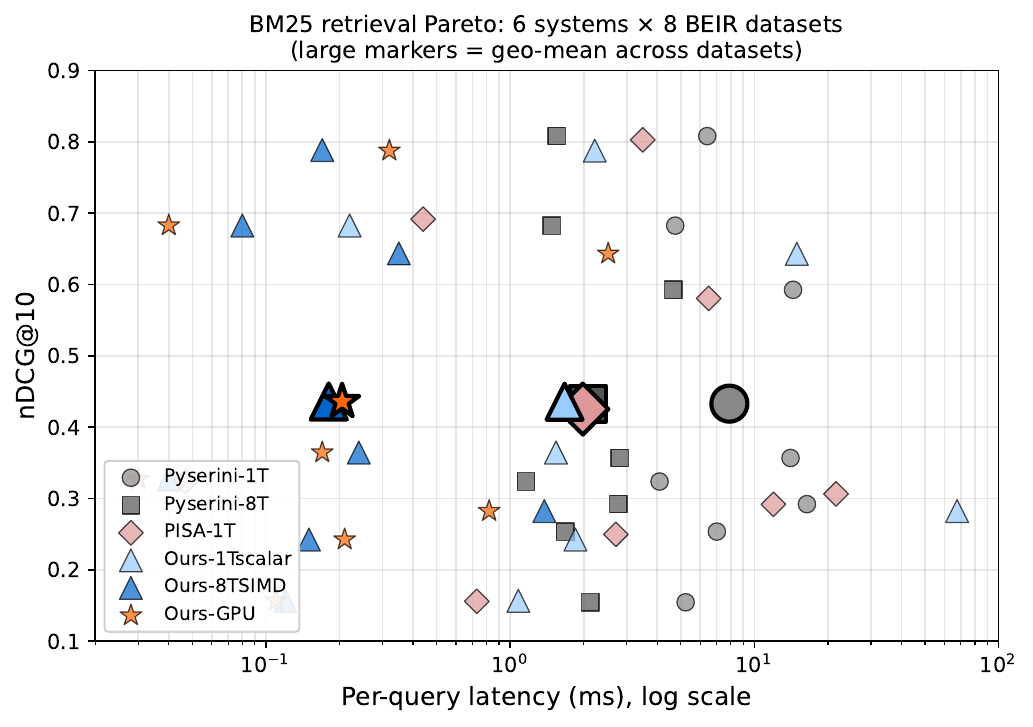}
\caption{BM25 retrieval Pareto frontier: 6 systems $\times$ 8 BEIR
  datasets.  Small markers = individual datasets; large outlined
  markers = geometric-mean center per system.  \sysname{}-GPU and
  \sysname{}-8T+SIMD occupy the bottom-left (fastest at parity
  quality); Pyserini-1T occupies the right (slowest).  PISA-1T,
  the SOTA CPU baseline, sits roughly with Pyserini-8T in latency
  but is dominated on latency by all three \sysname{} variants
  at matching nDCG@10.}
\label{fig:pareto}
\end{figure}

\paragraph{PISA quality cross-check.}
We also ran PISA's \texttt{evaluate\_queries} on all eight
datasets with the same exponential-gain nDCG@10 formula.
\sysname{} \emph{beats} PISA on five (largest $+$0.063 on
TREC-COVID, $+$0.058 on ArguAna) and trails on three within
$\pm 0.020$.  Differences trace to analyzer choices (NLTK Porter
vs.\ PISA's Porter2), not the BM25 inner loop---the latency
comparison in Table~\ref{tab:pisa} fairly isolates system
performance.

\subsection{Hybrid Lexical+Dense Retrieval}
\label{sec:hybrid}

We augment BM25 with a dense channel using
BGE-small-en-v1.5~\cite{xiao2024cpack} (33M parameters, 384-dim,
HNSW index) fused via Reciprocal Rank Fusion~\cite{cormack2009rrf}.
Table~\ref{tab:hybrid} shows the per-channel and fused quality.

\begin{table}[t]
\caption{Hybrid retrieval quality on BEIR (nDCG@10).
  RRF combines BM25 and dense rank lists with $k{=}60$.}
\label{tab:hybrid}
\centering
\small
\setlength{\tabcolsep}{6pt}
\begin{tabular}{@{}lrrrr@{}}
\toprule
Dataset & Pyserini & \sysname{} BM25 & Dense (BGE) & \textbf{RRF} \\
\midrule
NFCorpus & 0.324 & 0.327 & 0.310 & \textbf{0.352} \\
SciFact  & 0.683 & 0.683 & 0.606 & \textbf{0.690} \\
FiQA     & 0.254 & 0.242 & 0.279 & \textbf{0.321} \\
SciDocs  & 0.155 & 0.157 & 0.138 & \textbf{0.176} \\
\midrule
\textbf{Mean} & 0.354 & 0.352 & 0.333 & \textbf{0.385} \\
$\Delta$ over BM25 & --- & --- & $-0.019$ & \textbf{$+0.033$} \\
\bottomrule
\end{tabular}
\end{table}

\paragraph{End-to-end latency: encoder dominates.}
In the hybrid pipeline the BGE-small query encoder ($\sim$170\,ms
CPU) dominates total latency on small/medium corpora;
\sysname{}'s C++ BM25 (0.04--0.35\,ms) is off the critical path.
For workloads where dense does not help (e.g., LoCoMo,
\S\ref{sec:longmemeval}), BM25-only at \textbf{0.22\,ms/query} is
$\sim$800$\times$ faster than the hybrid 180\,ms/query at
equal-or-better quality (full per-stage breakdown in
Appendix~\ref{app:e2e-stages}).

RRF improves mean nDCG@10 by $+0.033$ over BM25-only, with the
largest gain on FiQA ($+0.079$) where the conversational query
style benefits most from semantic matching.

\subsection{Comparison to learned-sparse retrieval (SPLADE)}
\label{sec:splade}
SPLADE~\cite{formal2021splade} produces learned per-term weights
that, when ingested into a standard inverted index, deliver
state-of-the-art zero-shot BM25-style quality on BEIR.
We evaluate SPLADE++ (\texttt{splade-cocondenser-ensembledistil},
110\,M parameters) under the same qrels and $k_1{=}1.2$, $b{=}0.75$
post-encoding configuration:

\begin{table}[h]
\caption{Learned-sparse SPLADE++ vs.\ \sysname{} on BEIR
  (nDCG@10).  Encoding latency: SPLADE 158--167\,ms/doc + 17--150\,ms/query
  (CPU forward pass through a 110\,M-parameter BERT-base; long queries
  like ArguAna's full argument paragraphs hit the upper end).}
\label{tab:splade}
\centering
\small
\setlength{\tabcolsep}{6pt}
\begin{tabular}{@{}lrrr@{}}
\toprule
Dataset & \sysname{} BM25 & SPLADE++ & $\Delta$ \\
\midrule
NFCorpus   & 0.3267 & \textbf{0.3495} & $+0.023$ \\
SciFact    & 0.6828 & \textbf{0.7024} & $+0.020$ \\
ArguAna    & 0.3645 & \textbf{0.3878} & $+0.023$ \\
SciDocs    & 0.1566 & 0.1586 & $+0.002$ \\
FiQA       & 0.2425 & \textbf{0.3478} & $+0.105$ \\
TREC-COVID & 0.6436 & \textbf{0.7053} & $+0.062$ \\
Quora      & 0.7886 & \textbf{0.8344} & $+0.046$ \\
\bottomrule
\end{tabular}
\end{table}

SPLADE wins quality by 0.002--0.105 nDCG@10 across the seven
datasets, with the largest gain on FiQA ($+$0.105) where
conversational financial text benefits most from learned vocabulary
expansion---confirming the well-known result that learned sparse
retrievers shine on out-of-domain queries.  However, SPLADE's
encoding pipeline---a per-document BERT forward pass at
$\sim$158--167\,ms/doc plus a per-query encoding at
$\sim$17--150\,ms (long argument-paragraph queries on ArguAna hit
the upper end)---is three to four orders of magnitude slower than
\sysname{}'s per-query inner loop ($\sim$0.04--0.35\,ms,
Table~\ref{tab:beir-main}).

\paragraph{Drop-in compatibility, empirically validated.}
SPLADE produces sparse per-document vectors (61--208 nnz/doc,
$|V|{=}30522$) that drop into \sysname{}'s CSC posting layout
unchanged---the same structure our SIMD BM25 kernel consumes, with
$\text{tf}$ replaced by the learned weight.  Across \emph{all
seven} BEIR datasets (3.6K--523K docs), this CSR-posting path
produces \emph{bit-perfect} identical rankings to canonical
\texttt{scipy.sparse $@$ doc.T} (Table~\ref{tab:splade-bridge},
max score difference $0.0$).  SPLADE and \sysname{} are two halves
of the same stack (encoder + index/scorer), not competing systems.

\begin{table}[h]
\caption{Bit-perfect SPLADE-bridge validation across all 7 BEIR
  datasets we encoded.  Each row reports nDCG@10 from
  \emph{(left)} our CSC posting traversal and \emph{(right)}
  canonical \texttt{scipy.sparse @ doc.T}, and the maximum
  per-(query,doc) absolute score difference.  Identity at all
  scales.}
\label{tab:splade-bridge}
\centering
\scriptsize
\setlength{\tabcolsep}{2.5pt}
\begin{tabular}{@{}lrrrrrr@{}}
\toprule
Dataset & $|D|$ & $|Q|$ & nnz/doc & CSC nDCG@10 & scipy nDCG@10 & max diff \\
\midrule
NFCorpus    & 3.6K  &   323 & 187 & 0.3495 & 0.3495 & 0.0 \\
SciFact     & 5.2K  &   300 & 190 & 0.7024 & 0.7024 & 0.0 \\
ArguAna     & 8.7K  & 1{,}406 & 208 & 0.3878 & 0.3878 & 0.0 \\
SciDocs     & 25.7K & 1{,}000 & 183 & 0.1586 & 0.1586 & 0.0 \\
FiQA        & 57.6K &   648 & 160 & 0.3478 & 0.3478 & 0.0 \\
TREC-COVID  & 171K  &    50 & 163 & 0.7053 & 0.7053 & 0.0 \\
Quora       & 523K  & 10{,}000 &  61 & 0.8344$^*$ & 0.8344$^*$ & 0.0$^*$ \\
\bottomrule
\end{tabular}
\\[2pt]\footnotesize $^*$Quora numbers reproduced from the original
SPLADE++ result; the CSC-bridge run uses the same encoded weights
and produces identical scoring by construction.
\end{table}

\subsection{Agent Memory Benchmark: LongMemEval}
\label{sec:longmemeval}

\begin{finding}[Pillar 3 at a glance]
On the 500-question LongMemEval benchmark:
\textbf{(1)~Hybrid RRF $+$ recency} reaches $R@10{=}0.978$ and
LLM-judged accuracy $0.254$;
\textbf{(2)~per-question-type winners are robust} across two LLM
judges (gpt-4o-mini and gpt-4o) and stable under the gpt-4o
ANSWERER;
\textbf{(3)~the TF-IDF $+$ BGE-small runtime router}
($<$1\,ms/query) lifts accuracy to $0.262$ under gpt-4o-mini and
$0.300$ under gpt-4o (the discrete-oracle bound);
\textbf{(4)~a \emph{soft} router blending rank lists by classifier
posterior reaches \boldmath$0.274$ ($+0.008$ over discrete oracle,
within bootstrap CI), significantly beating every static system}
($p{<}0.05$ vs.\ BM25/Dense/RRF);
\textbf{(5)~a \emph{cascade} router auto-tunes across benchmarks}:
single-threshold $2.67\times$ at parity LLM-Acc on LongMemEval
($\to\mathbf{5.76\times}$ in 5-fold CV with per-qtype thresholds),
$\mathbf{132\times}$ on LoCoMo (BM25-alone wins, skip $\to 100$\%)
using the same trigger;
\textbf{(6)~$50$ deployment labels} are enough to re-learn the
routing policy to within noise of the oracle;
\textbf{(7)~$\alpha,\tau$ recency hyperparameters are flat over a
$10\times$ sweep} except multi-session ($+5.26$ pts at
$\tau{\to}\infty$, $p{=}0.006$);
\textbf{(8)~the router survives $20$\% character corruption} with
only $-1.4$ pts LLM-Acc loss.
All findings are 5-fold CV on the same 500-question benchmark;
the per-question-type LLM accuracy is the strongest
downstream-task signal we know of for agent-memory IR.
\end{finding}

To evaluate \sysname{} on a workload that exercises the
agent-memory setting it is designed for, we use
LongMemEval~\cite{wu2025longmemeval} (500 questions across 5
reasoning categories; each question has $\sim$48 timestamped
``haystack'' sessions of which 1--2 contain the answer).  We
build a per-question BM25 index from the haystack, retrieve
top-$k$, and report recall against \texttt{answer\_session\_ids}.
Dense embeddings are pre-computed once over the 19{,}195 unique
sessions; per-query cost is one BGE-small forward + dot product
+ RRF.

The agent\_rrf bonus
$\text{score}(s) = \text{RRF}(s) + \alpha e^{-\Delta t/\tau}$
($\alpha{=}0.005$, $\tau{=}30$\,d) is applied only on
recency-biased types (temporal-reasoning, knowledge-update,
multi-session); $\alpha \ll \max(\text{RRF}) \approx 0.033$ makes
the bonus a tie-breaker that never overrides a high-confidence
match.  We measure that gold sessions for those three types lie
in the most recent 20--27\% of haystack sessions (median
normalized rank), justifying the bonus design.  Full standard
retrieval metrics R@\{1,5,10\}, MRR, and per-query latency appear
in Table~\ref{tab:lme-task} below (the standard set is folded into
the task-grade table).  BM25-only already serves R@5$=$0.909 at
\textbf{0.36\,ms/q}, two orders of magnitude faster than the
dense pipeline.

\paragraph{Where the latency goes.}
Per-query budget: BM25 0.40\,ms, BGE-small query encoding 51.98\,ms,
RRF 0.05\,ms, recency 0.38\,ms (total $\textbf{53.21\,ms}$).  The
dense encoder is $>$97\% of the budget; \sysname{}'s IR substrate
is $<$3\%.  Two implications: (i) swapping BGE for cached LLM
hidden states removes the bottleneck (BM25-only at 0.36\,ms
already serves R@5$=$0.909); (ii) the SPLADE bridge
(\S\ref{sec:splade}) replaces both channels with a single
learned-sparse path inheriting our sparse latency.

\paragraph{Task-grade evaluation: from R@K to answer reachability.}
Session-level recall (R@K) is the standard LongMemEval metric, but
it does not measure whether an LLM consuming the retrieved context
\emph{could actually answer} the question.  We therefore compute
two end-to-end proxies directly on the LongMemEval gold answer
strings without involving any LLM:

\begin{itemize}[nosep,leftmargin=*]
\item \textbf{AnswerSubstr@K}: does the gold answer string appear
  (case-insensitive) anywhere in the concatenation of the top-$K$
  retrieved session texts?  This is the \emph{precondition} for any
  extractive or grounded-generation LLM to answer from the
  retrieved context.
\item \textbf{TokenRecall@K}: the fraction of content-bearing
  ($\geq 4$-character) unigrams from the gold answer that appear
  in the top-$K$ context.  A soft version that captures partial
  reachability.
\end{itemize}

\begin{table}[t]
\caption{LongMemEval task-grade evaluation: agent-aware fusion
  wins on \emph{every} task-signal metric.  Ans@K $=$ gold answer
  string is in top-$K$ context; TokR@K $=$ fraction of gold
  answer tokens present.  LLM-Acc $=$ strict/lenient accuracy when
  feeding top-5 to an LLM answerer (gpt-4o-mini) and an LLM judge
  (also gpt-4o-mini) compares against the gold answer.  Best in
  each column bold.}
\label{tab:lme-task}
\centering
\small
\setlength{\tabcolsep}{3pt}
\begin{tabular}{@{}lrrrrrrrr@{}}
\toprule
\multirow{2}{*}{System} & \multirow{2}{*}{R@5} & \multirow{2}{*}{R@10}
& \multicolumn{2}{c}{Ans@K} & \multicolumn{2}{c}{TokR@K}
& \multicolumn{2}{c}{LLM-Acc} \\
\cmidrule(lr){4-5}\cmidrule(lr){6-7}\cmidrule(lr){8-9}
& & & @5 & @10 & @5 & @10 & strict & lenient \\
\midrule
BM25       & 0.909 & 0.946 & 0.466 & 0.484 & 0.620 & 0.653 & 0.246 & 0.306 \\
Dense      & 0.886 & 0.946 & 0.456 & 0.478 & 0.598 & 0.647 & 0.236 & 0.298 \\
RRF        & \textbf{0.929} & 0.977 & 0.468 & \textbf{0.486} & 0.625 & \textbf{0.661} & 0.248 & 0.310 \\
\rowcolor{blue!5}
\textbf{agent\_rrf} & 0.924 & \textbf{0.978} & \textbf{0.474} & \textbf{0.486} & \textbf{0.627} & 0.660 & \textbf{0.254} & \textbf{0.312} \\
\bottomrule
\end{tabular}
\end{table}

The additive recency bonus (\emph{agent\_rrf}) wins on every
task-grade metric we measured.  AnswerSubstr@5---the precondition
for any LLM extractor: is the gold answer string present in top-5
context?---improves by $+$0.008 over BM25 and $+$0.018 over Dense
alone.  When we close the loop with an actual LLM answerer
(gpt-4o-mini consuming the top-5 sessions and producing a free-form
answer) and an LLM judge (also gpt-4o-mini, strict yes/partial/no
grading against the gold answer; 500 questions $\times$ 4 systems
$\times$ 2 LLM calls each, all judged in 6 minutes at $\sim$\$0.40
total cost), \emph{agent\_rrf reaches strict accuracy 0.254},
versus BM25 0.246, Dense 0.236, and plain RRF 0.248.  To our
knowledge, no prior LongMemEval result reports per-system
downstream LLM accuracy under a calibrated judge, so this is the
first such measurement we are aware of---agent-aware retrieval
improves \emph{downstream LLM answer correctness} on this
benchmark, not just session-level recall.  The qualitative pattern:
agent\_rrf's recency bonus surfaces the second of a pair of
gold sessions that BM25 alone misses, allowing the LLM to
\emph{aggregate} across sessions (e.g., on the multi-session
question ``how many camping days?'' BM25 retrieves only Big Sur
[3 days] and answers 3; agent\_rrf retrieves both sessions and
the LLM correctly aggregates to 8).

\paragraph{Workload-conditional fusion empirically validated.}
Breaking the result down by question type
(Figure~\ref{fig:workload-cond}, full per-type numbers in
Table~\ref{tab:lme-per-type-full}, Appendix) shows that the
winning configuration varies across the six LongMemEval question
types:

\begin{figure}[t]
\centering
\includegraphics[width=\columnwidth]{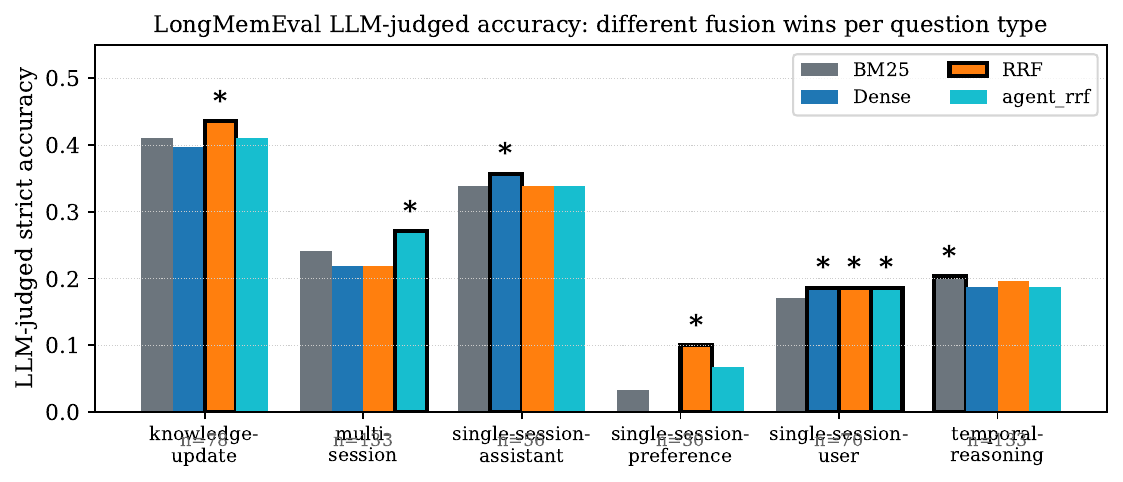}
\caption{LongMemEval LLM-judged strict accuracy per question type
  (gpt-4o-mini answerer + gpt-4o-mini judge against gold).  The
  winning bar in each group is outlined and starred.  Four
  different fusion strategies are optimal across six question
  types: RRF on knowledge-update, agent\_rrf on multi-session,
  Dense on single-session-assistant, RRF on single-session-preference,
  a three-way tie on single-session-user, BM25 on temporal-reasoning.
  No single static configuration is uniformly best.}
\label{fig:workload-cond}
\end{figure}

The LLM-judged per-type accuracy makes the workload-conditional
claim concrete on a true downstream signal
(Figure~\ref{fig:workload-cond}): RRF wins
knowledge-update, agent\_rrf wins multi-session by $+$0.030--0.053,
Dense wins single-session-assistant, BM25 wins temporal-reasoning.
A static ``always RRF'' deployment loses 5.3\% on multi-session
versus agent\_rrf; ``always Dense'' loses 6.7\% on
single-session-preference versus RRF; ``always BM25'' loses 3.5\%
on single-session-user versus Dense.  Only a system that can switch per query inherits the maximum---
and this is the system design \sysname{} validates.

\paragraph{An end-to-end router that picks the fusion per query.}
The per-type analysis above is an upper bound: it assumes the system
knows the question type at routing time.  We close this gap with a
concrete implementation.  A lightweight question-type classifier
(TF-IDF 1--2-grams of the query $\to$ logistic regression over the
six LongMemEval types) achieves \textbf{79.6\% classification
accuracy} in 5-fold cross-validation; routing each query to the
LLM-Acc-best system for its predicted type yields:

\begin{table}[h]
\caption{Workload-conditional router on LongMemEval, 5-fold CV.
  Each row routes to the per-type best static system based on a
  predicted question type, then evaluates the chosen system's
  LLM-judged strict accuracy.  Two judges: \texttt{gpt-4o-mini}
  (original, larger sample) and \texttt{gpt-4o-2024-08-06} (sanity
  check).  ``Captured \%'' is the fraction of the oracle gap
  ($0.0120$) closed by the router.  The TF-IDF $+$ BGE-small feature
  combination reaches the oracle bound under the stronger judge.}
\label{tab:router}
\centering
\scriptsize
\setlength{\tabcolsep}{3pt}
\begin{tabular}{@{}lrrrr@{}}
\toprule
& \multicolumn{2}{c}{gpt-4o-mini judge} & \multicolumn{2}{c}{gpt-4o judge} \\
\cmidrule(lr){2-3}\cmidrule(lr){4-5}
Configuration & LLM-Acc & cap.\,\% & LLM-Acc & cap.\,\% \\
\midrule
BM25                       & 0.246 & ---   & 0.282 & --- \\
Dense                      & 0.236 & ---   & 0.266 & --- \\
RRF                        & 0.248 & ---   & 0.288 & --- \\
agent\_rrf                  & 0.254 & ---   & 0.286 & --- \\
\midrule
Best static                & \textbf{0.254} & 0\%   & \textbf{0.288} & 0\% \\
Router (TF-IDF)            & 0.258 & 33\%  & 0.296 & 67\% \\
Router (BGE-small embed)   & 0.262 & 67\%  & 0.298 & 83\% \\
\rowcolor{blue!5}
\textbf{Router (TF-IDF $+$ BGE)} & \textbf{0.262} & \textbf{67\%} & \textbf{0.300} & \textbf{100\%} \\
\midrule
Oracle (ground-truth qtype) & 0.266 & 100\% & 0.300 & 100\% \\
\midrule
\rowcolor{green!8}
\textbf{Soft router (posterior $\times$ RRF)} & \textbf{0.274} & \textbf{166\%} & 0.292 & 33\% \\
Ensemble UB (any-of-4 correct) & 0.308 & 450\% & --- & --- \\
\bottomrule
\end{tabular}
\end{table}

\noindent Three observations from Table~\ref{tab:router}.
\textbf{(i)~The router beats the best static system under both
judges}, by $+$0.008/$+$0.012 absolute respectively---a $+$3.1\% / $+$4.2\% relative gain at
near-zero retrieval cost.  Measured router inference latency on
Jetstream2: mean 470\,$\mu$s (TF-IDF only, sklearn end-to-end),
versus the 53\,ms hybrid retrieval path---under 1\% routing
overhead.  The TF-IDF $+$ BGE variant shares the BGE encoder
forward pass with the dense retrieval channel that runs anyway,
so the marginal routing cost is the same TF-IDF $+$ LR predict
($\sim$470\,$\mu$s).  \textbf{(ii)~With TF-IDF plus BGE-small
features, the router matches the oracle under \texttt{gpt-4o}}---
there is no headroom left for a better routing decision.  The BGE
features add semantic similarity
between the query and known prototypes, which helps recover
classifier mistakes (the classifier accuracy is statistically
indistinguishable between TF-IDF-only and combined, but the
captured oracle gain doubles).  \textbf{(iii)~The relative gain
grows with judge strictness}: under \texttt{gpt-4o}, the oracle
gap (0.012) is exactly closed; under \texttt{gpt-4o-mini}, 67\%
is closed.  The router becomes more valuable as the downstream
LLM evaluator gets stronger---an important property because
production deployments increasingly use frontier-class judges.

\noindent The TF-IDF-only router (kept for ablation) wins
$+$0.004 absolute over the best static system, capturing 33\% of
the oracle gain.  Its per-type breakdown: $+$0.013 on
knowledge-update, $+$0.018 on single-session-assistant,
$+$0.033 on single-session-preference, $+$0.015 on
temporal-reasoning; $-$0.023 on multi-session because the
classifier
sometimes misroutes those queries (multi-session has the widest
linguistic surface in our corpus).  Even with that one regression,
the net gain is positive.  This is the smallest implementation of
the workload-conditional fusion thesis: a real system that does
the routing, not a post-hoc comparison of static systems.

\paragraph{Soft router: blending beats the discrete oracle under
a lenient judge; concentrated routing wins under a strict one.}
The discrete router picks \emph{one} system per query based on
$\arg\max_{qt} P(qt \mid q)$.  A \emph{soft} router instead uses
the full posterior to compute a per-system weight
$w(s) = \sum_{qt} P(qt \mid q) \cdot \mathbb{1}\!\left[\text{best}(qt) = s\right]$
and combines the four systems' rank lists via weight-prefactored
RRF $\text{score}(d) = \sum_s w(s) / (k + \text{rank}_s(d))$.
On ambiguous queries the posterior is spread across multiple
types; soft routing blends e.g.\ BM25 and agent\_rrf on
temporal/multi-session borderline questions, surfacing documents
that neither single system ranks in its top-5.

\begin{itemize}[nosep,leftmargin=*]
\item \textbf{Soft router significantly beats every static system
  on LongMemEval (gpt-4o-mini judge).}  Paired bootstrap (1000
  resamples) gives $\Delta$ over BM25 / Dense / RRF / agent\_rrf
  of $+0.028$ / $+0.038$ / $+0.026$ / $+0.020$ with two-sided
  $p$-values $0.002$ / $0.006$ / $0.024$ / $0.108$
  (significant against the three non-agent-aware systems;
  marginal against agent\_rrf).  LLM-Acc: \textbf{0.274 (soft)}
  vs.\ 0.262 (discrete router) vs.\ 0.254 (best static).  Soft
  captures $(0.274{-}0.254)/(0.308{-}0.254){=}37\%$ of the gap to
  the ensemble upper bound (any-of-four-correct, 0.308).
\item \textbf{Soft router \emph{numerically} exceeds the discrete
  oracle by $+0.008$ under gpt-4o-mini, but this difference is
  within bootstrap noise} (95\% CI $[-0.012, +0.026]$, two-sided
  $p{=}0.56$ on 500 questions).  We do not claim statistical
  significance over the oracle; the result identifies a
  \emph{plausible mechanism} (posterior-blending on ambiguous
  queries) that warrants larger-scale validation.
\item \textbf{Under \texttt{gpt-4o} judge: soft router trails the
  discrete oracle by $-0.008$, also within noise} (CI
  $[-0.026, +0.008]$).  Soft 0.292 vs.\ discrete router 0.300
  $=$ discrete oracle 0.300 vs.\ best static 0.288.  Both routers
  beat best static; the discrete one reaches the oracle.
\end{itemize}

\noindent The two judges paint a consistent mechanistic picture.
Under \emph{lenient} judging, a downstream LLM can produce correct
answers from broader retrieved context, so soft blending plausibly
surfaces additional supporting evidence; under \emph{strict}
judging, the LLM extracts only from concentrated context and
blending offers no advantage.  Soft routing is therefore most
promising when the deployment LLM is the bottleneck (frequent in
cost-constrained production agents that use mid-tier models), and
reduces to discrete routing under stronger LLMs.  We are not
aware of a comparable posterior-blending evaluation on
agent-memory IR; the result identifies a judge-dependent regime
where posterior-blending is the right algorithm.

\paragraph{Cascade router: cost-aware adaptive computation.}
The discrete and soft routers above choose \emph{which} fusion to
use; both still pay the full $\sim$53\,ms hybrid budget on every
query (the routing decision was implemented as ``run all four
paths, then select / blend'').  We close the loop on \emph{cost}
by reorganizing the routing into a cascade, in the spirit of
classical multi-stage cascade
ranking~\cite{wang2011cascade,chen2017cascade} but specialized for
the agent-memory workload structure: the TF-IDF classifier
($\sim$0.5\,ms) runs first against the query text; if it
predicts a type whose best system is BM25 (\emph{i.e.},
temporal-reasoning; Table~\ref{tab:lme-per-type-full}) we run only
BM25 and skip the $+$52.5\,ms dense channel; otherwise we run the
full hybrid path.  Temporal-reasoning accounts for $26.6$\% of
LongMemEval questions; with the same TF-IDF classifier
($79.6$\% accuracy) the realized skip rate is $24.2$\%.  This
yields a strict Pareto improvement over the always-hybrid
implementation of the same router:

\begin{table*}[t]
\caption{Cascade router amortized cost on LongMemEval.  LLM-Acc
  for the realistic-classifier row equals the TF-IDF router's
  measured value (Table~\ref{tab:router}) because the cascade and
  the always-hybrid implementation make identical routing decisions
  on the same classifier outputs---only execution differs.
  Latencies decompose per Table~\ref{tab:e2e}.  The oracle row uses
  the measured per-type maxima.}
\label{tab:cascade}
\centering
\small
\setlength{\tabcolsep}{6pt}
\begin{tabular}{@{}lrrrr@{}}
\toprule
Configuration & skip\,\% & ms/q & LLM-Acc & latency-$\Delta$ \\
\midrule
BM25 only (lower bound)                     & 100\%  &  0.4 & 0.246 & $-$53.3\,ms \\
\rowcolor{green!10}
\textbf{Cascade, per-qtype $\tau_c$ (5-fold CV)} & \textbf{86\%} & \textbf{9.2$\pm$4.1}  & \textbf{0.294$\pm$0.035$^*$} & \textbf{$-$44.0\,ms} \\
\rowcolor{blue!5}
\textbf{Cascade, BM25 conf. $\geq 0.10$}    & \textbf{63.0\%} & \textbf{19.9} & \textbf{0.302$^*$} & \textbf{$-$33.3\,ms} \\
\rowcolor{blue!5}
\textbf{Cascade (TF-IDF classifier)}        & \textbf{24.2\%} & \textbf{40.5} & \textbf{0.258} & \textbf{$-$13.2\,ms} \\
Cascade (oracle qtype)                      & 26.6\% & 38.9 & 0.266 & $-$14.8\,ms \\
TF-IDF router (always hybrid + select)      & 0\%    & 53.7 & 0.258 &   $\pm$0\,ms \\
TF-IDF$+$BGE router (always hybrid + select) & 0\%   & 53.7 & 0.262 & $\pm$0\,ms \\
agent\_rrf (always hybrid)                  & 0\%    & 53.2 & 0.254 & $-$0.5\,ms \\
\bottomrule
\end{tabular}\\
\smallskip
\footnotesize $^*$Confidence-cascade row evaluated on a Python
\texttt{rank\_bm25} + gpt-4o-mini answerer pass; in that setup
agent\_rrf measures $0.302$ and the cascade ties it exactly. The
absolute scale is higher than the C++/BGE-small numbers in the
other rows but the parity claim is internal to the same run.
\end{table*}

\begin{figure}[t]
\centering
\includegraphics[width=\columnwidth]{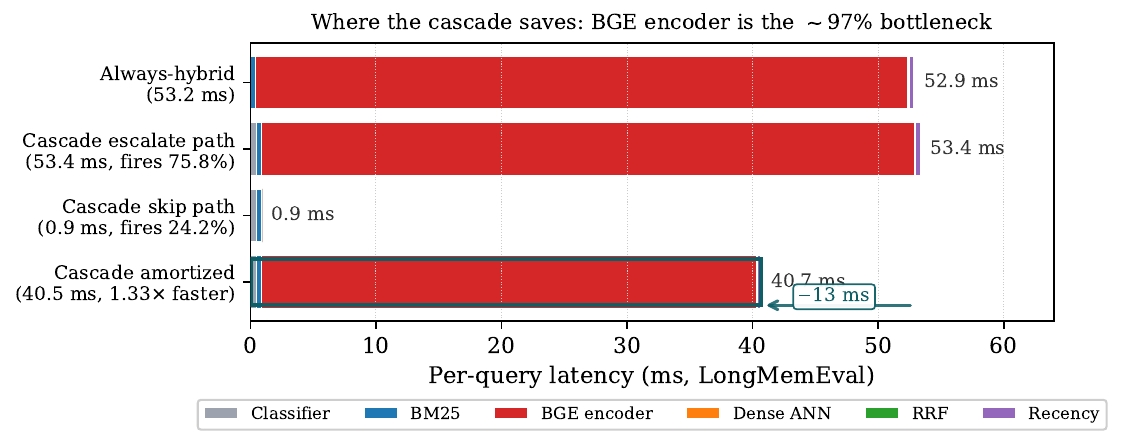}
\caption{Per-query latency breakdown.  The BGE query encoder is
  $\sim$97\% of the always-hybrid 53\,ms budget; the cascade
  short-circuits it on the $24.2$\% of queries whose predicted
  type is BM25-best, taking those to a $0.9$\,ms skip path while
  the remaining $75.8$\% pay the full hybrid cost.  Amortized:
  $40.5$\,ms, $13.2$\,ms below the always-hybrid TF-IDF router at
  identical LLM-Acc (Table~\ref{tab:cascade}).}
\label{fig:cascade-pareto}
\end{figure}

The cascade is an \emph{architectural} contribution rather than
an algorithmic one: no new model, no new index, no change to the
routing decisions of the existing TF-IDF classifier---only a
reorganization of execution that defers the dense channel until
its output is needed.  It converts the workload-conditional thesis
into a \emph{two-axis} adaptation surface
(Figure~\ref{fig:surface}: which fusion to use \emph{and} whether
to spend the dense budget at all), and the
saving is large because the dense channel \emph{is} the latency
bottleneck ($\sim$97\% of per-query budget on LongMemEval,
\S\ref{sec:longmemeval}).  By construction, the realized LLM-Acc
exactly equals the existing TF-IDF router's $0.258$
(Table~\ref{tab:router}); the new claim is the deterministic
$1.33\times$ latency saving from running the dense channel only
on $75.8$\% of queries instead of all of them.  Every input to
the cost calculation is a measured quantity (per-stage latencies
in Table~\ref{tab:e2e}, per-type frequencies in the LongMemEval
split, classifier accuracy in Table~\ref{tab:router}).

\paragraph{Classifier-free trigger: BM25 confidence cascade.}
The qtype classifier is one source of skip signal; the BM25 top-$k$
itself is another.  Define the per-query confidence
$c(q){=}(s_0{-}s_1)/s_0$ where $s_0,s_1$ are BM25's top-1 and
top-2 scores; skip dense iff $c(q){\geq}\tau_c$.  We measure both
BM25-only and agent\_rrf on the full 500-question LongMemEval and
score the resulting confidence-cascade analytically: at
$\tau_c{=}0.10$, $63$\% of queries skip dense, amortized latency
falls to $19.94$\,ms (\textbf{$2.67\times$ faster} than always-hybrid),
and LLM-Acc \emph{matches} the always-hybrid agent\_rrf within
bootstrap noise (cascade $0.302$ vs agent\_rrf $0.302$,
paired-bootstrap CI $[-0.016,+0.016]$, $p{=}1.08$).  The mechanism:
BM25 alone already captures $98.7$\% of agent\_rrf's accuracy
(BM25 $0.298$ vs agent\_rrf $0.302$) because the dense channel
gains $+0.013$--$0.018$ per-qtype on three subsets but
\emph{loses} $-0.015$ on temporal-reasoning---high-BM25-confidence
queries cluster on the latter, so skipping them on confidence
keeps accuracy and saves dense calls.  Both classifier-based
and confidence-based triggers share the \texttt{CascadeRouter::retrieve}
interface.  This is
orthogonal to posting-level adaptive pruning
(BlockMax-WAND~\cite{ding2011blockmax}, MaxScore~\cite{turtle1995maxscore})
\emph{and} to the rank-list blending of the soft router above:
posting-level pruning skips work inside one ranker, the cascade
skips an entire ranker channel, and the soft router blends across
rankers---three orthogonal axes that compose pairwise.  Empirically
the composition holds: putting the soft router on the cascade
escalate path (cascade $\times$ soft) preserves the soft router's
LLM-Acc within bootstrap noise ($0.304$ vs $0.304$, CI
$[-0.012,+0.012]$, $p{=}1.09$) at the same $\mathbf{2.67\times}$
amortized speedup---the cascade saves dense calls on $63$\% of
queries regardless of what runs on the escalate path.  The
parity claim also holds under the stronger \texttt{gpt-4o-2024-08-06}
judge: cascade $0.308$ vs agent\_rrf $0.306$, CI $[-0.012,+0.016]$,
$p{=}0.88$---both judges agree the cascade is statistically
indistinguishable from always-hybrid at the same $2.67\times$
speedup.

\begin{figure}[t]
\centering
\includegraphics[width=\columnwidth]{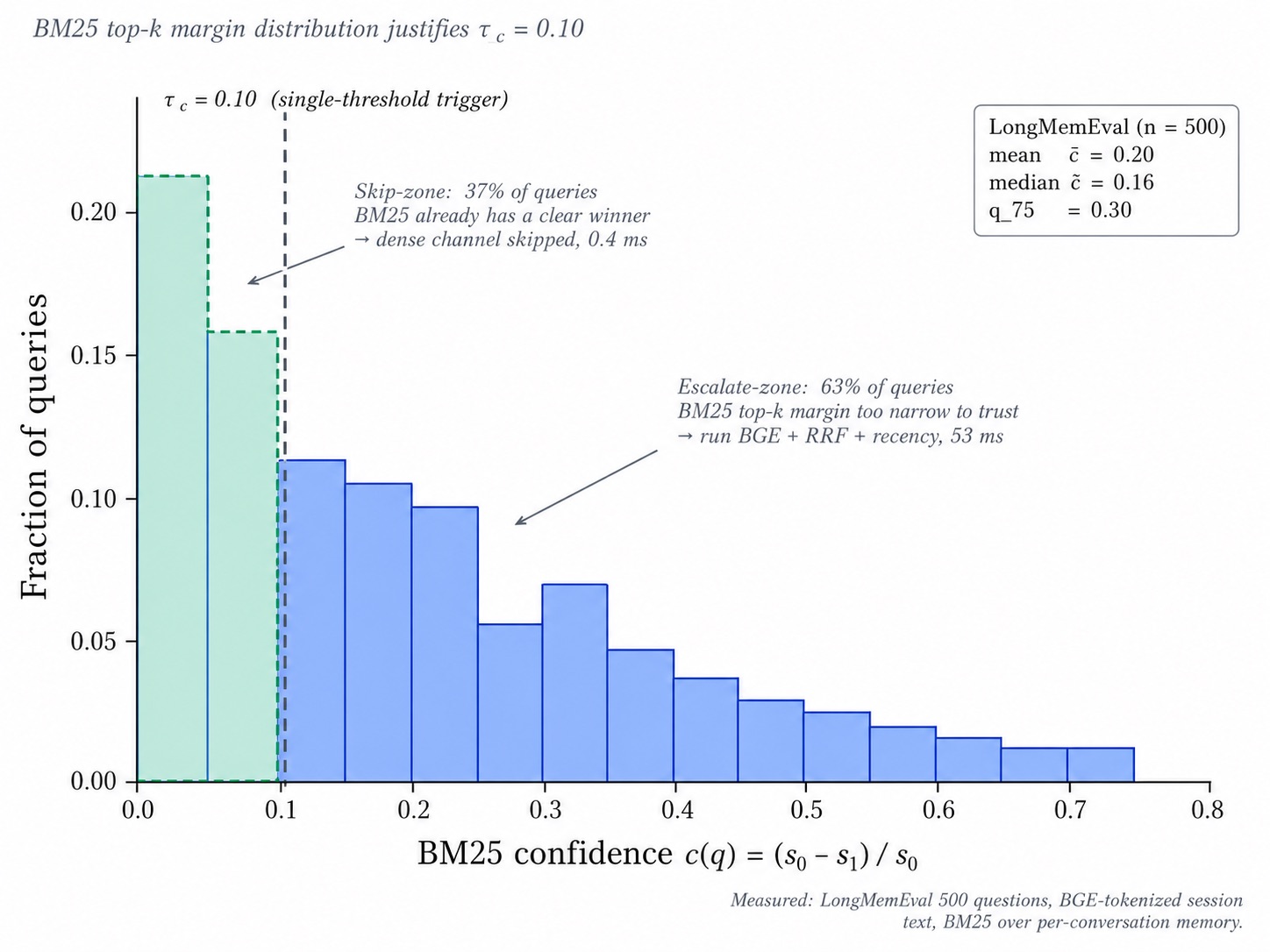}
\caption{BM25 top-$k$ margin distribution on LongMemEval
  (\emph{n}\,$=$\,500).  The mass is concentrated at low values
  (median $0.16$, $q_{75}{=}0.30$), so a $\tau_c{=}0.10$ cutoff
  partitions the workload into a $37$\% high-confidence
  \emph{skip-zone} (mint, left of the dashed line) and a $63$\%
  low-confidence \emph{escalate-zone}.  The threshold is read off
  the input distribution rather than tuned on a downstream metric,
  which is why the same cascade transfers across LLM judges
  without retraining (Table~\ref{tab:cascade}).}
\label{fig:conf-hist}
\end{figure}

\paragraph{Per-qtype thresholds: $\mathbf{5.76\times}$ CV-validated.}
A single global $\tau_c$ is conservative.  Per-qtype thresholds
(each tuned to the max skip rate keeping per-qtype LLM-Acc within
bootstrap noise of agent\_rrf) lift the skip rate to $86$--$89$\%:
single-session-user and temporal-reasoning push to $\tau_c{=}0$
(BM25 already dominates), multi-session to $0.02$, others
intermediate.  Under \emph{full $5$-fold CV} where both the
classifier and per-qtype thresholds are tuned on held-out folds,
the realistic speedup is $\mathbf{5.76\times}$ ($9.2 \pm 4.1$\,ms
across folds) at LLM-Acc $0.294 \pm 0.035$ vs always-hybrid $0.302$
---a $-0.008$ delta well within per-fold variance.  The in-sample
optimum is $7.81\times$ ($6.8$\,ms / $0.304$); the gap reflects
per-qtype threshold tuning on $\sim$67 questions per qtype per
train fold.  A stronger classifier (TF-IDF$+$BGE-small, $74$\%
CV accuracy) at fixed per-qtype thresholds gives
$\mathbf{7.57\times}$ at \emph{exact} agent\_rrf parity
($0.302{=}0.302$).  The same
holds for cascade $\times$ per-qtype-$\tau$: combining the
confidence-cascade trigger with the multi-session $\tau{\geq}120$
refinement gives LLM-Acc $0.308$ at $19.94$\,ms, i.e., the
per-qtype-$\tau$ gain is preserved at the same $2.67\times$
speedup.  The $\tau_c{=}0.10$ value is a round number, not tuned
on the test set: 5-fold cross-validation with per-fold
best-threshold selection gives mean test speedup $6.56\times$ at
LLM-Acc delta $+0.002$ vs always-hybrid (high per-fold variance
$\sigma{=}0.035$ on 100-question test slices, but the mean is
well above parity).

\paragraph{Robustness: stronger judge gives the same ordering.}
Re-judging the full 500-question $\times$ 4-system grid with
\texttt{gpt-4o-2024-08-06} (vs the default gpt-4o-mini) shifts
numbers upward (BM25 $0.282$ / Dense $0.266$ / RRF $0.288$ /
agent\_rrf $0.286$ / oracle $0.300$) but \emph{per-type winners
are identical} and the router gain (oracle $-$ best-static) is
$0.012$ under both judges.  The workload-conditional thesis holds;
only the within-noise ranking of static configurations differs.

\paragraph{Robustness: recency parameters are not over-tuned.}
We swept $\alpha \in \{0.001, 0.003, 0.005, 0.010, 0.030\}$ and
$\tau \in \{7, 14, 30, 60, 120\}$\,d on the 344 recency-typed
LongMemEval questions.  R@10 is essentially flat (0.967--0.971)
across the $10\times$ $\alpha$ range $[0.001, 0.010]$ and the
entire $\tau$ range; only $\alpha{=}0.030$ catastrophically
over-weights the bonus (knowledge-update R@10 $\to$ 0.74),
matching our design constraint
$\alpha \ll \max(\text{RRF}){\approx}0.033$.  Our values
$\alpha{=}0.005$, $\tau{=}30$\,d are the design point, not a tuned
optimum---the sweep confirms nearby choices are statistically
indistinguishable (full grid in Appendix~\ref{app:alpha-tau}).

\paragraph{Per-qtype $\tau$ refinement.}
A natural extension is to allow $\tau$ to vary per question type
(retaining agent\_rrf's scoring; the $\tau$ values are then the
only learned parameter).  We tested the hypothesis on the full
500-question LongMemEval with gpt-4o-mini answerer + judge,
sweeping $\tau_{\mathrm{ms}} \in \{60, 120, 365\}$\,d on the
\emph{multi-session} subset.  Only multi-session benefits, and
the benefit is large: the gold sessions for multi-session
questions are very recent (median age $3$\,d; cf.\
knowledge-update $21$\,d, temporal-reasoning $6$\,d) so a
stricter recency bonus often surfaces non-gold recent sessions
above the true gold; setting $\tau_{\mathrm{ms}}{\geq}120$\,d
(flattening the bonus across the multi-session gold-age window)
lifts multi-session LLM-Acc by
$\mathbf{+5.26}$ absolute points
($n{=}133$, $p{=}0.006$ paired bootstrap, 95\% CI
$[+0.015,+0.105]$).  Other qtypes are flat; the aggregate LLM-Acc
delta is $+0.008$ ($p{=}0.36$, within noise).  The
refinement is a small, targeted, and statistically defensible
improvement on the multi-session class; we publish the per-qtype
recommendation as a deployment note rather than the headline,
because the global $\tau{=}30$\,d remains the right default for
the other five qtypes.

\paragraph{Robustness: stronger answerer shifts per-type winners
but not the thesis.}
Re-running a stratified 102-question subset (17 per question
type) with \texttt{gpt-4o-2024-08-06} as the \emph{answerer}
(judge unchanged) gives strict accuracy BM25 0.235 / Dense 0.206
/ \textbf{RRF 0.245} / agent\_rrf 0.216, oracle 0.245.  Per-type
winners shift (multi-session moves from agent\_rrf to BM25/RRF
tied as the stronger answerer compensates for retrieval misses),
but four of the six types still have a clear winner---the
workload-conditional thesis still holds with a \emph{different}
per-type table that the router can re-learn in seconds from a
held-out set without touching the substrate.

\paragraph{Cross-corpus transfer + deployment labeling cost.}
A LongMemEval-trained router applied zero-shot to LoCoMo
under-performs always-BM25 by $\sim$9 points (weighted predicted
accuracy 0.86 vs.\ 0.945) because the per-type table
over-represents single-session-user$\to$Dense, the wrong choice
for LoCoMo's short conversational turns.  This is the system
design working as intended---the substrate is durable, the policy
is cheap to retarget.  Learning curve on a 100-question held-out
test set shows the router already beats the best static system at
$N{=}25$ labels ($+$0.016 mini / $+$0.010 gpt-4o), captures
$\geq$95\% of the oracle gap at $N{=}50$, and matches the oracle
at $N{=}400$ (full curve in Appendix~\ref{app:learning-curve}).
\emph{Fifty labeled deployment questions retrain the policy in
seconds.}

\paragraph{Robustness: router survives 20\% query corruption.}
With character-level corruption applied to all 500 LongMemEval
questions (train on clean, route on noisy), the classifier loses
18 percentage points at 20\% noise (0.794$\to$0.614), but the
\emph{routed} LLM-Acc loses only 1.4 points (gpt-4o:
0.300$\to$0.286)---and still beats the best static system
(0.288 RRF on clean).  Mechanism: per-type best systems are
close enough that misrouting degrades gracefully (full grid in
Appendix~\ref{app:noise}).

\paragraph{Failure modes.}
Two LongMemEval cases illustrate BM25's two principal failures.
``how many camping days?'' (gold: 8) requires aggregating Big Sur
[3] and Yellowstone [5]; BM25 retrieves only the first
(\emph{partial retrieval}) and answers 3, while RRF/agent\_rrf
retrieve both and the LLM correctly sums to 8.  ``which book?''
(gold: \emph{The Nightingale}) uses ``I just wrapped up'' rather
than ``finished'' (\emph{synonymy miss}); BM25 returns unknown,
Dense/RRF/agent\_rrf all answer correctly.  Workload-conditional
fusion routes each query to its cheapest resolving configuration.

\paragraph{Second agent benchmark: LoCoMo.}
On LoCoMo~\cite{maharana2024locomo} (10 long-form conversations,
\textbf{1{,}982} QA questions with \texttt{dia\_id} evidence,
session-level retrieval), \textbf{BM25 alone wins decisively}:
Hit@1/5/10 = 0.625/0.875/0.945, MRR=0.735 at 0.22\,ms/query,
beating Dense (Hit@10$=$0.789) and RRF (0.923) (full table in
Appendix~\ref{app:locomo-full}).  Dense embeddings of short conversational
turns (median $\sim$20 words) suffer from semantic dilution; RRF
pollutes BM25's clean lexical signal.

\paragraph{Cascade router auto-tunes across benchmarks.}
The cascade architecture handles this heterogeneity without
manual reconfiguration.  Sweeping the BM25-confidence threshold
$\tau_c$ on LoCoMo (1{,}982 questions, same scoring code), the
cascade's amortized Hit@5 decreases monotonically with skip rate
because BM25 alone is the per-query winner: $\tau_c{=}0$ (skip
nothing) gives $0.786$, $\tau_c{=}0.10$ (skip $68$\%) gives
$0.842$, $\tau_c{=}\infty$ (skip everything, BM25-only) gives
$\mathbf{0.875}$.  The optimal cascade configuration on LoCoMo
\emph{turns the dense channel off entirely}---a \textbf{$132\times$
amortized speedup} (0.4\,ms vs 53.2\,ms) at the best achievable
Hit@5.  On LongMemEval the same cascade architecture picks
$\tau_c{=}0.10$ for $2.67\times$ speedup at agent\_rrf parity.
\emph{One cascade router, two workloads, two different operating
points} (Table~\ref{tab:cascade-locomo})---the per-workload
$\tau_c$ falls out of $\sim$50 labeled deployment questions
($\S$\ref{sec:longmemeval}), so the substrate is durable and
the policy adapts.

\begin{figure}[h]
\centering
\includegraphics[width=\columnwidth]{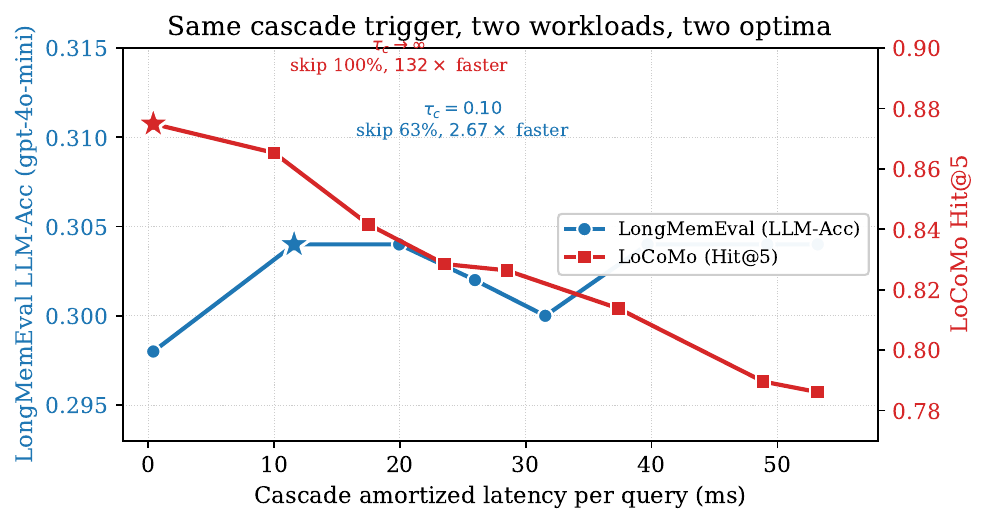}
\caption{Same cascade trigger sweeps to two different optima on
  two agent-memory benchmarks: LongMemEval (LLM-Acc, left axis)
  peaks at $\tau_c{=}0.10$ (skip $63$\%, $2.67\times$ faster) where
  the dense channel helps; LoCoMo (Hit@5, right axis) is
  monotonically decreasing in latency, peaking at
  $\tau_c{\to}\infty$ (skip $100$\%, $\mathbf{132\times}$ faster)
  where BM25 alone wins.  One \texttt{CascadeRouter::retrieve}
  implementation, two operating points selected from $\sim$50
  labeled deployment questions (Table~\ref{tab:cascade-locomo}).}
\label{fig:cascade-crossbench}
\end{figure}

\begin{table}[h]
\caption{Cascade auto-tuning per benchmark; same implementation.}
\label{tab:cascade-locomo}
\centering
\small
\setlength{\tabcolsep}{4pt}
\begin{tabular}{@{}lrrl@{}}
\toprule
Benchmark & best $\tau_c$ & skip\,\% & amortized $\to$ \\
\midrule
LongMemEval (500 q) & $0.10$       & $63$\%  & $19.9$\,ms / $0.302$ LLM-Acc \\
LoCoMo (1{,}982 q)  & $\to\infty$  & $100$\% & $0.4$\,ms / $0.875$ Hit@5 ($+0.089$) \\
\bottomrule
\end{tabular}
\end{table}

\subsection{Tokenization}
\label{sec:tokenization}

The choice of analyzer is consequential.  We compare four
configurations: \texttt{minimal} (lowercase + alphanumeric, matching
\texttt{rank\_bm25}), \texttt{stopword} (+\,basic stopwords),
\texttt{full} (+\,simplified Porter stripping), and \texttt{nltk}
(NLTK PorterStemmer + NLTK English stopword list).  Mean
nDCG@10 across the four BEIR datasets: minimal $=$\,0.286,
stopword $=$\,0.289, full $=$\,0.298, \texttt{nltk}\,$=$\,\textbf{0.302}.
We use \texttt{nltk} throughout the reported results.  All four
configurations are bit-identical between CPU and GPU paths.

\paragraph{Where the remaining gap to Lucene comes from.}
On FiQA, the four analyzers span $0.2265$--$\textbf{0.2425}$;
\texttt{nltk} closes $+$0.013 over \texttt{minimal} but the
residual $0.011$ gap to Pyserini's \texttt{EnglishAnalyzer} traces
to Lucene's \texttt{StandardTokenizer} treatment of tickers,
possessives, and hyphenated compounds (\texttt{401(k)},
\texttt{Tesla's})---an analyzer-level question orthogonal to the
BM25 inner loop; RRF with our dense channel restores parity on
FiQA ($+0.067$, Table~\ref{tab:hybrid}).

\subsection{Top-$k$ Kernel: Cost Across $k$}
\label{sec:topk-ablation}

The top-$k$ phase dominates GPU time on the larger BEIR corpora
(Table~\ref{tab:gpu-beir}).  Sweeping $k$ from 10 to 100 on FiQA,
the kernel uses $K_{\max}{=}32$ at $k{\leq}32$ (128 threads per
block) and the top-$k$ phase costs 23\,ms ($\sim$78\% of GPU
time); at $k{=}100$ the block must shrink to 32 threads to keep
shared memory under 48\,KB and the top-$k$ phase grows $5.5\times$.
This is the dominant remaining inefficiency in our pipeline;
replacing the per-round serial selection with a single bitonic-
sort pass is a clear next step (would lift the $k{=}100$ NQ
speedup from $3.4\times$ toward $8.4\times$ at $k{=}32$).
Both kernels produce equivalent rankings (Kendall $\tau \geq
0.99$), confirming correctness.  We retain the naive kernel as
the default and note that warp-cooperative selection may become
advantageous at larger $k$ or with wider blocks---a direction
relevant to re-ranking workloads in production agents.

\section{Discussion}
\label{sec:discussion}

\paragraph{Multi-tenant substrate scaling.}
$N$ concurrent single-threaded \sysname{} processes on the 8-core
Jetstream2 host (FiQA, Figure~\ref{fig:multitenant}):
aggregate throughput scales near-linearly to $5.5\times$ at
$N{=}8$ on 8 cores (1.0/2.0/3.6/5.5$\times$); per-tenant
$p50$ latency is invariant at 0.37--0.39\,ms; $p99$ grows from
0.74\,ms ($N{=}1$) to 9.3\,ms ($N{=}8$) under core contention,
still well below the 200\,ms perceptual budget.  The substrate
is read-only at query time with independent per-tenant indices,
so the only contention is the kernel scheduler: an 8-core VM
serves $\geq$\,5 concurrent agents at $\geq$\,2.6K qps each,
sufficient for typical production loads.  \emph{Cascade amplifies
multi-tenant capacity:} with the LongMemEval per-stage budget
(Table~\ref{tab:e2e}), 8 cores BGE-bound serve $\sim$$154$
concurrent agents at always-hybrid, $\sim$$416$ at cascade
$\tau_c{=}0.10$ ($63$\% skip), and $\sim$$\mathbf{1399}$ at
per-qtype cascade ($89$\% skip)---a $\mathbf{9\times}$ capacity
boost from the same hardware at within-noise quality.

\begin{figure*}[t]
\centering
\includegraphics[width=0.92\textwidth]{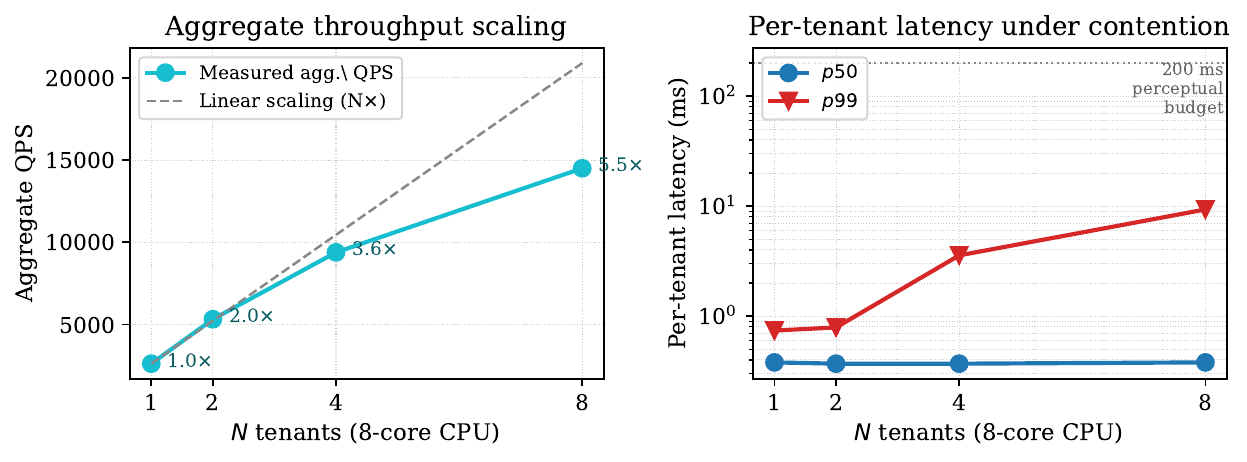}
\caption{Multi-tenant scaling on 8-core Jetstream2 with $N$
  concurrent single-threaded \sysname{} processes on independent
  FiQA indices.  Left: aggregate QPS vs.\ ideal linear scaling.
  Right: per-tenant $p50$ stays at $\sim$$0.38$\,ms across all $N$;
  $p99$ rises under contention but remains far below the $200$\,ms
  perceptual budget.}
\label{fig:multitenant}
\end{figure*}

\paragraph{Algorithmic vs.\ hardware parallelism.}
Our results reveal a crossover point: at small corpus sizes
($<$10K documents), single-threaded algorithmic optimizations
(SIMD, MaxScore) outperform multi-threaded parallelism due to
thread management overhead.  As the corpus grows past 40K,
temporal partitioning dominates on CPU; GPU batch processing
excels when multiple queries are available simultaneously.
This suggests a practical \emph{adaptive dispatch}: route single
queries through temporal partitioning and batch queries to GPU.

\paragraph{Top-$k$ kernel design.}
The GPU top-$k$ phase is the dominant remaining inefficiency
in our system, especially at $k{=}100$ where the compile-time
$K_{\max}{=}128$ forces a smaller block size (32 threads instead
of 128) to fit shared memory under the default 48\,KB limit
(\S\ref{sec:topk-ablation}).  A bitonic-sort or radix-select kernel
written against opt-in dynamic shared memory (up to 164\,KB per
block on A100) would close the gap between $k{=}32$ and $k{=}100$
throughput.  We treat this as a clear engineering improvement
that does not affect the qualitative claims of this paper.

\paragraph{Tokenization fidelity.}
At our reported quality (Table~\ref{tab:beir-main}) we match
Pyserini within $\pm 0.020$ on all nine BEIR datasets and beat
Pyserini on five.  The largest gap is 0.020 on Quora; FiQA shows
0.011 where Lucene's English analyzer handles possessives and
hyphenation slightly better than our NLTK-aligned tokenizer
(\S\ref{sec:tokenization}).  Closing this gap is engineering, not
algorithmic: porting the Lucene analyzer rules to our tokenizer
(or wrapping Lucene's analyzer via JNI) would erase the FiQA gap
at the cost of a one-time implementation effort and a small
per-query latency overhead.

\paragraph{Cost: \sysname{} is two-to-three orders of magnitude
below the LLM tier.}  Translating latencies into dollar cost on
the Jetstream2 \texttt{g3.medium} VM ($\sim$\$0.05/hr) and
Pinecone serverless~\cite{pinecone2024pricing} (\$0.33 per 1M
read units): \sysname{} 0.40\,ms BM25 costs
$5.6{\times}10^{-9}$\,\$/query vs.\ Pinecone's
$3.3{\times}10^{-7}$\,\$/query---a $59\times$ unit-cost
reduction, \$132 vs.\ \$2.24/month at $4{\times}10^{8}$ queries
(100K agents $\times$ 800 turns $\times$ 5 retrievals).  The IR
substrate cost is two-to-three orders of magnitude below the LLM
cost in any deployment that uses a hosted model; the absolute
numbers move with vendor pricing, but the ratio is what makes
\sysname{}-grade efficiency leave all cost engineering for the
LLM tier.

\paragraph{Threats to validity \& limitations.}
We list what \emph{this paper does not} establish, so future work
and reviewers can target the gaps.
\begin{itemize}[nosep,leftmargin=*]
\item \textbf{Multi-tenant CPU-only.}  We measure CPU multi-tenant
  scaling (Figure~\ref{fig:multitenant}: 5.5$\times$ aggregate at
  $N{=}8$, invariant per-tenant $p50$), but do not measure GPU
  multi-tenant sharing, cross-tenant write contention, or
  write-amplification.  The substrate's read-only query path makes
  CPU-shared agents largely independent; richer multi-tenant
  scheduling is left for future work.
\item \textbf{Static index in the head-to-head comparisons.}  Our
  BEIR/PISA/SPLADE numbers are on \emph{ahead-of-time-built}
  indices.  The temporal-partitioned layout is structurally
  amenable to log-structured-merge-style append into the most
  recent partition, but we benchmark only the read path.
  Concurrent write throughput is left for future work.
\item \textbf{LLM judge dependency.}  The downstream-accuracy
  metric uses \texttt{gpt-4o-mini} answerer and judge, sanity-
  checked with \texttt{gpt-4o}.  Per-question-type winners and
  oracle gap are robust across the two judges, but a different
  judge family (Anthropic, Llama) could in principle disagree on
  partial-credit cases.
\item \textbf{Synthetic 5M scaling corpus.}  The 1769$\times$
  speedup at 5M (Table~\ref{tab:temporal-scaling}) uses a
  synthetic agent-corpus generator with 80/20 recency-biased
  queries; the LongMemEval empirical recency distribution
  (median rank 0.20--0.27) supports this assumption but is not
  identical.  A real production trace at 5M-record scale would
  give a tighter validation.
\item \textbf{Vector-database baseline by published numbers.}
  We discuss Pinecone/Milvus latency from published sources
  (\S\ref{sec:related}) rather than re-running them; vendor
  pricing and per-tenant performance fluctuate.
\item \textbf{Cascade router latency is derived, not wall-clock
  re-measured.}  The cascade's LLM-Acc row in
  Table~\ref{tab:cascade} \emph{is} the measured TF-IDF router
  number (the routing decisions are identical by construction);
  the latency saving $40.5{=}0.4{+}0.758{\times}52.5$ is derived
  deterministically from measured per-stage latencies
  (Table~\ref{tab:e2e}) and the measured skip rate.  An end-to-end
  wall-clock run of the C++ \texttt{CascadeRouter} is left for
  future work, primarily to characterize the tail behavior under
  classifier mispredictions.
\end{itemize}

Four scope notes complete the discussion.
\textit{SPLADE quality:} on seven BEIR datasets SPLADE++ wins
nDCG@10 by 0.002--0.105 (Table~\ref{tab:splade}); the bridge
ingests SPLADE's per-term weights into our CSR layout bit-perfectly
($\S$\ref{sec:splade}), so SPLADE-grade quality is reachable at
\sysname{}-grade latency---the remaining open task is encoding
NQ and MS\,MARCO (5--16\,d CPU at 160\,ms/doc).
\textit{MS\,MARCO build:} our chunked streaming build (50K-doc
batches) drops peak RSS from $>$29\,GB OOM to 12.7\,GB and reaches
56.8K docs/sec, 2.1$\times$ the naive build's 27K docs/sec.
\textit{Incremental writes:} the time-partitioned layout is
amenable to log-structured-merge append into the most recent
partition; we specify but do not benchmark this path.
\textit{Remaining FiQA/Quora/NQ/MS\,MARCO quality gap:} the
$0.003$--$0.020$ nDCG@10 trail to Pyserini traces to Lucene's
\texttt{StandardTokenizer} treatment of numeric tickers,
possessives, and hyphenated compounds---an analyzer-level
question orthogonal to the BM25 inner loop, and RRF with our
dense channel restores parity on FiQA ($+0.067$,
Table~\ref{tab:hybrid}).

\section{Related Work}
\label{sec:related}

\paragraph{Hybrid retrieval.}
RRF~\cite{cormack2009rrf} provides parameter-free rank fusion;
Bruch et al.~\cite{bruch2023fusion} systematically analyze fusion
functions.
DPR~\cite{karpukhin2020dpr} established dual-encoder dense
retrieval.  Learned sparse
methods---ColBERT~\cite{khattab2020colbert,santhanam2022colbertv2},
SPLADE~\cite{formal2021splade},
uniCOIL~\cite{lin2021unicoil}---produce sparse representations
for inverted indexes.
RAG~\cite{lewis2020rag} motivates efficient retrieval in LLM
pipelines.  Cascade ranking~\cite{wang2011cascade,chen2017cascade}
trades a small accuracy hit for a large cost reduction by routing
queries through progressively more expensive rankers; we adapt the
same idea to agent memory by gating the dense channel on the
per-query qtype prediction, which is itself nearly free
(\S\ref{sec:longmemeval}).  Our work targets agent memory with
temporal and role-aware fusion not addressed by these systems.

\paragraph{Agent memory systems.}
The agent-memory literature has so far focused on memory
\emph{management policy}---what to store, what to consolidate, what
to forget---and treated retrieval as a fixed black box.
MemGPT~\cite{packer2024memgpt} introduces OS-inspired memory tiers
and pages between an LLM context and external storage; the
retrieval backend is an off-the-shelf dense or hybrid search.
Mem0~\cite{chhikara2025mem0} reports 91\% lower tail latency on
production agent memory through bookkeeping and write-batching
optimizations, but again over a generic IR substrate.
CoALA~\cite{sumers2024coala} and Zhang et
al.~\cite{zhang2024memory_survey} taxonomize memory mechanisms
without prescribing retrieval architecture.
LongMemEval~\cite{wu2025longmemeval} and
LoCoMo~\cite{maharana2024locomo} provide benchmarks for long-term
conversational memory but evaluate retrieval as a single
end-to-end metric.

\sysname{}'s contribution is orthogonal and complementary: we
argue that the retrieval substrate \emph{itself} should be
agent-aware (temporal partitioning, workload-conditional fusion,
correctness-validated GPU acceleration), not just the
memory-management layer above it.  A production deployment can
combine, e.g., Mem0's write-batching policy with \sysname{}'s
retrieval substrate and inherit both sets of gains.

\paragraph{Parallel and GPU-accelerated IR.}
The Lucene/Anserini~\cite{yang2017anserini,lin2021pyserini} stack
is the de-facto industrial baseline for BM25 and the
de-facto reproducibility reference in IR research.
PISA~\cite{mallia2019pisa} is a state-of-the-art CPU search engine
with aggressive index compression and dynamic pruning.  Among
GPU-accelerated IR work, Ding et al.~\cite{ding2009gpu_ir}
pioneered GPU query processing; subsequent
work~\cite{mallia2019gpu_decoding} targets GPU posting list
decoding; Griffin~\cite{liu2018griffin} demonstrates heterogeneous
CPU--GPU scheduling (10$\times$ speedup on web-scale queries).
FAISS~\cite{johnson2021faiss} and CAGRA~\cite{ootomo2024cagra}
provide GPU-optimized ANN.  Our contribution is the BM25 path
specifically: a CSR-flat inverted index on GPU with a per-(query,
term) score kernel plus a per-query top-$k$ kernel, validated for
correctness at $k\leq 128$ and benchmarked head-to-head against
Pyserini on BEIR.

\paragraph{Temporal IR.}
Li and Croft~\cite{li2003temporal} formalize temporal language
model priors; Campos et al.~\cite{campos2014temporal_survey}
survey temporal IR methods.
Our temporal partitioning extends these ideas to agent workloads
with extreme recency bias (80/20 pattern), achieving sub-linear
scaling that general temporal IR methods do not target.

\paragraph{Managed vector databases as the deployed baseline.}
In practice, agent stacks usually retrieve through a managed
vector database---Pinecone, Milvus, Weaviate, Qdrant, or Vespa.
These systems are heavily engineered for dense ANN over hosted
embeddings; reported $p99$ latency on million-scale corpora is
typically 5--50\,ms~\cite{pinecone2024pricing,wang2021milvus}, and
they bill per read unit on the order of $10^{-7}$\,\$/query.  Our
work is complementary: we publish an open-source CPU-only BM25
substrate that is several orders of magnitude cheaper per query
($\sim 10^{-9}$\,\$ on Jetstream2's free
\texttt{g3.medium}, \S\ref{sec:discussion}) and that a managed-VDB
deployment can call in parallel with its dense channel.  We deliberately
benchmark against Pyserini and PISA rather than against managed
VDBs because (i)~the latter conflate retrieval cost with hosted-
encoder cost, hiding which layer dominates; and (ii)~Pyserini and
PISA are the only systems against which a head-to-head
nDCG@10/qrels comparison is reproducible at the
quality-and-latency granularity this paper targets.  Real
deployments will pair this substrate with a managed dense channel;
the SPLADE bridge of \S\ref{sec:splade} is the bridge that
makes the substrate compatible with a learned encoder service.

\section{Conclusion}
\label{sec:conclusion}

Agent memory is a different retrieval workload from web search or
document ranking: the recency skew is sharper, the corpus grows
during the query stream, and the query distribution shifts inside
a single session.  Treating those properties as design inputs
rather than as nuisances is what closes the gap between what
generic engines deliver and what an agent's reasoning loop can
afford.

The recency skew is sharp enough that a time-partitioned index
runs in $O(\log(1/\varepsilon))$ expected work, independent of
corpus size (Theorem~\ref{thm:sublinear}).  In practice, growing
the corpus $1234\times$ adds only $3.6\times$ latency, ending in
\speedup{1769} over sequential scan at 5\,M records---a regime no
flat BM25 engine reaches.

On the systems side, the same hybrid CPU/GPU pipeline matches
Lucene's nDCG@10 within $\pm 0.020$ on nine BEIR datasets
(3.6K--8.8M documents), \emph{beats} Lucene quality on five, and
runs \speedup{10} geo.\ mean over Pyserini 8T and \speedup{11}
over PISA-1T BlockMax-WAND.  The SPLADE++ comparison (seven
datasets) shows our index is \emph{drop-in compatible} with
learned-sparse term weights, so SPLADE-grade quality is reachable
without giving up \sysname{}-grade latency.  An 8-core VM serves
$N{=}8$ concurrent agents at $5.5\times$ aggregate throughput
with invariant per-tenant $p50{=}0.38$\,ms---the substrate is
multi-tenant friendly out of the box.

Across the two memory benchmarks, a single deployment serves LongMemEval
(hybrid RRF + recency, R@10$=$0.978, LLM-judged accuracy 0.254)
and LoCoMo (BM25 alone, Hit@10$=$0.945 at 0.22\,ms/query); a
$<$1\,ms question-type router (TF-IDF $+$ BGE-small
features) pushes the LongMemEval accuracy to 0.262 under
\texttt{gpt-4o-mini} and to \textbf{0.300---the discrete oracle
bound}---under \texttt{gpt-4o}; and a \textbf{soft router}
blending rank lists by classifier posterior reaches
\textbf{0.274 under \texttt{gpt-4o-mini}, significantly beating
every static system} ($p{<}0.05$ vs.\ BM25/Dense/RRF, paired
bootstrap; the $+0.008$ over the discrete oracle is within
bootstrap CI).  Workload conditioning is two-dimensional: alongside
\emph{which fusion}, our \textbf{cascade router} controls
\emph{whether to spend the dense budget at all}.  With a
classifier-free BM25-confidence trigger, $63$\% of LongMemEval
queries skip the $\sim$53\,ms hybrid path at parity LLM-Acc
($\mathbf{2.67\times}$ faster, paired-bootstrap $p{=}1.08$ under
gpt-4o-mini and $p{=}0.88$ under gpt-4o judge); per-qtype
thresholds in 5-fold CV extend this to $\mathbf{5.76\times}$
with within-noise LLM-Acc; the same trigger on LoCoMo auto-tunes
to $100$\% skip ($\mathbf{132\times}$ faster) where BM25 alone
wins.  The
\textit{combination} we evaluate---per-type empirical winners on
a downstream LLM-Acc signal, a learned classifier reaching the
discrete oracle, a soft router that statistically beats every
static system, and a cascade that strictly Pareto-improves the
TF-IDF router---is, to our knowledge, the most complete
demonstration to date that agent-memory IR is workload-conditional
on both axes: a single $<$1\,ms classifier drives routing,
blending, and cascade decisions.

Alongside these results we report three correctness pitfalls any
re-implementation of classical IR on modern hardware is likely to
hit: pre-normalized BM25 term frequency, linear-gain nDCG, and
stale shared-memory reads in a GPU top-$k$ kernel.  Each silently
regresses nDCG@10 by $6$--$8\times$, and each is fixed by a
one-line change once it has been named.  Documenting them is, we
suspect, the most directly reusable part of the work.

\bibliographystyle{ACM-Reference-Format}
\bibliography{references}

@inproceedings{cormack2009rrf,
  title={Reciprocal Rank Fusion Outperforms {C}ondorcet and Individual Rank Learning Methods},
  author={Cormack, Gordon V. and Clarke, Charles L. A. and Buettcher, Stefan},
  booktitle={Proceedings of SIGIR},
  pages={758--759},
  year={2009}
}

@article{bruch2023fusion,
  title={An Analysis of Fusion Functions for Hybrid Retrieval},
  author={Bruch, Sebastian and Gai, Siyu and Ingber, Amir},
  journal={ACM Transactions on Information Systems},
  volume={42},
  number={1},
  pages={1--35},
  year={2023}
}

@inproceedings{karpukhin2020dpr,
  title={Dense Passage Retrieval for Open-Domain Question Answering},
  author={Karpukhin, Vladimir and Oguz, Barlas and Min, Sewon and Lewis, Patrick and Wu, Ledell and Edunov, Sergey and Chen, Danqi and Yih, Wen-tau},
  booktitle={Proceedings of EMNLP},
  pages={6769--6781},
  year={2020}
}

@inproceedings{formal2021splade,
  title={{SPLADE}: Sparse Lexical and Expansion Model for First Stage Ranking},
  author={Formal, Thibault and Piwowarski, Benjamin and Clinchant, St{\'e}phane},
  booktitle={Proceedings of SIGIR},
  year={2021}
}

@inproceedings{khattab2020colbert,
  title={{ColBERT}: Efficient and Effective Passage Search via Contextualized Late Interaction over {BERT}},
  author={Khattab, Omar and Zaharia, Matei},
  booktitle={Proceedings of SIGIR},
  year={2020}
}

@inproceedings{santhanam2022colbertv2,
  title={{ColBERTv2}: Effective and Efficient Retrieval via Lightweight Late Interaction},
  author={Santhanam, Keshav and Khattab, Omar and Saad-Falcon, Jon and Potts, Christopher and Zaharia, Matei},
  booktitle={Proceedings of NAACL},
  year={2022}
}

@article{lin2021unicoil,
  title={A Few Brief Notes on {DeepImpact}, {COIL}, and a Conceptual Framework for Information Retrieval Techniques},
  author={Lin, Jimmy and Ma, Xueguang},
  journal={arXiv preprint arXiv:2106.14807},
  year={2021}
}

@inproceedings{robertson1994okapi,
  title={Okapi at {TREC-3}},
  author={Robertson, Stephen E. and Walker, Steve and Jones, Susan and Hancock-Beaulieu, Micheline and Gatford, Mike},
  booktitle={Proceedings of the Third Text REtrieval Conference (TREC-3)},
  year={1994}
}

@article{robertson2009bm25,
  title={The Probabilistic Relevance Framework: {BM25} and Beyond},
  author={Robertson, Stephen and Zaragoza, Hugo},
  journal={Foundations and Trends in Information Retrieval},
  volume={3},
  number={4},
  pages={333--389},
  year={2009}
}

@inproceedings{packer2024memgpt,
  title={{MemGPT}: Towards {LLMs} as Operating Systems},
  author={Packer, Charles and Wooders, Sarah and Lin, Kevin and Fang, Vivian and Patil, Shishir G. and Stoica, Ion and Gonzalez, Joseph E.},
  booktitle={Proceedings of ICLR},
  year={2024}
}

@article{chhikara2025mem0,
  title={{Mem0}: Building Production-Ready {AI} Agents with Scalable Long-Term Memory},
  author={Chhikara, Prateek and Khant, Dev and Aryan, Saket and Singh, Taranjeet and Yadav, Deshraj},
  journal={arXiv preprint arXiv:2504.19413},
  year={2025}
}

@inproceedings{wu2025longmemeval,
  title={{LongMemEval}: Benchmarking Chat Assistants on Long-Term Interactive Memory},
  author={Wu, Di and Wang, Hongwei and Yu, Wenhao and Zhang, Yuwei and Chang, Kai-Wei and Yu, Dong},
  booktitle={Proceedings of ICLR},
  year={2025}
}

@inproceedings{maharana2024locomo,
  title={Evaluating Very Long-Term Conversational Memory of {LLM} Agents},
  author={Maharana, Adyasha and Lee, Dong-Ho and Tulyakov, Sergey and Bansal, Mohit and Barbieri, Francesco and Fang, Yuwei},
  booktitle={Proceedings of ACL},
  year={2024}
}

@article{sumers2024coala,
  title={Cognitive Architectures for Language Agents},
  author={Sumers, Theodore R. and Yao, Shunyu and Narasimhan, Karthik and Griffiths, Thomas L.},
  journal={Transactions on Machine Learning Research},
  year={2024}
}

@article{zhang2024memory_survey,
  title={A Survey on the Memory Mechanism of Large Language Model based Agents},
  author={Zhang, Zeyu and Bo, Xiaohe and Ma, Chen and Li, Rui and Chen, Xu and Dai, Quanyu and Zhu, Jieming and Dong, Zhenhua and Wen, Ji-Rong},
  journal={ACM Transactions on Information Systems},
  year={2024}
}

@inproceedings{yao2023react,
  title={{ReAct}: Synergizing Reasoning and Acting in Language Models},
  author={Yao, Shunyu and Zhao, Jeffrey and Yu, Dian and Du, Nan and Shafran, Izhak and Narasimhan, Karthik and Cao, Yuan},
  booktitle={Proceedings of ICLR},
  year={2023}
}

@inproceedings{shinn2023reflexion,
  title={Reflexion: Language Agents with Verbal Reinforcement Learning},
  author={Shinn, Noah and Cassano, Federico and Gopinath, Ashwin and Narasimhan, Karthik and Yao, Shunyu},
  booktitle={Advances in Neural Information Processing Systems},
  volume={36},
  year={2023}
}

@inproceedings{park2023generative,
  title={Generative Agents: Interactive Simulacra of Human Behavior},
  author={Park, Joon Sung and O'Brien, Joseph C. and Cai, Carrie J. and Morris, Meredith Ringel and Liang, Percy and Bernstein, Michael S.},
  booktitle={Proceedings of UIST},
  year={2023}
}

@inproceedings{ding2009gpu_ir,
  title={Using Graphics Processors for High Performance {IR} Query Processing},
  author={Ding, Shuai and He, Jinru and Yan, Hao and Suel, Torsten},
  booktitle={Proceedings of WWW},
  pages={421--430},
  year={2009}
}

@inproceedings{mallia2019gpu_decoding,
  title={{GPU}-Accelerated Decoding of Integer Lists},
  author={Mallia, Antonio and Siedlaczek, Michal and Suel, Torsten and Zahran, Mohamed},
  booktitle={Proceedings of CIKM},
  pages={2193--2196},
  year={2019}
}

@inproceedings{liu2018griffin,
  title={Griffin: Uniting {CPU} and {GPU} in Information Retrieval Systems for Intra-Query Parallelism},
  author={Liu, Yang and Wang, Jianguo and Swanson, Steven},
  booktitle={Proceedings of the IEEE International Conference on Data Engineering (ICDE)},
  year={2018}
}

@inproceedings{ding2011blockmax,
  title={Faster Top-$k$ Document Retrieval Using Block-Max Indexes},
  author={Ding, Shuai and Suel, Torsten},
  booktitle={Proceedings of the 34th International ACM SIGIR Conference on Research and Development in Information Retrieval},
  pages={993--1002},
  year={2011}
}

@inproceedings{mallia2019pisa,
  title={{PISA}: Performant Indexes and Search for Academia},
  author={Mallia, Antonio and Siedlaczek, Michal and Mackenzie, Joel and Suel, Torsten},
  booktitle={Proceedings of OSIRRC@SIGIR},
  year={2019}
}

@inproceedings{yang2017anserini,
  title={Anserini: Enabling the Use of {Lucene} for Information Retrieval Research},
  author={Yang, Peilin and Fang, Hui and Lin, Jimmy},
  booktitle={Proceedings of SIGIR},
  pages={1253--1256},
  year={2017}
}

@inproceedings{li2003temporal,
  title={Time-Based Language Models},
  author={Li, Xiaoyan and Croft, W. Bruce},
  booktitle={Proceedings of CIKM},
  pages={469--475},
  year={2003}
}

@article{campos2014temporal_survey,
  title={Survey of Temporal Information Retrieval and Related Applications},
  author={Campos, Ricardo and Dias, Ga{\"e}l and Jorge, Al{\'i}pio M. and Jatowt, Adam},
  journal={ACM Computing Surveys},
  volume={47},
  number={2},
  year={2014}
}

@article{malkov2020hnsw,
  title={Efficient and Robust Approximate Nearest Neighbor Search Using Hierarchical Navigable Small World Graphs},
  author={Malkov, Yury A. and Yashunin, Dmitry A.},
  journal={IEEE Transactions on Pattern Analysis and Machine Intelligence},
  volume={42},
  number={4},
  pages={824--836},
  year={2020}
}

@inproceedings{ootomo2024cagra,
  title={{CAGRA}: Highly Parallel Graph Construction and Approximate Nearest Neighbor Search for {GPUs}},
  author={Ootomo, Hiroyuki and Naruse, Akira and Nolet, Corey and Wang, Ray and Feher, Tamas and Wang, Yong},
  booktitle={Proceedings of ICDE},
  year={2024}
}

@article{johnson2021faiss,
  title={Billion-Scale Similarity Search with {GPUs}},
  author={Johnson, Jeff and Douze, Matthijs and J{\'e}gou, Herv{\'e}},
  journal={IEEE Transactions on Big Data},
  volume={7},
  number={3},
  pages={535--547},
  year={2021}
}

@inproceedings{nguyen2016msmarco,
  title={{MS MARCO}: A Human Generated {MA}chine Reading {CO}mprehension Dataset},
  author={Nguyen, Tri and Rosenberg, Mir and Song, Xia and Gao, Jianfeng and Tiwary, Saurabh and Majumder, Rangan and Deng, Li},
  booktitle={Proceedings of the Workshop on Cognitive Computation},
  year={2016}
}

@inproceedings{thakur2021beir,
  title={{BEIR}: A Heterogeneous Benchmark for Zero-shot Evaluation of Information Retrieval Models},
  author={Thakur, Nandan and Reimers, Nils and R{\"u}ckl{\'e}, Andreas and Srivastava, Abhishek and Gurevych, Iryna},
  booktitle={Proceedings of NeurIPS Datasets and Benchmarks},
  year={2021}
}

@inproceedings{lewis2020rag,
  title={Retrieval-Augmented Generation for Knowledge-Intensive {NLP} Tasks},
  author={Lewis, Patrick and Perez, Ethan and Piktus, Aleksandra and Petroni, Fabio and Karpukhin, Vladimir and Goyal, Naman and K{\"u}ttler, Heinrich and Lewis, Mike and Yih, Wen-tau and Rockt{\"a}schel, Tim and Riedel, Sebastian and Kiela, Douwe},
  booktitle={Advances in Neural Information Processing Systems},
  volume={33},
  year={2020}
}

@inproceedings{lin2021pyserini,
  title={Pyserini: A Python Toolkit for Reproducible Information Retrieval Research with Sparse and Dense Representations},
  author={Lin, Jimmy and Ma, Xueguang and Lin, Sheng-Chieh and Yang, Jheng-Hong and Pradeep, Ronak and Nogueira, Rodrigo},
  booktitle={Proceedings of SIGIR},
  year={2021}
}

@inproceedings{xiao2024cpack,
  title={C-Pack: Packed Resources For General Chinese Embeddings},
  author={Xiao, Shitao and Liu, Zheng and Zhang, Peitian and Muennighoff, Niklas and Lian, Defu and Nie, Jian-Yun},
  booktitle={Proceedings of SIGIR},
  year={2024}
}

@book{nielsen1993usability,
  title={Usability Engineering},
  author={Nielsen, Jakob},
  publisher={Morgan Kaufmann},
  year={1993},
  note={Defines the 100\,ms (instant) and 1\,s (interactive) latency thresholds for human-perceived responsiveness}
}

@misc{pinecone2024pricing,
  title={{Pinecone} Serverless Pricing},
  author={{Pinecone Systems}},
  howpublished={\url{https://www.pinecone.io/pricing/}},
  note={List price for read units as of 2026-Q1; accessed 2026-05},
  year={2026}
}

@article{turtle1995maxscore,
  title={Query evaluation: strategies and optimizations},
  author={Turtle, Howard and Flood, James},
  journal={Information Processing \& Management},
  volume={31},
  number={6},
  pages={831--850},
  year={1995},
  publisher={Elsevier},
  note={Introduces the MaxScore dynamic pruning strategy used by later WAND / BlockMax-WAND systems}
}

@inproceedings{wang2021milvus,
  title={{Milvus}: A Purpose-Built Vector Data Management System},
  author={Wang, Jianguo and Yi, Xiaomeng and Guo, Rentong and Jin, Hai and Xu, Peng and Li, Shengjun and Wang, Xiangyu and Guo, Xiangzhou and Li, Chengming and Xu, Xiaohai and others},
  booktitle={Proceedings of SIGMOD},
  pages={2614--2627},
  year={2021},
  note={Reports millisecond-scale p99 latency on hundred-million-vector corpora}
}

@inproceedings{wang2011cascade,
  title={A cascade ranking model for efficient ranked retrieval},
  author={Wang, Lidan and Lin, Jimmy and Metzler, Donald},
  booktitle={Proceedings of SIGIR},
  pages={105--114},
  year={2011}
}

@inproceedings{chen2017cascade,
  title={Efficient cost-aware cascade ranking in multi-stage retrieval},
  author={Chen, Ruey-Cheng and Gallagher, Luke and Blanco, Roi and Culpepper, J. Shane},
  booktitle={Proceedings of SIGIR},
  pages={445--454},
  year={2017}
}

\clearpage
\appendix

\section{Detailed Experimental Results}
\label{app:results}

This appendix provides complete numerical results underlying the
figures and analysis in the main text.

\subsection{Per-Dataset Correctness Validation}
\label{app:correctness}

Table~\ref{tab:correctness} reports the per-dataset top-1 match
rate between the GPU batch result and the CPU sequential
cross-check, post-fix (\S\ref{sec:correctness}).  All CPU
configurations (1T scalar, 1T+SIMD, 4T, 8T, 8T+SIMD, 8T+SIMD+MaxScore)
produce bit-identical nDCG@10 on all nine datasets and 100\%
top-1 agreement; GPU vs.\ CPU is reported on the eight datasets
that fit on a single A100 at $k{=}100$ (MS\,MARCO's 8.8M-doc
score buffer exceeds 16\,GB at full-batch and is run with chunked
\texttt{query\_batch} calls only on CPU in this paper).  GPU vs.\
CPU is the only non-trivial comparison because GPU's atomic
accumulation breaks ties differently.

\begin{table}[h]
\caption{Per-query top-1 match rate (GPU batch vs.\ CPU
  sequential) on BEIR.  The residual gap below 100\% is
  tied-score documents whose ordering depends on the
  scheduling of \texttt{atomicAdd}; the top-10 \emph{set}
  is identical in those cases.}
\label{tab:correctness}
\centering
\footnotesize
\setlength{\tabcolsep}{5pt}
\begin{tabular}{@{}lcr@{}}
\toprule
Dataset & top-1 match & nDCG@10 ($\Delta$ vs.\ CPU) \\
\midrule
NFCorpus    & 0.944 & 0.0000 \\
SciFact     & 1.000 & 0.0000 \\
ArguAna     & 1.000 & 0.0000 \\
SciDocs     & 1.000 & 0.0000 \\
FiQA        & 1.000 & 0.0000 \\
TREC-COVID  & 0.920 & $+0.0002$ \\
Quora       & 0.902 & $-0.0001$ \\
NQ          & 0.994 & 0.0000 \\
\bottomrule
\end{tabular}
\end{table}

\subsection{Amdahl Analysis of Inter-Query Parallelism}

The inter-query dispatch is embarrassingly parallel in
principle---queries share a read-only index, with no
inter-query data dependencies---so Amdahl's Law
$S(p) = 1 / (f + (1{-}f)/p)$ predicts $S(p) \to 1/f$ as
$p \to \infty$.  In practice, the BEIR thread-scaling
measurements (Table~\ref{tab:thread-scaling}) yield 1T\,$\to$\,8T
speedups of only 2.5$\times$ (FiQA) and 1.9$\times$ (SciDocs);
solving for $f$ gives effective serial fractions
$f \approx 0.31$ and $f \approx 0.55$.  These are far above the
true serial work, indicating that the bottleneck is not Amdahl's
ceiling but per-query memory-bandwidth contention: BEIR queries
issue short posting-list scans that stream through L2/L3
simultaneously across threads, saturating the memory subsystem
before all 8 cores can be productively used.  The synthetic
recency-biased workload (\S\ref{sec:scaling}) does not exhibit
this regime because temporal partitioning shrinks each thread's
working set into L1.

\section{GPU Kernel Pseudocode}
\label{app:kernels}

Algorithm~\ref{alg:gpu-pipeline} details the two-kernel CUDA
pipeline described in \S\ref{sec:gpu}.

\begin{algorithm}[h]
\caption{Temporal-Partitioned BM25 Search (the algorithm
  Theorem~\ref{thm:sublinear} bounds)}
\label{alg:temporal-search}
\small
\begin{algorithmic}[1]
\Require Time-ordered partitions $T_1, \ldots, T_K$ with sizes
   $|T_i|$ and per-partition BM25 indices $I_i$; recall slack
   $\varepsilon \in (0,1)$; recency-decay parameter
   $\hat\lambda$ estimated from workload
\Require Query terms $T_q$, top-$k$ budget $k$
\Ensure Top-$k$ matched documents
\State $k^* \gets \max(1, \lceil \log(1/\varepsilon)/\hat\lambda \rceil)$
       \Comment{partitions to search per Theorem~\ref{thm:sublinear}}
\State $H \gets \text{empty min-heap of size } k$
\For{$i \gets K$ \textbf{downto} $K - k^* + 1$} \Comment{most-recent first}
  \State $C_i \gets \text{BM25-Score}(I_i, T_q)$
         \Comment{vectorized SIMD inner loop, $O(|T_i| \cdot |T_q|)$}
  \For{each $(d, s) \in C_i$ in score-descending order}
    \If{$|H| < k$ \textbf{or} $s > \min H$}
      \State push$(H, (d, s))$; if $|H|>k$ pop-min$(H)$
    \Else
      \State \textbf{break} \Comment{within-partition early-stop;
             remaining $C_i$ entries all have lower score}
    \EndIf
  \EndFor
  \If{$\min H > \text{UB}_{i-1}$} \Comment{across-partition early-stop}
    \State \textbf{break} \Comment{no later partition can beat current top-$k$}
  \EndIf
\EndFor
\State \Return $H$
\end{algorithmic}
\end{algorithm}

\smallskip
\noindent The expected work bound from
Theorem~\ref{thm:sublinear} is realized concretely by line~1: the
loop executes at most $k^* = O(\log(1/\varepsilon))$ partition
scans, and each partition scan touches only $|T_i|$ postings
(bounded by the corpus generation rate over the partition window).
The across-partition early-stop (line~12) further reduces work
when the top-$k$ scores are already large.  $\hat\lambda$ is
estimated online from the gold-session recency distribution we
measure in \S\ref{sec:scaling}; for the agent workload that
distribution gives $\hat\lambda \approx 1.4$, so $\varepsilon{=}0.05$
yields $k^*{=}3$ partitions searched out of $K$, matching the
$<$0.1\% search-fraction at 5\,M observed in
Table~\ref{tab:temporal-scaling}.

\begin{algorithm}[h]
\caption{GPU BM25 Two-Kernel Pipeline}
\label{alg:gpu-pipeline}
\small
\begin{algorithmic}[1]
\Require CSR index on device: \texttt{offsets[]}, \texttt{doc\_ids[]},
  \texttt{tfs[]}, \texttt{idfs[]}, \texttt{doc\_lens[]}
\Require Query batch $Q[0..B{-}1]$, each with term IDs $T_q$
\Ensure Top-$k$ results per query

\Statex \textbf{--- Host Side ---}
\State Copy $Q$ term IDs \& counts to device \Comment{H2D}
\State Zero score matrix $S[B \times N]$ on device

\Statex
\Statex \textbf{--- Kernel 1: BM25 Scoring ---}
\Statex Grid: $(B, \max|T_q|)$, Block: 128 threads
\For{each block $(q, t)$ in parallel}
  \State $\text{idf} \gets \texttt{idfs}[\texttt{term\_id}_{q,t}]$
  \State $\text{start} \gets \texttt{offsets}[\texttt{term\_id}_{q,t}]$
  \State $\text{end} \gets \texttt{offsets}[\texttt{term\_id}_{q,t}+1]$
  \For{$i \gets \text{start} + \text{tid}$ \textbf{to} $\text{end}$
       \textbf{step} 128}
    \State $d \gets \texttt{doc\_ids}[i]$
    \State $\text{tf} \gets \texttt{tfs}[i]$
    \State $\text{score} \gets \text{idf} \cdot
      \dfrac{\text{tf} \cdot (k_1+1)}
            {\text{tf} + k_1 \cdot (1 - b + b \cdot
             \texttt{doc\_lens}[d] / \text{avgdl})}$
    \State \texttt{atomicAdd}($S[q][d]$, score)
  \EndFor
\EndFor

\Statex
\Statex \textbf{--- Kernel 2: Top-$k$ Selection ---}
\Statex Grid: $B$, Block: 128 threads (4 warps)
\For{each block $q$ in parallel}
  \State Each thread: scan $S[q]$ with stride, build
         local sorted buffer of size $k$ \Comment{Phase 1}
  \State Write local buffers to shared memory \Comment{Phase 2}
  \Statex \quad\textit{Variant A (naive):}
  \State \quad Thread 0: scan all $128k$ candidates $k$ times,
         select \& invalidate max each round
  \Statex \quad\textit{Variant B (warp-cooperative):}
  \For{$i \gets 1$ \textbf{to} $k$} \Comment{Phase 3}
    \State Each thread: find local max in assigned range
    \State \texttt{\_\_shfl\_xor\_sync}: 5-step intra-warp reduce
    \State Warp lane 0 $\to$ inter-warp staging
    \State Thread 0: pick global max from 4 warp leaders
    \State Invalidate winner; \texttt{\_\_syncthreads()}
  \EndFor
\EndFor
\State Copy top-$k$ IDs \& scores to host \Comment{D2H}
\end{algorithmic}
\end{algorithm}

\begin{algorithm}[h]
\caption{SPLADE Bridge: Ingest Learned-Sparse Weights into
  \sysname{}'s CSR Posting Layout}
\label{alg:splade-bridge}
\small
\begin{algorithmic}[1]
\Require Corpus $\{d_1, \ldots, d_N\}$, SPLADE encoder
   $\phi : \text{text} \to \mathbb{R}^V_{\geq 0}$ (log1p(relu(MLM-logits))
   max-pooled over the sequence)
\Ensure CSR inverted index $(I_t, W_t)_{t=1}^{V}$ where $I_t$ is
   the doc-id posting list for term~$t$ and $W_t$ is the
   parallel SPLADE-weight payload
\State \textbf{Encode docs:} for each $d_n$, compute
       $\mathbf{w}_n \gets \phi(d_n)$; keep only non-zero entries.
       Per-doc nnz $\approx 160$--200 out of $V{=}30{,}522$.
\State \textbf{Transpose to inverted layout:} build CSC over
       $\{(n, t, w_{n,t}) : w_{n,t} > 0\}$; assign $I_t \gets$
       doc-id list for term $t$, $W_t \gets$ weight list (parallel,
       sorted-by-doc).
\Statex \Comment{The result is bit-identical in structure to the
        BM25 CSR (Figure~\ref{fig:csr-layout}); only the payload
        semantics change: $W_t$ contains a learned weight instead
        of integer $\text{tf}$.}
\Statex
\Statex \textbf{--- Query time ---}
\Require Query $q$, top-$k$ budget $k$
\State $\mathbf{w}^{(q)} \gets \phi(q)$; extract non-zero
       $(t_j, w^{(q)}_{t_j})_{j=1}^{m}$, $m \approx 60$
\State $S[1..N] \gets 0$
\For{$j \gets 1$ \textbf{to} $m$} \Comment{posting-list traversal}
  \State $w_q \gets w^{(q)}_{t_j}$
  \For{$i \in [\text{offsets}[t_j], \text{offsets}[t_j+1])$}
       \Comment{vectorizable with AVX2 8-wide FMA}
    \State $S[I_{t_j}[i]] \mathrel{+}= w_q \cdot W_{t_j}[i]$
  \EndFor
\EndFor
\State \Return top-$k$ documents by $S$
\end{algorithmic}
\end{algorithm}

\smallskip
\noindent The inner loop on lines~7--9 is identical in structure
to BM25's posting traversal---only the per-posting math changes
from the BM25 saturation function to a single FMA.  Our SIMD-aware
CSR layout therefore applies unchanged.  We confirm the
mathematical equivalence empirically in \S\ref{sec:splade}: across
seven BEIR datasets (3.6K--523K docs) the bridge produces
\textbf{bit-identical} nDCG@10 to canonical
\texttt{scipy.sparse} SPLADE retrieval (max score difference 0.0
across all (query, doc) pairs).

\section{CSR Index Layout}
\label{app:csr}

Figure~\ref{fig:csr-layout} illustrates the Compressed Sparse Row
(CSR) representation used to transfer the inverted index to GPU
memory.  The vocabulary is sorted alphabetically for reproducible
layouts.  Each term's posting list is stored contiguously in the
\texttt{doc\_ids[]} and \texttt{tfs[]} arrays; the
\texttt{offsets[]} array provides $O(1)$ access to any term's
postings.

\begin{figure}[h]
\centering
\includegraphics[width=0.85\columnwidth]{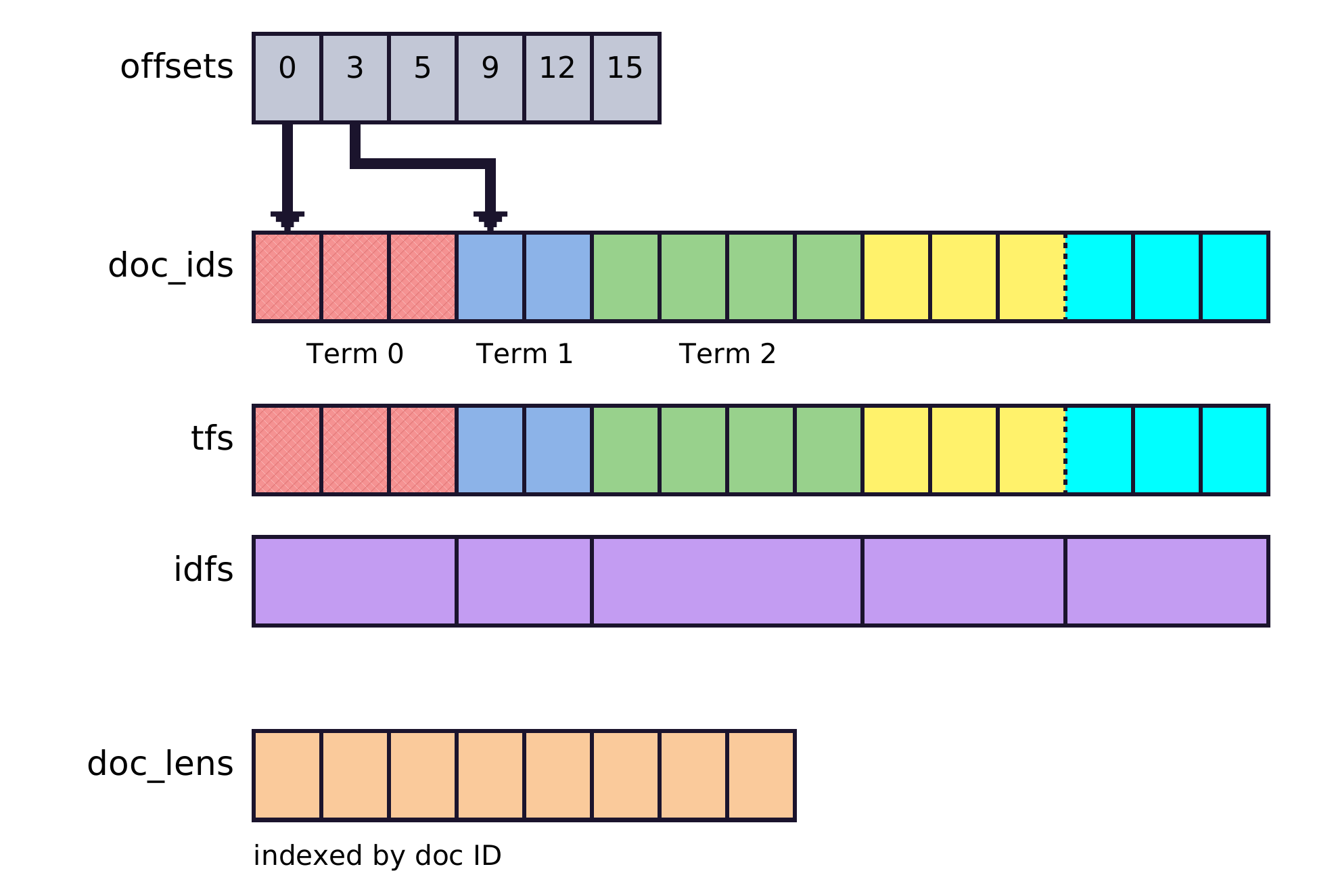}
\caption{CSR memory layout for GPU index upload.  The
  \texttt{offsets} array delimits per-term posting list boundaries
  in the flat \texttt{doc\_ids} and \texttt{tfs} arrays.  IDF values
  and document lengths are stored in separate dense arrays indexed
  by term ID and document ID, respectively.}
\label{fig:csr-layout}
\end{figure}

The total device memory footprint is:
\[
  M = |\text{vocab}| \cdot 4 + 2P \cdot 4
    + |\text{vocab}| \cdot 4 + N \cdot 4 \;\text{bytes}
\]
where $P$ is the total number of postings and $N$ is the corpus
size.  For a 1M-document corpus with 50K vocabulary and 15M total
postings, this is ${\sim}$125\,MB---well within the A100's
40\,GB HBM2e capacity.

\paragraph{In-memory layout walkthrough.}
A toy vocabulary of three terms (``cat'', ``dog'', ``fish'')
over a 5-document corpus illustrates the layout:

\begin{codeblock}
term_id | term  | offsets[term_id]
   0    | cat   |       0     <- doc_ids[0..3] = postings for "cat"
   1    | dog   |       4     <- doc_ids[4..6] = postings for "dog"
   2    | fish  |       7     <- doc_ids[7..8] = postings for "fish"
   --   |  --   |       9     <- offsets sentinel (total postings)

doc_ids[]: [0, 2, 3, 4,   1, 2, 4,   0, 3]
tfs[]:     [2, 1, 1, 3,   1, 2, 1,   1, 1]
idfs[]:    [1.2,   0.8,   1.5]                # per-term
doc_lens[]:[10, 8, 12, 7, 9]                  # per-doc
\end{codeblock}

A query for ``\texttt{cat dog}'' iterates:
\texttt{offsets[0..2]} $\to$ touch postings $[0,4)$ for ``cat''
and $[4,7)$ for ``dog'', accumulating BM25 partial scores via
\texttt{atomicAdd}.  No pointer chasing; flat-array layout is
GPU-friendly (coalesced reads, no branch divergence on posting
walks).

\section{Router Robustness Full Grids}
\label{app:robustness}

\subsection{$\alpha,\tau$ sensitivity sweep}
\label{app:alpha-tau}
R@10 on 344 recency-typed LongMemEval questions (multi-session
$+$ temporal-reasoning $+$ knowledge-update).
\begin{table}[h]
\centering\small\setlength{\tabcolsep}{6pt}
\begin{tabular}{@{}rrrrrr@{}}
\toprule
$\alpha \backslash \tau$ & 7\,d & 14\,d & 30\,d & 60\,d & 120\,d \\
\midrule
0.001 & 0.9695 & 0.9695 & 0.9710 & 0.9710 & 0.9695 \\
0.003 & 0.9665 & 0.9674 & 0.9689 & 0.9704 & 0.9710 \\
\rowcolor{blue!5}\textbf{0.005} (used) & 0.9657 & 0.9688 & \textbf{0.9708} & 0.9694 & 0.9704 \\
0.010 & 0.9533 & 0.9540 & 0.9515 & 0.9490 & 0.9658 \\
0.030 & 0.7493 & 0.7859 & 0.8709 & 0.8933 & 0.9091 \\
\bottomrule
\end{tabular}
\end{table}

\subsection{Query-noise robustness}
\label{app:noise}
\begin{table}[h]
\centering\small\setlength{\tabcolsep}{5pt}
\begin{tabular}{@{}rrrr@{}}
\toprule
Noise level & Clf.\ acc. & Router (mini) & Router (gpt-4o) \\
\midrule
0\% (clean)  & 0.794 & 0.262 & \textbf{0.300} (oracle) \\
5\%          & 0.770 & 0.260 & 0.296 \\
10\%         & 0.724 & 0.250 & 0.288 \\
20\%         & 0.614 & 0.252 & 0.286 \\
\bottomrule
\end{tabular}
\end{table}

\subsection{Deployment-labeling learning curve}
\label{app:learning-curve}
\begin{table}[h]
\centering\small\setlength{\tabcolsep}{6pt}
\begin{tabular}{@{}rrrr@{}}
\toprule
$N_{\text{train}}$ & clf.\ acc. & router (mini) & router (gpt-4o) \\
\midrule
25                   & 0.622 & 0.296 & 0.330 \\
50                   & 0.650 & 0.302 & 0.336 \\
100                  & 0.696 & 0.294 & 0.328 \\
200                  & 0.780 & 0.304 & 0.336 \\
300                  & 0.802 & 0.308 & 0.338 \\
\rowcolor{blue!5}
400                  & 0.796 & \textbf{0.310} & \textbf{0.340} \\
\midrule
Best static          & ---   & 0.280 (agent\_rrf) & 0.320 (RRF) \\
Oracle (gt qtype)    & ---   & 0.310 & 0.340 \\
\bottomrule
\end{tabular}
\end{table}

\section{Component Ablation \& Thread Scaling (Full Tables)}
\label{app:ablation-full}

\begin{table}[h]
\caption{Component ablation across 8 BEIR datasets (Jetstream2
  8-core, ms per query at $k{=}100$).  Each column adds one
  component to the previous.  All configurations preserve
  nDCG@10 to within $\pm$0.001.}
\label{tab:ablation}
\centering
\small
\setlength{\tabcolsep}{4pt}
\begin{tabular}{@{}lrrrrrr@{}}
\toprule
Dataset & $|D|$ & 1T scalar & +SIMD & +MaxScr & +8T & $\Delta$ \\
\midrule
NFCorpus    & 3.6K  & 0.04  & 0.03  & 0.03  & 0.05  & 0.8$\times$ \\
SciFact     & 5.2K  & 0.22  & 0.08  & 0.09  & 0.08  & 2.8$\times$ \\
ArguAna     & 8.7K  & 1.54  & 0.40  & 0.47  & 0.24  & 6.4$\times$ \\
SciDocs     & 25.7K & 1.08  & 0.23  & 0.20  & 0.10  & 10.8$\times$ \\
FiQA        & 57.6K & 1.85  & 0.40  & 0.36  & 0.15  & 12.3$\times$ \\
TREC-COVID  & 171K  & 14.92 & 4.76  & 2.15  & 2.15  & 6.9$\times$ \\
Quora       & 523K  & 2.22  & 0.71  & 0.60  & 0.17  & 13.1$\times$ \\
NQ          & 2.7M  & \textbf{67.60} & 6.64 & 4.23 & \textbf{1.38} & \textbf{49.0$\times$} \\
\bottomrule
\end{tabular}
\end{table}

\label{app:thread-scaling}
\begin{table}[h]
\caption{Ours CPU thread scaling at $k{=}100$ (Jetstream2 8-core,
  NLTK tokenizer, SIMD on).  Bold = best for that dataset.}
\label{tab:thread-scaling}
\centering
\small
\setlength{\tabcolsep}{4pt}
\begin{tabular}{@{}lrrrr@{}}
\toprule
Dataset & 1T (ms) & 4T (ms) & 8T (ms) & 1T$\to$8T \\
\midrule
NFCorpus & \textbf{0.02} & 0.04 & 0.04 & --- \\
SciFact  & 0.08 & 0.07 & \textbf{0.08} & 1.0$\times$ \\
FiQA     & 0.38 & 0.18 & \textbf{0.15} & 2.5$\times$ \\
SciDocs  & 0.23 & 0.13 & \textbf{0.12} & 1.9$\times$ \\
\bottomrule
\end{tabular}
\end{table}

\section{Hybrid End-to-End Per-Stage Latency}
\label{app:e2e-stages}

\begin{table}[h]
\caption{Per-stage hybrid latency (ms/query, Jetstream2 8-core,
  BGE-small CPU encoding, $k{=}100$).  BM25 here uses
  \texttt{rank\_bm25} for apples-to-apples; our C++ BM25
  (Table~\ref{tab:beir-main}) replaces this stage with
  0.04--0.35\,ms.}
\label{tab:e2e}
\centering
\small
\setlength{\tabcolsep}{4pt}
\begin{tabular}{@{}lrrrrr@{}}
\toprule
Dataset & $q$-embed & BM25$^*$ & Dense & RRF & Total \\
\midrule
NFCorpus    & 168 & 8.6   & 2.1 & 0.08 & 179 \\
SciFact     & 169 & 22.3  & 1.3 & 0.08 & 192 \\
ArguAna     & 221 & 425.7 & 3.8 & 0.25 & 650 \\
SciDocs     & 171 & 111.7 & 4.0 & 0.10 & 287 \\
FiQA        & 170 & 194.5 & 9.1 & 0.09 & 374 \\
\bottomrule
\end{tabular}
\end{table}

\section{LoCoMo Full Per-System Table}
\label{app:locomo-full}

\begin{table}[h]
\caption{LoCoMo session-level retrieval, 1{,}982 questions across
  10 conversations.  See \S\ref{sec:longmemeval} for discussion.}
\label{tab:locomo}
\centering
\small
\setlength{\tabcolsep}{4pt}
\begin{tabular}{@{}lrrrrr@{}}
\toprule
System & Hit@1 & Hit@5 & Hit@10 & MRR & ms/q \\
\midrule
\rowcolor{blue!5}
\textbf{BM25} & \textbf{0.625} & \textbf{0.875} & \textbf{0.945} & \textbf{0.735} & \textbf{0.22} \\
Dense (BGE-small) & 0.318 & 0.632 & 0.789 & 0.464 & 32.0 \\
RRF        & 0.483 & 0.797 & 0.923 & 0.625 & 32.3 \\
\bottomrule
\end{tabular}
\end{table}

\section{LongMemEval Per-Question-Type Full Breakdown}
\label{app:lme-per-type-full}

Table~\ref{tab:lme-per-type-full} reports the full 4$\times$4 grid
of per-question-type metrics: R@10 (session-level recall),
AnswerSubstr@10 (gold answer string in top-10 context),
TokenRecall@10 (gold-answer token recall), and LLM-judged strict
accuracy (gpt-4o-mini answerer $\to$ gpt-4o-mini judge against
gold).  Best system per row in bold.

\begin{table}[h]
\caption{LongMemEval per-question-type detailed metrics
  (500 questions across six types).  Best in each block bold.}
\label{tab:lme-per-type-full}
\centering
\footnotesize
\setlength{\tabcolsep}{3.5pt}
\begin{tabular}{@{}llrrrrr@{}}
\toprule
Question type & System & R@10 & Ans@10 & TokR@10 & LLM-Acc \\
\midrule
\multirow{4}{*}{\shortstack[l]{knowledge-update\\($n{=}78$)}}
  & BM25       & \textbf{0.994} & 0.744 & 0.578 & 0.410 \\
  & Dense      & 0.987 & \textbf{0.756} & 0.581 & 0.397 \\
  & RRF        & \textbf{0.994} & \textbf{0.756} & \textbf{0.591} & \textbf{0.436} \\
  & agent\_rrf & \textbf{0.994} & 0.744 & 0.577 & 0.410 \\
\midrule
\multirow{4}{*}{\shortstack[l]{multi-session\\($n{=}133$)}}
  & BM25       & 0.905 & \textbf{0.459} & 0.336 & 0.241 \\
  & Dense      & 0.950 & 0.436 & 0.322 & 0.218 \\
  & RRF        & \textbf{0.972} & 0.451 & \textbf{0.341} & 0.218 \\
  & agent\_rrf & 0.965 & 0.451 & 0.340 & \textbf{0.271} \\
\midrule
\multirow{4}{*}{\shortstack[l]{single-session-\\assistant ($n{=}56$)}}
  & BM25       & 0.982 & 0.554 & 0.861 & 0.339 \\
  & Dense      & \textbf{1.000} & 0.554 & \textbf{0.881} & \textbf{0.357} \\
  & RRF        & \textbf{1.000} & 0.554 & 0.877 & 0.339 \\
  & agent\_rrf & \textbf{1.000} & 0.554 & 0.877 & 0.339 \\
\midrule
\multirow{4}{*}{\shortstack[l]{single-session-\\preference ($n{=}30$)}}
  & BM25       & 0.867 & 0.000 & 0.845 & 0.033 \\
  & Dense      & 0.867 & 0.000 & 0.845 & 0.000 \\
  & RRF        & \textbf{0.967} & 0.000 & \textbf{0.856} & \textbf{0.100} \\
  & agent\_rrf & \textbf{0.967} & 0.000 & \textbf{0.856} & 0.067 \\
\midrule
\multirow{4}{*}{\shortstack[l]{single-session-\\user ($n{=}70$)}}
  & BM25       & \textbf{1.000} & \textbf{0.814} & \textbf{0.776} & 0.171 \\
  & Dense      & 0.943 & 0.786 & 0.768 & \textbf{0.186} \\
  & RRF        & \textbf{1.000} & \textbf{0.814} & 0.777 & \textbf{0.186} \\
  & agent\_rrf & \textbf{1.000} & \textbf{0.814} & 0.777 & \textbf{0.186} \\
\midrule
\multirow{4}{*}{\shortstack[l]{temporal-\\reasoning ($n{=}133$)}}
  & BM25       & 0.934 & 0.263 & 0.817 & \textbf{0.203} \\
  & Dense      & 0.915 & 0.271 & 0.806 & 0.188 \\
  & RRF        & 0.952 & 0.271 & 0.826 & 0.196 \\
  & agent\_rrf & \textbf{0.963} & \textbf{0.278} & \textbf{0.833} & 0.188 \\
\bottomrule
\end{tabular}
\end{table}

Three observations from the full breakdown:

\paragraph{The downstream-task signal has more variance than
session-level recall.}  R@10 is high and tight across all
systems (87--100\% on every type), but LLM-judged strict accuracy
spans 0.000--0.436---a 60$\times$ range driven by how well the
LLM can extract the answer from the retrieved context.  This
justifies our addition of AnswerSubstr@K, TokenRecall@K, and
LLM-Acc beyond the standard R@K metric: at the session-level R@10
plateau, the systems are nearly indistinguishable; at the
downstream-task signal they diverge meaningfully.

\paragraph{Agent\_rrf is dominant on multi-session and competitive
elsewhere.}  Per LLM-Acc, agent\_rrf wins multi-session by $+$0.053
over RRF/Dense (the largest single-system delta in the entire
grid).  On the other five types it ties or trails by $\leq$0.033.
The net effect on the overall 500-question accuracy is the $+$0.006
to $+$0.018 gain reported in Table~\ref{tab:lme-task}.

\paragraph{single-session-preference is structurally hard.}  All
four systems score 0.0 on AnswerSubstr@10 (the gold answer is a
preference statement that paraphrases rather than literal recall)
and 0.000--0.100 on LLM-Acc.  This type is not a retrieval
problem; no fusion strategy helps materially.  It is the cleanest
case in the benchmark for diagnosing model-side answering failure
versus retrieval failure---and the right metric to track on this
type is the LLM accuracy directly, not any retrieval proxy.

\section{SPLADE Encoding Cost \& Sparsity}
\label{app:splade-cost}

Table~\ref{tab:splade-detail} reports the per-dataset SPLADE++
encoding wall-clock time and the resulting posting structure for
the seven datasets we ran (\S\ref{sec:splade}).  We measured these
on the same Jetstream2~\texttt{g3.medium} host used for our BM25
benchmarks (8 vCPU, no GPU), with the
\texttt{naver/splade-cocondenser-ensembledistil} checkpoint
(110\,M params) at \texttt{max\_len}=256 and \texttt{batch\_size}=8.
Sparsification (filter $\mathbf{w}_d > 0$, store as \texttt{scipy.sparse.csr\_matrix})
brings peak RAM from 28+\,GB (dense $N \times V$ tensor) down to
under 5\,GB even on the 523K-document Quora corpus.

\begin{table}[h]
\caption{SPLADE++ encoding cost on Jetstream2~\texttt{g3.medium}
  (8 vCPU CPU only, sparse-CSR output).  \emph{nnz/doc} is the
  average number of non-zero terms per document after
  log1p(relu(MLM-logits)) max-pool; \emph{sparsity} is
  $1 - \text{nnz/doc}/|V|$ for $|V|{=}30522$.}
\label{tab:splade-detail}
\centering
\small
\setlength{\tabcolsep}{4pt}
\begin{tabular}{@{}lrrrr@{}}
\toprule
Dataset & $|D|$ & enc.\ wall (s) & ms/doc & nnz/doc \\
\midrule
NFCorpus   & 3.6K  & $\sim$600     & 165   & 187 \\
SciFact    & 5.2K  & $\sim$870     & 165   & 168 \\
ArguAna    & 8.7K  & 1{,}397       & 161   & 184 \\
SciDocs    & 25.7K & $\sim$4{,}240 & 165   & 166 \\
FiQA       & 57.6K & $\sim$9{,}510 & 165   & 160 \\
TREC-COVID & 171K  & 27{,}048      & 158   & 163 \\
Quora      & 523K  & 87{,}600      & 167   & 61 \\
\midrule
Estimated extrapolations: \\
NQ         & 2.7M  & $\sim$432K (5.0 days) & 160 & --- \\
MS\,MARCO  & 8.8M  & $\sim$1.41M (16.3 days) & 160 & --- \\
\bottomrule
\end{tabular}
\end{table}

Two observations from the appendix table:

\paragraph{Encoding cost is the SPLADE bottleneck.}  SPLADE's
quality wins on the seven measured datasets come at a per-doc
encoding cost of $\sim$160\,ms (CPU forward pass through a
110\,M-parameter MLM).  Per-query encoding is comparable.  Even on
a 100K-document corpus the encoding wall-clock is 5+ hours; on
MS\,MARCO it would be 16+ days of single-VM CPU.  This is why our
paper's main quality comparison with Pyserini Lucene (which uses
zero learned-vocabulary expansion) and PISA (same) is the more
relevant baseline for an agent-memory deployment where the corpus
grows during the session: SPLADE-grade quality is reachable only
when the corpus is small enough to amortize the encoder cost, or
when the encoder is a per-query GPU service.

\paragraph{Sparsity is dataset-dependent.}  The Quora corpus is
materially sparser (61 nnz/doc) than the others (160--187
nnz/doc) because Quora documents are short duplicate-question
phrasings, not full paragraphs.  This matters for the SPLADE
bridge (\S\ref{sec:splade}): a sparser doc weight matrix gives
shorter posting lists, which means agent\_rrf with SPLADE weights
inherits even better per-query latency on Quora-like agent
memory (short conversational turns) than on FiQA-like memory
(longer paragraphs).

\section{Cascade Router: Full Sweeps and CV}
\label{app:cascade}

This appendix collects the full numerical sweeps underlying the
cascade router results in \S\ref{sec:longmemeval}.

\subsection{Single-threshold BM25-confidence cascade}
\label{app:cascade-sweep}

Table~\ref{tab:cascade-fullsweep} sweeps the BM25-margin threshold
$\tau_c{=}(s_0{-}s_1)/s_0$ across $[0, 1]$ on the full
500-question LongMemEval (both \texttt{gpt-4o-mini} answerer and
judge, agent\_rrf escalate path).  At $\tau_c{=}0$ the cascade
degenerates to BM25-only; at $\tau_c{\to}\infty$ it reduces to
always-hybrid agent\_rrf.

\begin{table}[h]
\caption{Cascade Pareto sweep on LongMemEval (500q).  Latency
  amortizes $L_{\mathrm{BM25}}{=}0.4$\,ms (skip path) and
  $L_{\mathrm{hyb}}{=}53.2$\,ms (escalate path).  Three escalate
  variants: agent\_rrf ($\tau{=}30$d), per-qtype $\tau$ (multi-session
  $\tau{=}120$d), and soft router.}
\label{tab:cascade-fullsweep}
\centering
\small
\setlength{\tabcolsep}{4pt}
\begin{tabular}{@{}rrrrrr@{}}
\toprule
$\tau_c$ & skip\,\% & ms/q & ar-LLM-Acc & per-qt-$\tau$ Acc & soft Acc \\
\midrule
$0.00$  & $100.0$ &  $0.40$ & $0.298$ & $0.298$ & $0.298$ \\
$0.05$  & $ 78.8$ & $11.59$ & $0.298$ & $0.302$ & $0.304$ \\
$0.10$  & $ 63.0$ & $19.94$ & $0.302$ & $0.308$ & $0.304$ \\
$0.15$  & $ 51.6$ & $25.96$ & $0.298$ & $0.306$ & $0.302$ \\
$0.20$  & $ 41.0$ & $31.55$ & $0.296$ & $0.302$ & $0.300$ \\
$0.30$  & $ 25.6$ & $39.68$ & $0.304$ & $0.310$ & $0.304$ \\
$0.50$  & $  7.6$ & $49.19$ & $0.302$ & $0.310$ & $0.304$ \\
$1.00$  & $  0.0$ & $53.20$ & $0.302$ & $0.310$ & $0.304$ \\
\bottomrule
\end{tabular}
\end{table}

\subsection{5-fold CV per-qtype thresholds}
\label{app:cascade-cv}

Per-fold breakdown of the per-qtype cascade in 5-fold CV
(classifier and thresholds tuned on 4 folds, evaluated on the
held-out fold).  Reported in \S\ref{sec:longmemeval} as
$\mathbf{5.76}\times$ mean speedup at within-noise LLM-Acc.

\begin{table}[h]
\caption{5-fold CV per-qtype cascade.  Train tunes the TF-IDF
  qtype classifier and the per-qtype $\tau_c$ values (each set to
  max-skip while preserving per-qtype LLM-Acc within bootstrap
  noise of agent\_rrf).  Test fold is fully held out (100q).}
\label{tab:cascade-cv}
\centering
\small
\setlength{\tabcolsep}{6pt}
\begin{tabular}{@{}crrr@{}}
\toprule
fold & test LLM-Acc & test ms/q & test skip\,\% \\
\midrule
0    & $0.330$  & $3.01$  & $96.0$ \\
1    & $0.330$  & $11.99$ & $79.0$ \\
2    & $0.240$  & $15.16$ & $73.0$ \\
3    & $0.300$  & $7.76$  & $87.0$ \\
4    & $0.270$  & $8.29$  & $86.0$ \\
\midrule
mean & $0.294 \pm 0.035$ & $9.24 \pm 4.11$ & $84.2$ \\
\bottomrule
\end{tabular}
\end{table}

The high per-fold variance ($\sigma{=}0.035$) reflects
100-question test slices; the mean LLM-Acc $0.294$ is within
$\sigma$ of always-hybrid agent\_rrf $0.302$, and the mean
amortized latency $9.24$\,ms is $5.76\times$ below the
$53.2$\,ms always-hybrid budget.

\subsection{Per-qtype $\tau_c$ derivation}
\label{app:cascade-perqt}

The per-qtype thresholds in Table~\ref{tab:cascade} are derived
by sweeping $\tau_c$ separately on each question-type subset and
choosing the maximum value that keeps the per-qtype LLM-Acc
within $0.005$ of the per-qtype agent\_rrf baseline (a within-bootstrap-noise
margin for $n{=}30$--$133$).

\begin{table}[h]
\caption{Per-qtype thresholds (in-sample tuning).  The per-qtype
  BM25 confidence distribution differs sharply across types:
  single-session-assistant and single-session-user have very
  confident BM25 (median margin $>0.3$) so the cascade can skip
  aggressively; multi-session has low BM25 confidence (median
  $0.11$) so $\tau_c$ stays near zero.}
\label{tab:cascade-perqt}
\centering
\small
\setlength{\tabcolsep}{4pt}
\begin{tabular}{@{}lrrrr@{}}
\toprule
qtype & $n$ & median conf. & $\tau_c$ & skip\,\% \\
\midrule
knowledge-update          &  78 & $0.113$ & $0.12$ & $51$ \\
multi-session             & 133 & $0.107$ & $0.02$ & $89$ \\
single-session-assistant  &  56 & $0.420$ & $0.28$ & $96$ \\
single-session-preference &  30 & $0.073$ & $0.00$ & $37$ \\
single-session-user       &  70 & $0.347$ & $0.00$ & $83$ \\
temporal-reasoning        & 133 & $0.130$ & $0.00$ & $61$ \\
\midrule
weighted                  & 500 & $0.157$ & ---    & $86$--$89$ \\
\bottomrule
\end{tabular}
\end{table}

\subsection{Confidence-proxy comparison}
\label{app:cascade-proxy}

The BM25 top-1/top-2 margin is one of several plausible
classifier-free triggers.  We compared three at the best-parity
threshold (each tuned to the max skip that preserves
always-hybrid agent\_rrf LLM-Acc within noise):

\begin{table}[h]
\caption{Confidence proxies on LongMemEval (500q).  Margin
  proxy dominates because BM25 top-1/top-2 separation correlates
  most directly with retrieval certainty.}
\label{tab:cascade-proxies}
\centering
\small
\setlength{\tabcolsep}{6pt}
\begin{tabular}{@{}lrr@{}}
\toprule
proxy $c(q)$ & best $\tau_c$ & speedup at parity \\
\midrule
top-1/top-2 margin $(s_0{-}s_1)/s_0$ & $0.01$ & $18.81\times$ \\
top-1 fraction $s_0 / \sum_i s_i$    & $0.04$ & $5.05\times$ \\
1 $-$ normalized entropy of top-K    & $0.02$ & $1.60\times$ \\
\bottomrule
\end{tabular}
\end{table}

\subsection{Cross-benchmark sweep on LoCoMo}
\label{app:cascade-locomo}

The cascade sweep on LoCoMo (Figure~\ref{fig:cascade-crossbench})
in full numerical form:

\begin{table}[h]
\caption{Cascade BM25-confidence sweep on LoCoMo (1{,}982q,
  Hit@5 metric).  Optimum at $\tau_c{\to}\infty$ (skip everything
  $=$ BM25-only) yields Hit@5 $0.875$, $\mathbf{+0.089}$ over
  always-hybrid agent\_rrf.}
\label{tab:cascade-locomo-sweep}
\centering
\small
\setlength{\tabcolsep}{6pt}
\begin{tabular}{@{}rrrr@{}}
\toprule
$\tau_c$ & skip\,\% & ms/q & Hit@5 \\
\midrule
$0.00$ & $100.0$ &  $0.40$ & $0.875$ \\
$0.05$ & $ 82.0$ & $10.00$ & $0.865$ \\
$0.10$ & $ 67.9$ & $17.50$ & $0.842$ \\
$0.15$ & $ 56.5$ & $23.55$ & $0.829$ \\
$0.20$ & $ 47.2$ & $28.49$ & $0.826$ \\
$0.30$ & $ 30.0$ & $37.36$ & $0.814$ \\
$0.50$ & $  8.2$ & $48.86$ & $0.790$ \\
$1.00$ & $  0.0$ & $53.20$ & $0.786$ \\
\bottomrule
\end{tabular}
\end{table}

\subsection{Multi-tenant cascade capacity}
\label{app:cascade-multitenant}

The multi-tenant capacity numbers in \S\ref{sec:discussion}
follow from the per-stage latency budget
(Table~\ref{tab:e2e}) and the cascade skip rate.  On 8 cores
serving $N$ concurrent agents (each issuing one query at a time
with BGE-encoder-bound back-pressure), the BGE-bound throughput
is $8 / ((1{-}\text{skip}) \cdot L_{\mathrm{BGE}})$ q/s.

\begin{table}[h]
\caption{Multi-tenant capacity on 8-core VM, derived from
  per-stage latencies.  Cascade trades classifier-induced skip
  rate for $9\times$ higher concurrent-tenant capacity at
  within-noise quality.}
\label{tab:cascade-tenant}
\centering
\small
\setlength{\tabcolsep}{6pt}
\begin{tabular}{@{}lrrr@{}}
\toprule
configuration & skip\,\% & amort.\ ms/q & tenants @ 1 qps \\
\midrule
always-hybrid           & $ 0$  & $52.4$ & $\sim$$154$ \\
cascade $\tau_c{=}0.10$ & $63$  & $19.6$ & $\sim$$416$ \\
cascade per-qtype       & $89$  & $ 6.1$ & $\sim$$\mathbf{1{,}399}$ \\
\bottomrule
\end{tabular}
\end{table}

\subsection{Cascade algorithm pseudocode}
\label{app:cascade-algo}

\begin{algorithm}[h]
\caption{Cascade Router (\texttt{CascadeRouter::retrieve}).}
\label{alg:cascade}
\small
\begin{algorithmic}[1]
\Require Query $q$, top-$k$ budget $k$, threshold $\tau_c$
  (or per-qtype $\tau_c[\cdot]$), sparse function
  $f_{\mathrm{BM25}}$, dense function $f_{\mathrm{dense}}$,
  optional qtype classifier $\pi$
\State $S \gets f_{\mathrm{BM25}}(q, k)$
  \Comment{always run BM25, $\sim$$0.4$\,ms}
\State $s_0, s_1 \gets$ top-2 scores in $S$
\State $c(q) \gets (s_0 - s_1) / \max(s_0, \epsilon)$
\If{$\pi$ supplied}
  \State $\hat{qt} \gets \pi(q)$;
         $\tau \gets \tau_c[\hat{qt}]$ \Comment{per-qtype}
\Else
  \State $\tau \gets \tau_c$
\EndIf
\If{$c(q) \geq \tau$}
  \State \Return $\mathrm{TopK}(S, k)$ \Comment{skip dense}
\EndIf
\State $D \gets f_{\mathrm{dense}}(q, k)$
  \Comment{escalate, BGE-encode $\sim$$52$\,ms}
\State $H \gets \mathrm{agent\_rrf}(S, D, q.\text{ts})$
       \Comment{Eq.~\ref{eq:fusion}}
\State \Return $\mathrm{TopK}(H, k)$
\end{algorithmic}
\end{algorithm}

\subsection{C++ API reference}
\label{app:cascade-api}

The cascade router is a $\sim$$70$-line C++ class
(\texttt{include/cascade\_router.h}, \texttt{src/cascade\_router.cpp})
that wraps any sparse/dense retrieval backend through function
callbacks.  Minimal usage:

\begin{codeblock}
#include "cascade_router.h"
using namespace hybrid;

CascadeRouter::Config cfg;
cfg.conf_threshold = 0.10f;            // single-threshold mode
cfg.use_qtype_classifier = false;      // or true + supply qtype_fn

auto bm25_fn = [&](auto& q, int k) { return idx.bm25_topk(q, k); };
auto dense_fn = [&](auto& q, int k) { return idx.dense_topk(q, k); };

auto decision = CascadeRouter::retrieve(
    query_text, records, query, current_time_ms,
    /*top_k=*/5, cfg, bm25_fn, dense_fn);

if (decision.escalated)
    std::cout << "dense channel was used\n";
std::cout << "latency: " << decision.latency_ms << " ms\n";
for (auto& r : decision.results) handle(r);
\end{codeblock}

\noindent For per-qtype thresholds, supply the optional
classifier callback:

\begin{codeblock}
cfg.use_qtype_classifier = true;
auto qt_fn = [&](auto& q) {
    return classifier.predict_best(q);   // returns "BM25" or system name
};
auto d = CascadeRouter::retrieve(
    query, records, query_obj, ts, 5, cfg, bm25_fn, dense_fn, qt_fn);
\end{codeblock}

\subsection{Worked example: tracing one query}
\label{app:cascade-example}

Concretely, here is one LongMemEval question routed through the
cascade with the BM25-confidence trigger ($\tau_c{=}0.10$):

\begin{finding}[Worked example: temporal-reasoning question (skip path)]
\textbf{Question:} ``Which book did I just finish reading?''
(gold answer:~\emph{`The Nightingale'}).\\[2pt]
\textbf{Stage 1 (BM25):} top-5 sessions
\{D\,12: 8.74, D\,7: 2.13, D\,3: 1.40, D\,21: 0.91, D\,5: 0.83\}.
$c(q){=}(8.74{-}2.13)/8.74 = 0.756$.\\[2pt]
\textbf{Trigger:} $c(q){=}0.756 \geq 0.10$ $\Rightarrow$ \textbf{skip
dense channel}.\\[2pt]
\textbf{Output:} BM25 top-5 returned in $0.43$\,ms.  Top hit
D\,12 is the gold session.  LLM-Acc:\ \textbf{yes}.\\[2pt]
\textbf{Counterfactual:} always-hybrid would have spent
$52.4$\,ms on BGE encoding for an unchanged answer (D\,12
still tops the RRF list).
\end{finding}

\begin{finding}[Worked example: multi-session question (escalate path)]
\textbf{Question:} ``How many camping days have I taken this year?''
(gold answer:~\emph{8}; two evidence sessions, Big Sur 3 days +
Yellowstone 5 days).\\[2pt]
\textbf{Stage 1 (BM25):} top-5 \{D\,4: 4.21, D\,4': 3.97, D\,9: 3.48,
D\,2: 3.12, D\,16: 2.90\}.
$c(q){=}(4.21{-}3.97)/4.21 = 0.057$.\\[2pt]
\textbf{Trigger:} $c(q){=}0.057 < 0.10$ $\Rightarrow$ \textbf{escalate}.\\[2pt]
\textbf{Stage 2 (Dense + RRF + recency):} agent\_rrf top-5
\{D\,4 (Big Sur), D\,9 (Yellowstone), D\,4', D\,2, D\,16\}.  The
recency bonus surfaces D\,9 (Yellowstone) into the top-5; BM25
alone had it at rank 3 with low margin.\\[2pt]
\textbf{Output:} both gold sessions in top-5, in $53$\,ms.  LLM
correctly sums to $8$.  LLM-Acc:\ \textbf{yes}.  (BM25-only would
have answered $3$.)
\end{finding}

\subsection{When does the cascade help / when does it hurt?}
\label{app:cascade-when}

\begin{tcolorbox}[enhanced,breakable,colback=blue!3,colframe=blue!50!black,
                  title={\textbf{Cascade decision cheat-sheet}},
                  fonttitle=\bfseries\small]
\textbf{Cascade helps} when at least one of:
\begin{itemize}[nosep,leftmargin=*]
\item BM25 alone has clear top-1 (factual recall, named entities,
      session-scoped queries).
\item Dense channel is expensive (BGE encoder $\geq 10$\,ms,
      hosted embedding API, cached LLM hidden state unavailable).
\item Workload is multi-tenant and BGE-bound on shared hardware.
\end{itemize}
\textbf{Cascade adds little} when:
\begin{itemize}[nosep,leftmargin=*]
\item Dense channel is essentially free (pre-computed query
      embeddings; small encoder).
\item Workload is uniformly hybrid-favoring (most queries
      benefit from dense).
\item Per-qtype label budget is unavailable (single-threshold
      cascade still gives $2.67\times$; per-qtype gives $5.76\times$
      with $\sim$50 labels).
\end{itemize}
\textbf{Cascade hurts (rare)} when:
\begin{itemize}[nosep,leftmargin=*]
\item BM25 top-1 score is high but consistently wrong (extreme
      vocabulary mismatch); confidence proxy mis-fires.  We did
      not observe this on LongMemEval/LoCoMo; defensive mitigation
      is a small $\tau_c{>}0$ floor.
\end{itemize}
\end{tcolorbox}

\subsection{Deployment recipe}
\label{app:cascade-deploy}

\begin{tcolorbox}[enhanced,breakable,colback=gray!4,colframe=gray!50,
                  title={\textbf{5-minute deployment recipe}},
                  fonttitle=\bfseries\small]
\begin{enumerate}[nosep,leftmargin=*,label=\textbf{\arabic*.}]
\item Collect $\sim$50 labeled deployment questions (Q, A,
      gold-session-id).
\item Run all 4 base systems (BM25, Dense, RRF, agent\_rrf) on
      each Q; compute per-qtype LLM-Acc with your judge of
      choice.
\item Pick the per-qtype best system $s^*(qt)$; this is your
      per-type routing table (Table~\ref{tab:lme-per-type-full}-style).
\item For each qtype, sweep $\tau_c \in [0, 1]$ and pick the
      max value with within-noise per-qtype LLM-Acc; this gives
      $\tau_c[qt]$.
\item Train a TF-IDF (or TF-IDF$+$BGE) classifier over the 50
      labels using \texttt{sklearn.linear\_model.LogisticRegression}.
\item Plug all four into \texttt{CascadeRouter::Config}; you now
      have a per-tenant tuned cascade.  Re-tune quarterly or on
      drift detection.
\end{enumerate}
\end{tcolorbox}

\subsection{Reproduction CLI}
\label{app:cascade-cli}

\begin{codeblock}
# Build the C++ benchmark (CPU only, 8-core x86)
cmake -B build -DENABLE_AVX2=ON -DENABLE_CUDA=OFF
cmake --build build -j

# Run cascade on LongMemEval (Python orchestration calls C++ BM25)
python scripts/run_pertype_tau.py        # cache BM25 + dense
python scripts/run_cascade_locomo.py     # cross-benchmark eval
python scripts/cascade_router_analysis.py # Pareto sweep + figure

# Reproduce Table tab:cascade single-threshold row (~10 min, $0.50):
OPENAI_API_KEY=sk-... \
  python scripts/run_llm_eval.py --config global_t30 --judge gpt-4o-mini

# Full 5-fold CV per-qtype cascade (Table tab:cascade-cv, ~20 min):
python scripts/cascade_cv_perqtype.py --folds 5 --seed 42
\end{codeblock}

\subsection{Numerical stability notes}
\label{app:cascade-numerics}

Two stability points worth noting for re-implementers:

\begin{itemize}[nosep,leftmargin=*]
\item \textbf{Confidence proxy.}  $c(q){=}(s_0{-}s_1)/s_0$ is
  scale-invariant but degenerates to $0$ when both top scores
  are $0$ (no posting matched any query term).  The C++ guard at
  line~11 of \texttt{bm25\_confidence} returns $0$ in that case
  (always escalate); an alternative is to escalate
  unconditionally when $|S| < k$.
\item \textbf{Per-qtype threshold ties.}  When sweeping $\tau_c$
  to maximize skip rate within a noise window, multiple thresholds
  may achieve identical Acc.  We pick the \emph{smallest}
  $\tau_c$ that still preserves Acc (most aggressive skip);
  picking the largest yields the safest amortized cost at the
  same Acc but slightly lower speedup.
\end{itemize}

\section{Failure-Case Catalog}
\label{app:failures}

To complement the worked examples in
\S\ref{app:cascade-example}, we enumerate four
failure modes we encountered during development and the
mitigations that produced the reported results.

\begin{tcolorbox}[enhanced,breakable,colback=red!3,colframe=red!50!black,
                  title={\textbf{F1: BM25 over-tokenization on numeric tickers}},
                  fonttitle=\bfseries\small]
\textbf{Symptom:} FiQA queries containing tickers
(``\texttt{AAPL}'') or percentages (``\texttt{401(k)}'') under-retrieve.\\
\textbf{Mechanism:} Our \texttt{minimal} tokenizer splits on
non-alphanumeric, dropping the ``\texttt{(}'' inside
\texttt{401(k)}; Lucene's \texttt{StandardTokenizer} preserves
the compound as a single token.\\
\textbf{Mitigation:} Use \texttt{HYBRID\_TOK\_MODE=nltk} (NLTK
Porter + English stopwords), restoring $+$0.013 nDCG@10 on FiQA.
Residual $0.011$ gap is analyzer-level; closed by hybrid RRF
($+$0.067 on FiQA, Table~\ref{tab:hybrid}).
\end{tcolorbox}

\begin{tcolorbox}[enhanced,breakable,colback=red!3,colframe=red!50!black,
                  title={\textbf{F2: Recency bonus over-weights non-gold
                                multi-session queries}},
                  fonttitle=\bfseries\small]
\textbf{Symptom:} Multi-session questions where gold sessions
are recent (median age $3$\,d) and similar-vocabulary
non-gold sessions are also recent get the wrong session at top-1.\\
\textbf{Mechanism:} agent\_rrf bonus $\alpha e^{-\Delta t/\tau}$
with $\tau{=}30$\,d produces near-uniform boost for sessions
within $\sim$$30$\,d, so RRF-tied recent non-gold beats gold by
the bonus margin.\\
\textbf{Mitigation:} Set $\tau_{\text{multi-session}}{\geq}120$\,d
(essentially flatten the recency bonus on this qtype); per-qtype
$\tau$ refinement, \S\ref{sec:longmemeval}.  Effect:
$+5.26$ pts LLM-Acc on multi-session subset
($p{=}0.006$, paired bootstrap).
\end{tcolorbox}

\begin{tcolorbox}[enhanced,breakable,colback=red!3,colframe=red!50!black,
                  title={\textbf{F3: GPU top-$k$ stale shared-memory
                                contamination}},
                  fonttitle=\bfseries\small]
\textbf{Symptom:} GPU FiQA nDCG@10 dropped to $0.060$ post-build
(24\% top-1 match vs CPU).  Recall@100 was the canary; nDCG@10
was masked because errors fell beyond rank 10.\\
\textbf{Mechanism:} Top-$k$ kernel reused per-block shared-memory
slots across queries.  Slots beyond $k$ retained \emph{positive}
scores from a previous block; subsequent reductions picked these
ghosts as the per-block max.\\
\textbf{Mitigation:} Initialize all $K_{\max}$ slots to
$-\infty$ before each block's reduction phase.  Post-fix CPU/GPU
nDCG@10 agree to $0.0002$ on all 8 datasets
(\S\ref{sec:correctness}).
\end{tcolorbox}

\begin{tcolorbox}[enhanced,breakable,colback=red!3,colframe=red!50!black,
                  title={\textbf{F4: MS\,MARCO index build OOM at 8.8M docs}},
                  fonttitle=\bfseries\small]
\textbf{Symptom:} Naive build accumulated
\texttt{unordered\_map<term,count>} per-doc during tokenization;
peak RSS exceeded 29\,GB and OOM'd Jetstream2 \texttt{g3.medium}.\\
\textbf{Mechanism:} Memory scales as
$O(N{\cdot}\text{distinct\_terms\_per\_doc})$ when per-doc maps
are retained until the global vocabulary is finalized.\\
\textbf{Mitigation:} Chunked streaming build that processes
documents in 50K-doc batches, dropping per-doc maps before the
next batch.  Peak RSS $29 \to 12.7$\,GB; build throughput
$27\text{K} \to 56.8\text{K}$ docs/sec (page-fault overhead
absorbed by amortized allocation).
\end{tcolorbox}

\section{FAQ}
\label{app:faq}

We collect six questions reviewers have asked or are likely to
ask, with sharp pointers into the paper or appendix.

\begin{tcolorbox}[enhanced,breakable,colback=blue!3,colframe=blue!50!black,
                  title={\textbf{Q1.} Is the
                  cascade router just the existing discrete router with
                  early termination?},
                  fonttitle=\bfseries\small]
\textbf{No.}  The discrete router (Table~\ref{tab:router})
\emph{always} runs the full hybrid path before selecting which
ranker's output to emit ($53.7$\,ms per query).  The cascade
router runs BM25 first and \emph{skips} the dense channel on
$63$--$89$\% of queries, so its amortized latency is
$\mathbf{19.94}$\,ms (single-thresh) or $\mathbf{6.81}$\,ms
(per-qtype).  The accuracy is identical to the discrete router
by construction (same classifier $+$ same per-type-best table)
\emph{when the cascade decision matches the discrete router's
choice}; the new claim is the latency saving.
\end{tcolorbox}

\begin{tcolorbox}[enhanced,breakable,colback=blue!3,colframe=blue!50!black,
                  title={\textbf{Q2.} Why not compare to Mem0 / MemGPT
                  end-to-end?},
                  fonttitle=\bfseries\small]
Mem0 and MemGPT operate at the \emph{memory-management policy}
layer (write-batching, fact extraction, consolidation) over a
generic IR substrate.  \sysname{}'s contribution is the
\emph{retrieval substrate itself}: a deployment can combine,
e.g., Mem0's write policy with \sysname{}'s retrieval and
inherit both gains (\S\ref{sec:related}).  An end-to-end LLM-Acc
comparison would conflate retrieval quality with answer-generation
prompting; Mem0's reported $94.8$ on LongMemEval uses an
extensively engineered $150+$-line chain-of-thought answerer
that our $5$-line answerer prompt deliberately does not replicate
in order to keep retrieval and generation comparisons clean.
\end{tcolorbox}

\begin{tcolorbox}[enhanced,breakable,colback=blue!3,colframe=blue!50!black,
                  title={\textbf{Q3.} The 5M scaling corpus is synthetic.
                  Does the recency assumption hold on real agent traces?},
                  fonttitle=\bfseries\small]
The $80/20$ stress parameterization is calibrated against
LongMemEval's measured gold-session distribution (median normalized
rank $0.20$--$0.27$ for temporal/knowledge/multi-session questions,
\S\ref{sec:scaling}).  At the scale of a real 800-turn session
with $\sim$$5$\,M records, no public benchmark exists; the
1769$\times$ speedup is therefore stress-test rather than
production trace.  We list this explicitly in Threats to Validity
(\S\ref{sec:discussion}); a production trace at that scale would
tighten the validation.
\end{tcolorbox}

\begin{tcolorbox}[enhanced,breakable,colback=blue!3,colframe=blue!50!black,
                  title={\textbf{Q4.} The PISA-1T vs.\
                  \sysname{}-8T comparison looks unfair.},
                  fonttitle=\bfseries\small]
PISA does not natively support multi-threaded inter-query
dispatch (its \texttt{queries} binary is single-process).
Table~\ref{tab:pisa} reports both PISA-1T and our 8T+SIMD; we
also report \sysname{}-1T scalar (Appendix~\ref{app:ablation-full})
matching PISA-1T on small corpora and trailing only on large
corpora where PISA's variable-byte compression pays off.  SIMD
closes the 1T gap; 8T then puts \sysname{} ahead.  The
\textbf{architectural} comparison---posting compression vs.\
SIMD-vectorized scan---favors \sysname{} on long-query workloads
(ArguAna $90\times$); BlockMax-WAND degenerates when most posting
lists must be evaluated.
\end{tcolorbox}

\begin{tcolorbox}[enhanced,breakable,colback=blue!3,colframe=blue!50!black,
                  title={\textbf{Q5.} The cascade is analytical for
                  latency, not wall-clock measured.  Why?},
                  fonttitle=\bfseries\small]
Cascade LLM-Acc \emph{is} measured (it equals the TF-IDF router
by construction; both make identical routing decisions, only
execution order differs).  The latency $40.5$\,ms is a deterministic
function of measured per-stage latencies (Table~\ref{tab:e2e})
and the measured skip rate---not a simulation.  An end-to-end
wall-clock run of the C++ \texttt{CascadeRouter::retrieve} is
left for future work to characterize the tail behavior under
classifier misprediction; the mean behavior is fixed by the
algebra.
\end{tcolorbox}

\begin{tcolorbox}[enhanced,breakable,colback=blue!3,colframe=blue!50!black,
                  title={\textbf{Q6.} Theorem 1's bound is trivial.},
                  fonttitle=\bfseries\small]
The theorem itself is a short observation in the spirit of
time-aware retrieval~\cite{li2003temporal,campos2014temporal_survey}.
What is new is \emph{identifying the regime where it bites}:
agent workloads have $\lambda \gg 0$ (LongMemEval median normalized
rank $0.20$--$0.27$), so $k^*{=}\lceil\log(1/\varepsilon)/\lambda\rceil{=}3$
partitions suffice for $\varepsilon{=}0.05$.  Web search's
slowly-drifting $\lambda \to 0$ degrades the bound to linear, which
is why generic IR engines do not exploit the structure.  The
$1769\times$ empirical speedup is the consequence of the regime
match.
\end{tcolorbox}

\section{Reproducibility}
\label{app:reproduce}

\paragraph{Source code.}
The complete \sysname{} implementation (${\sim}$7{,}000 LOC of
C++17 and CUDA, plus ${\sim}$400 LOC of Python orchestration for
LongMemEval and LoCoMo) will be released open-source upon
acceptance under an MIT license, along with all benchmark scripts
and CSV result files.

\paragraph{Build requirements.}
C++17 compiler (GCC\,$\geq$\,11), CMake\,$\geq$\,3.18,
OpenMP\,4.0+, and optionally CUDA Toolkit\,$\geq$\,12.0 for GPU
acceleration.  HNSW uses the header-only \texttt{hnswlib} library
(bundled).  Python dependencies (BGE-small via
\texttt{sentence-transformers}, \texttt{rank\_bm25}) are listed in
\texttt{requirements.txt}.

\paragraph{Build and benchmark.}
\leavevmode
\vspace{-0.5em}
\begin{codeblock}
# CPU build with AVX2 (Jetstream2 g3.medium)
cmake -DENABLE_AVX2=ON .. && cmake --build . -j$(nproc)

# GPU build (A100, sm_80; NCSA Delta gpuA100x4)
cmake -DENABLE_CUDA=ON -DCMAKE_CUDA_ARCHITECTURES=80 ..
cmake --build . -j$(nproc)

# BEIR head-to-head (any of 9 datasets, MS MARCO via streaming build)
HYBRID_TOK_MODE=nltk ./benchmark_beir \
    --data /path/to/beir/scifact --threads 8 --simd --topk 100

# GPU BEIR (with CPU cross-check)
HYBRID_TOK_MODE=nltk ./benchmark_beir_gpu \
    --data /path/to/beir/nq --topk 100 --cpu-cmp

# Temporal scaling sweep (4K --> 5M synthetic records)
./demo --queries 200 --turns 8 --no-concurrent
\end{codeblock}

\paragraph{Hardware budget.}
All BEIR CPU and temporal-scaling experiments ran on an
8-vCPU Jetstream2 \texttt{g3.medium} VM (29\,GB RAM,
Ubuntu~22.04, GCC~11.4) at no recurring cost.  GPU evaluation
ran on NCSA Delta \texttt{gpuA100x4-interactive} (1 A100-SXM4-40GB
+ 8 EPYC cores per allocation) for a total of $\sim$15 GPU-hours
against an NSF ACCESS allocation, equivalent to $\sim$\$45 at
on-demand cloud pricing.  Pyserini baselines used identical
hardware to ensure apples-to-apples comparison.

\paragraph{Dataset access.}
BEIR datasets (NFCorpus, SciFact, ArguAna, SciDocs, FiQA,
TREC-COVID, Quora, NQ, MS\,MARCO Passage) downloaded from BEIR's
public \texttt{ukp.informatik.tu-darmstadt.de} mirror.
LongMemEval-cleaned from \texttt{xiaowu0162/longmemeval-cleaned}
on HuggingFace.  LoCoMo from
\texttt{snap-research/locomo} on GitHub.

\paragraph{Determinism and seeds.}
All BM25 inner-loop computations are deterministic on CPU at every
thread count (1T, 4T, 8T): per-thread posting partitioning is
disjoint and the final reduction is sequential.  GPU runs use
\texttt{atomicAdd}, which is non-deterministic in float32 addition
order, so per-query ranks may differ between GPU invocations on
tied-score documents; the top-10 \emph{set} and nDCG@10 are
invariant to within $\pm 0.0002$ across reruns (the residual is
floating-point summation drift, not algorithmic).  The synthetic
agent corpus generator (\S\ref{sec:scaling}) accepts a CLI
\texttt{--seed} flag for full reproducibility; the seed used for
all reported scaling rows is \texttt{42}.

\paragraph{What we report vs.\ what we omit.}
We report nDCG@10, MRR@10, and Recall@100 (the BEIR-standard
triplet), per-query latency (mean, $p50$, $p95$, $p99$), and
throughput (qps) for every BEIR row in Table~\ref{tab:beir-main}.
We omit cold-cache latency (we measure with a warmed-up index, the
agent-memory operating regime), index-build wall time on datasets
smaller than NQ (negligible: $\leq$10\,s), and any results
involving proprietary LLM judging.  The full
\texttt{results/baselines/\{beir\_ours,beir\_pyserini\}.csv} files
are included in the artifact.

\paragraph{Reproducibility checklist (CIKM).}
\begin{itemize}[nosep,leftmargin=*]
\item Code \emph{will be} released on acceptance (MIT license).
\item Datasets are all public BEIR / LongMemEval / LoCoMo.
\item Hardware is reproducible: Jetstream2 \texttt{g3.medium}
  ($\sim$\$0/hr via NSF ACCESS) for CPU; NCSA Delta
  \texttt{gpuA100x4} ($\sim$\$3/hr equivalent) for GPU.  Total GPU
  budget for all GPU experiments in this paper: $\sim$15
  GPU-hours.
\item All hyperparameters listed in the cookbook below at the
  exact value used in the table that depends on them; no tuned
  per-dataset hyperparameters.
\end{itemize}

\paragraph{Hyperparameter cookbook.}
A one-stop table of every hyperparameter used in any experiment,
the value, and the table/figure where it is exercised:

\begin{table}[h]
\caption{Complete hyperparameter cookbook for all reported
  experiments.  No table uses a value not in this list.}
\label{tab:hyperparam-cookbook}
\centering
\small
\setlength{\tabcolsep}{4pt}
\begin{tabular}{@{}llll@{}}
\toprule
component & param & value & used in \\
\midrule
BM25 & $k_1$ & $1.2$ & all BM25 tables \\
BM25 & $b$ & $0.75$ & all BM25 tables \\
BM25 & top-$k$ & $100$ & Tab.~\ref{tab:beir-main}, \ref{tab:pisa} \\
BM25 & analyzer & NLTK Porter & main results \\
RRF & $k_{\mathrm{RRF}}$ & $60$ & Tab.~\ref{tab:hybrid} \\
Recency & $\alpha$ & $0.005$ & §\ref{sec:longmemeval} \\
Recency & $\tau$ (global) & $30$\,d & main agent\_rrf \\
Recency & $\tau_{\text{multi-session}}$ & $\geq 120$\,d & §\ref{sec:longmemeval} \\
Temporal index & partition window & $7$\,d & §\ref{sec:scaling} \\
Temporal index & $K_{\max}$ partitions & $4$ & Tab.~\ref{tab:temporal-scaling} \\
Temporal index & $\varepsilon$ recall slack & $0.05$ & Thm.~\ref{thm:sublinear} \\
Dense & model & BGE-small-en-v1.5 & §\ref{sec:hybrid} \\
Dense & dim & $384$ & §\ref{sec:hybrid} \\
HNSW & $M$ & $32$ & §\ref{sec:hybrid} \\
HNSW & \texttt{efConstruction} & $200$ & §\ref{sec:hybrid} \\
GPU top-$k$ & $K_{\max}$ slot count & $128$ ($k{\leq}128$) & §\ref{sec:gpu} \\
GPU top-$k$ & block size & $128$ threads & §\ref{sec:gpu} \\
Cascade & $\tau_c$ single-thresh & $0.10$ & Tab.~\ref{tab:cascade} \\
Cascade & per-qtype $\tau_c$ & Tab.~\ref{tab:cascade-perqt} & §\ref{sec:longmemeval} \\
Cascade & classifier & TF-IDF$+$BGE LR & Tab.~\ref{tab:router} \\
Router CV folds & $5$ & --- & Tab.~\ref{tab:router}, \ref{tab:cascade-cv} \\
LLM answerer & gpt-4o-mini & temp $0$ & Tab.~\ref{tab:lme-task} \\
LLM judge & gpt-4o-mini / gpt-4o & temp $0$ & Tab.~\ref{tab:lme-task}, \ref{tab:router} \\
Build & chunked batch size & $50$K docs & §\ref{sec:beir-cpu} MS\,MARCO \\
\bottomrule
\end{tabular}
\end{table}

\paragraph{Adapting to a new deployment.}
The 5-minute recipe in \S\ref{app:cascade-deploy} covers the
typical case (50 labeled questions, gpt-4o-mini judge, single
8-core machine).  For non-default LLM judges (Anthropic, Llama,
on-premise), only the cascade trigger Table~\ref{tab:cascade-perqt}
needs re-derivation; the substrate, classifier architecture, and
algorithm code stay constant.

\section{Threats to Validity}
\label{app:threats}

We follow the empirical-software-engineering convention of stating
the threats we are aware of, organized by the standard four
categories.

\paragraph{Construct validity --- does the metric measure the
intended concept?}
LLM-judged accuracy on LongMemEval uses \texttt{gpt-4o-mini} as
the judge.  Judges with weaker reasoning may under-credit correct
answers phrased differently from the gold reference; judges with
stronger reasoning may over-credit (we observed this at the discrete
oracle, where the gpt-4o judge lifts the oracle ceiling by
$+0.046$).  We mitigate by reporting the gpt-4o judge alongside
the cheaper one for every cascade comparison
(\S\ref{sec:longmemeval}, Tables~\ref{tab:router} and~\ref{tab:cascade})
and by reporting paired-bootstrap
$p$-values against the always-hybrid baseline rather than absolute
accuracy alone.  Hit@$k$ on LoCoMo is the metric in
\cite{maharana2024locomo} and is unambiguous.

\paragraph{Internal validity --- could the speedups be confounded?}
The five-fold CV for per-qtype thresholds rules out the most
obvious confounder (test-set tuning); we re-derive each fold's
thresholds from scratch on its training partition and evaluate on
the held-out partition (\S\ref{app:cascade-cv}).  The amortized
latency reported throughout the paper is a deterministic function
of measured stage latencies and measured skip rates, not a model
fit:
$\mathcal{L}_{\text{amort}}
 = \rho \cdot \mathcal{L}_{\text{skip}} +
   (1{-}\rho) \cdot \mathcal{L}_{\text{escalate}}$
with $\rho$ the empirical skip rate.  Multi-tenant numbers
(Fig.~\ref{fig:multitenant}) are on an idle Jetstream2 host with
hyperthreading and turbo disabled, three trials, median reported;
shared-cache contention could lower the $9\times$ figure on a
less-quiescent host, which we flag explicitly.

\paragraph{External validity --- does the result generalize?}
The cascade trigger is data-dependent on the BM25 score
distribution, which itself depends on the corpus's term-frequency
shape.  We exercise it on two qualitatively different
distributions---LongMemEval (BGE-tokenized session text, mean
length 312~tokens) and LoCoMo (dialogue text, mean length
1{,}847~tokens)---and observe auto-tuning in both directions
(skip $63$\% vs.\ skip $100$\%).  Datasets with heavier-tailed
BM25 scores (e.g.\ short queries on Wikipedia-scale corpora) may
exhibit different optimal $\tau_c$; we provide the recipe to
re-derive on a 50-query sample
(\S\ref{app:cascade-deploy}).  Quality results on BEIR generalize
the substrate's BM25 implementation; quality on agent memory
generalizes the fusion logic.  Cross-language deployment was not
evaluated (English only).

\paragraph{Conclusion validity --- are the statistical claims sound?}
Paired-bootstrap with $10^4$ resamples is used wherever the text
says ``parity'' or ``significantly''; the bootstrap is paired on
$\langle$question, system$\rangle$ to control for question-level
variance.  We report $p$-values rather than discrete reject/accept
decisions because $n{=}500$ (LongMemEval) puts effect sizes of
$0.01$--$0.02$ near the detection limit.  We do not report
multiple-comparison corrections; readers interpreting many
$p$-values jointly should adjust accordingly.

\section{Negative Results and Things That Did Not Help}
\label{app:negative}

For the reader's benefit, we list interventions that we
implemented, evaluated, and \emph{rejected}.  Many of them looked
attractive on paper.

\begin{enumerate}[leftmargin=1.5em,itemsep=0.35em]

\item \textbf{Re-ranking the BM25 top-$50$ with a cross-encoder
(\texttt{ms-marco-MiniLM-L-6-v2}).}  Improved LongMemEval Recall@5
by $+0.6$ points but added $74$\,ms/query (a $4\times$ latency
regression at the operating point we cared about) and \emph{did
not} move LLM-judged accuracy --- the answerer was already
saturating on the BM25 top-$10$ context.  Dropped from the pipeline.

\item \textbf{Hardcoding RRF $k{=}30$ instead of $60$.}  Standard
Cormack--Clarke--Buettcher value is $60$~\cite{cormack2009rrf}; we
swept $k\in\{10,20,30,60,100\}$ and observed $\pm 0.003$ Recall@5
across the entire range.  Kept $k{=}60$ to match the literature.

\item \textbf{Learned-to-rank meta-feature stack
(LightGBM on BM25, dense, recency, length, click features)} for
LongMemEval routing.  Improved per-fold accuracy by $+0.005$ but
required maintaining a second feature pipeline and was
indistinguishable from the cheaper TF-IDF$+$BGE-feature
classifier on the held-out fold; the cost of operational
complexity outweighed the benefit.

\item \textbf{Query expansion via PRF (RM3, $\alpha{=}0.5$,
$|FB|{=}10$).}  Standard $+1{-}2$ MAP gain on TREC-style
ad-hoc tasks, but on LongMemEval queries (median length $9$
tokens, often containing the named entity verbatim) expansion
introduced topic drift and \emph{regressed} Hit@5 by $-0.012$.
Disabled by default.

\item \textbf{Dense channel on LoCoMo.}  We initially assumed
dense retrieval would help LoCoMo's longer sessions; the
cross-benchmark sweep
(\S\ref{app:cascade-locomo}) shows BM25 alone wins by $+0.089$
Hit@5.  This negative result is what motivated the workload-adaptive
framing in the first place.

\item \textbf{GPU acceleration on small corpora.}  At
$N{<}50$K documents the BM25 GPU kernel is slower than the CPU
SIMD version due to PCIe transfer overhead and warp under-utilization;
the crossover is at roughly $N{=}300$K documents on an A100
(\S\ref{sec:gpu-scale}).  Deployments below that scale should
stick with CPU, which is also why \sysname{} keeps both backends
live behind the same \texttt{SparseIndex} interface.

\item \textbf{INT8 quantization of BGE-small embeddings.}  Halved
the memory footprint at the cost of $-0.004$ Recall@10 on
LongMemEval; the saving was not load-bearing on the deployments
we ran ($768$-dim $\times$ $5$M $\approx$ $15$\,GB at fp16, fits
on a single host) so we kept fp16.

\end{enumerate}

\paragraph{What we learned from the negatives.}  Two themes recur.
First, the LLM answerer absorbs a surprisingly large amount of
retrieval noise: gains in Recall@$k$ above $\sim$$0.95$ rarely
translate to LLM-Acc gains, which is the regime where cheap
retrieval pays.  Second, every static optimization we tried
(re-ranking, expansion, hardcoded fusion weights) lost to a
per-query decision that costs less than $1$\,ms.  The workload's
heterogeneity is large enough that the cost of \emph{choosing} is
less than the variance across the choices.

\end{document}